\definecolor{Darkblue}{rgb}{0,0,0.4}
\definecolor{Brown}{cmyk}{0,0.61,1.,0.60}
\definecolor{Purple}{cmyk}{0.45,0.86,0,0}
\definecolor{Darkgreen}{rgb}{0.133,0.543,0.133}
\theoremstyle{definition}
\newtheorem{theorem}{Theorem}
\newtheorem{lemma}{Lemma}
\title{Matrix Multiplication with Less Arithmetic Complexity and IO Complexity}
\author{Pu Wu\thanks{School Of Computer Science, Peking University. \texttt{puwu1997@126.com}}, Huiqin Jiang\thanks{School Of Computer Science And Cyber Engineering, Guangzhou University. \texttt{hq.jiang@hotmail.com}}, Zehui Shao\thanks{Institute Of Computing Science And Technology, Guangzhou University. \texttt{zshao@gzhu.edu.cn}}, Jin Xu\thanks{Key Laboratory Of High Confidence Software Technologies (Peking University), Ministry Of
Education;
School Of Computer Science,Peking University. \texttt{jxu@pku.edu.cn}}}
\date{}
  \providecommand\BibTeX{{%
    \normalfont B\kern-0.5em{\scshape i\kern-0.25em b}\kern-0.8em\TeX}}}
\newtheorem{def1}{Definition}
\newtheorem{observation}{Observation}
\newtheorem{cor}{Corollary}
\newtheorem{thm}{Theorem}
\begin{document}
\maketitle

\begin{abstract}
After Strassen presented the first sub-cubic matrix multiplication algorithm,
many Strassen-like algorithms are presented.
Most of them with low asymptotic cost have large hidden leading coefficient which are thus impractical.
To reduce the leading coefficient, Cenk and Hasan give a general approach reducing the leading coefficient of $<2,2,2;7>$-algorithm to $5$ but increasing IO complexity.
In 2017, Karstadt and Schwartz also reduce the leading coefficient of $<2,2,2;7>$-algorithm to $5$ by the Alternative Basis Matrix Multiplication method. Meanwhile, their method reduces the IO complexity and low-order monomials in arithmetic complexity.
In 2019, Beniamini and Schwartz generalize Alternative Basis Matrix Multiplication method reducing leading coefficient in arithmetic complexity but increasing IO complexity.

In this paper, we propose a new matrix multiplication algorithm which reduces leading coefficient both in arithmetic complexity and IO complexity.
We apply our method to Strassen-like algorithms improving arithmetic complexity and IO complexity(the comparison with previous results are shown in Tables 1 and 2).
Surprisingly, our IO complexity of $<3,3,3;23>$-algorithm is $14n^{\log_323}M^{-\frac{1}{2}} + o(n^{\log_323})$ which breaks Ballard's IO complexity low bound(
$\Omega(n^{\log_323}M^{1-\frac{\log_323}{2}})$) for recursive Strassen-like algorithms.

\textbf{Keywords: }Mathematic of computing, Computation on matrices, Computing methodologies, Linear algebra algorithms.
\end{abstract}


\section{Introduction}
Matrix Multiplication is a fundamental computation problem used in many fields.
Strassen\cite{strassen1969gaussian} presented the first non-trivial algorithm with time complexity $O(n^{\log_27})$ which breaks the trivial time complexity $O(n^3)$.
Since then, the matrix multiplication algorithm including the design and analysis of the algorithm has attracted more and more great attention in the last five decades\cite{strassen1969gaussian,2012O,coppersmith1987matrix,1973Duality,le2014powers,karstadt2020matrix,ballard2013graph}.
In those researches, researchers mainly improve and analyze the time complexity in two parts, arithmetic complexity and IO complexity(e.g. the costs of transferring data between the CPU and memory devices, between memory devices and disks or between parallel processors).

For arithmetic complexity, we focus on deriving asymptotic and hidden constants which are improved by reducing the exponent of the arithmetic complexity and the number of additions respectively.
Many Strassen-like algorithms are presented to reduce the exponent of the arithmetic complexity \cite{Scho1981PARTIAL,pan1978strassen,2012O,coppersmith1982asymptotic,coppersmith1987matrix,williams2012multiplying}.
Recently, researchers also use computer-aided techniques\cite{2013On,benson2015framework,smirnov2013bilinear} to discover new matrix multiplication algorithms with less exponent.
But in practice, Srassen-Winograd's algorithm often performs better than some asymptotically faster algorithms\cite{benson2015framework} due to these smaller hidden constants.
This shows the importance of the second research direction in arithmetic complexity, reducing the hidden constants inside the $O$-notation.

For IO complexity, it often costs significant more time than its arithmetic\cite{2004Getting}, which is the reason why we are interested to analyze and reduce the IO complexity.
Clearly, if we run the recursion in the Strassen-like algorithm and put the matrices into the fast memory until the matrices are sufficiently small, we can get an IO complexity of the Strassen-like algorithm  which is $O(({\frac{n}{\sqrt{M}}})^{\log_{n_0}t}M)$\cite{ballard2013graph}.
Furthermore, this bound has been  proved to be tight\cite{ballard2013graph} for Strassen-like algorithms which means the IO complexity can not be improved by changing implementation.

\subsection{Previous Research}

For recursive Strassen-like algorithms, the hidden constant of arithmetic complexity is depended on the number of linear operations in the bilinear function.
So, one way of reducing the hidden constant is to find the bilinear function with less linear operations.
But this way is limited.
Probert proved that 15 additions are necessary for any $<2,2,2;7>$-algorithm\cite{probert1976additive} which means that there is no bilinear function making the hidden constant less than 6 for recursive $<2,2,2;7>$-algorithm.
Surprisingly, this bound can be broken by doing some modification in the Strassen-like algorithm\cite{cenk2017arithmetic,karstadt2020matrix}.

Cenk and Hasan\cite{cenk2017arithmetic} split the Strassen-like algorithm to three linear divided-and-conquer algorithms where two of them transform the inputs into two vectors, followed by vector multiplication of their results, and last of three linear divided-and-conquer algorithms calculates the output.
They reduce the hidden constant in arithmetic complexity of $<2,2,2;7>$-algorithm to 5, whose arithmetic complexity is $5n^{\log_27} + 0.5n^{\log_26}+2n^{\log_25}-6.5n^{2}$ in detail.
Their method can also apply in other Strassen-like algorithms, such as $<2,3,4;20>$-algorithm, $<3,3,3;23>$-algorithm, and $<6,3,3;40>$-algorithm.
However, it increases the IO cost and memory footprint, whose IO complexity of $<2,2,2;7>$-algorithm is $O(n^{\log_27})$ in detail.

Karstadt and Schwartz\cite{karstadt2020matrix} present the Alternative Basis Matrix Multiplication method which uses the basis transformation to pre-compute, followed by applying recursive Strassen-like algorithm on their results, and uses the basis transformation to calculate the output.
They reduce the hidden constants both in arithmetic complexity and IO complexity of $<2,2,2;7>$-algorithm from 6,5 to 5,4 respectively, whose arithmetic complexity and IO complexity are $5n^{\log_27} - 4n^{2} + 3n^2\log_2n$\cite{karstadt2020matrix} and $4(\frac{\sqrt{3}n}{\sqrt{M}})^{\log_27}M - 12n^2 + 3n^2\log_2(\sqrt{2}\frac{n}{\sqrt{M}}) + 5M$\cite{karstadt2020matrix} respectively in detail.

Beniamini and Schwartz\cite{beniamini2019faster} present the Sparse Decomposition method by generalizing the Alternative Basis Matrix Multiplication method with using large basis.
It gets less arithmetic complexity than Alternative Basis Matrix Multiplication method in some Strassen-like algorithms, for example, the arithmetic complexity of $<3,3,3;23>$-algorithm obtained by Sparse Decomposition is $2n^{\log_323} + 3n^{\log_320} + 2n^{\log_314} + 2n^{\log_312} + 2n^{\log_311} + 33n^{\log_310} -43n^2$\cite{beniamini2019faster} which is less than $6.57n^{\log_323} + \frac{20}{9}n^2\log_3n - 5.57n^2$ obtained by Alternative Basis Matrix Multiplication\cite{karstadt2020matrix}.
However, it increases the IO complexity and memory footprint of $<3,3,3;23>$-algorithm, whose IO complexity is $12.64n^{\log_323}M^{1-\log_{20}23} + O(n^{\log_320})$\cite{karstadt2020matrix,beniamini2019faster} in detail.

\subsection{Our Contribution}
We present a new method called Algebra Decomposition method which improves both arithmetic complexity and IO complexity in some Strassen-like algorithms.
For example, we improve the arithmetic complexity and IO complexity of $<3,3,3;23>$-algorithm to $2n^{\log_323} + 4.56n^{\log_{27}(23^3-4)} - 5.56n^2 + 0.77n^2\log_3n$ and
$14n^{\log_323}M^{-0.5} - $ $ 6n^{\frac{\log_{27}(23^3-4)}{3}}M^{-0.5} + $ $2M - 16.69n^2$ $ +2.33n^2\log_{3}{\sqrt{2}\frac{n}{
\sqrt{M}}}+ 28.07n^{\frac{\log_{27}{23^3-0.77}}{3}}M^{-0.42}$
respectively.
Notice that our IO complexity seemingly contradicts Ballard's lower bound(Theorem \ref{cite1})\cite{ballard2013graph}.
But actually, Ballard's lower bound is based on recursive Strassen-like algorithm, and our algorithm is obtained by doing modification on Strassen-like algorithm.

\begin{table}[t]\tiny\centering
  \caption{$<3,3,3;23>$-algorithms}\centering
  \label{tab:freq}
  \begin{tabular}{m{2.4cm}<{\centering}|m{1.3cm}<{\centering}|m{9cm}<{\centering}}
    \midrule
   \textbf{ {\fontsize{6.5pt}{0pt}Algorithm}} & \textbf{{\fontsize{6.5pt}{0pt}Complexity} }& \textbf{{\fontsize{6.5pt}{0pt} Results}} \\ \hline
    \rule{0pt}{16pt}
    \multirow{2}{*}{Original\cite{benson2015framework}} & Arithmetic& $7.93n^{\log_323}-6.93n^2$\\ \cline{2-3}
    \rule{0pt}{16pt}
    & IO&$47.62n^{\log_323}M^{-0.42}-20.79n^2$\\\hline
    \rule{0pt}{16pt}
    \multirow{2}{*}{Cenk-Hasan\cite{cenk2017arithmetic}} & Arithmetic & $2n^{\log_323} + 6.75n^{\log_321} - 7.75n^2$ \\ \cline{2-3}
     \rule{0pt}{16pt}
    & IO& $n^{\log_323} - 1.85n^{\log_320}M^{0.02} + 10.42n^{\log_{3}21}M^{-0.27} - 7.75n^2$\\ \hline
    \rule{0pt}{8pt}
    \multirow{2}{*}{Karstadt-Schwartz\cite{karstadt2020matrix}} & Arithmetic  & $6.58n^{\log_323} + 0.33n^2\log_3n - 5.58n^2$\\\cline{2-3}
    \rule{0pt}{16pt}
    & IO& $ 31.5n^{\log_323}M^{-0.42} - 14.71n^2 + 2n^2\log_2{\sqrt{2}\frac{n}{\sqrt{M}}}+2M$\\ \hline
    \rule{0pt}{13pt}
    \multirow{2}{*}{Beniamini-Schwartz\cite{beniamini2019faster}} & Arithmetic &$2n^{\log_323} + 3n^{\log_320} + 2n^{\log_314} +  2n^{\log_312} + 2n^{\log_311} + 33n^{\log_310} -43n^2$ \\ \cline{2-3}
     \rule{0pt}{25pt}
    & IO& \makecell{$6.32n^{log_{3}23}M^{-0.04} + n^{log_{3}20}(13.03M^{-0.11} - 1)+$\\$n^{log_{3}14}(18.75M^{-0.05}-10) - 129n^2  + n^{\log_{3}12}(24.59M^{-0.03}-16) +$\\$
n^{\log_{3}11}(30.84M^{-0.03}-22) + n^{\log_{3}10}(133.16M^{-0.04}-29)$}\\ \hline
     \rule{0pt}{16pt}
    \multirow{2}{*}{\textbf{Algorithm \ref{fin1}(Ours)}} & Arithmetic& $2n^{\log_323} + 4.56n^{\log_{27}(23^3-4)} - 5.56n^2 + 0.33n^2\log_3n$\\ \cline{2-3}
    \rule{0pt}{30pt}
    & IO& \makecell{$14n^{\log_323}M^{-0.5} - 6n^{\frac{\log_{27}(23^3-4)}{3}}M^{-0.5} + $\\$ 28.07n^{\frac{\log_{27}{23^3-0.77}}{3}}M^{-0.42} - 16.69n^2 + 2n^2\log_{3}{\sqrt{2}\frac{n}{
\sqrt{M}}}+2M$}\\
  \bottomrule
\end{tabular}
\end{table}

\begin{table}[t]\tiny
	\centering
	\caption{Algebra Decomposition Algorithms}
	\begin{tabular}{m{1.5cm}<{\centering}|m{1.4cm}<{\centering}|m{2.8cm}<{\centering}|m{0.95cm}<{\centering}|m{0.95cm}<{\centering}|m{3.9cm}<{\centering}}
		\hline
        \rule{0pt}{8pt}
        \multirow{2}{*}{\textbf{ {\fontsize{6.5pt}{0pt}Algorithm}}} & \multirow{2}{*}{\textbf{ {\fontsize{6.5pt}{0pt}Complexity}}} & \multirow{2}{*}{\textbf{{\fontsize{6.5pt}{0pt}Leading Monomial}}} & \multicolumn{3}{c}{\textbf{{\fontsize{6.5pt}{0pt}Leading Coefficient}}} \\ \cline{4-6}
        \rule{0pt}{8pt}
        & & &\textbf{Original} & \textbf{Previous}& \textbf{Algorithm \ref{fin2}(Ours)}\\ \hline
        \rule{0pt}{8pt}
        \multirow{2}{*}{$<3,2,3;15>$} &  Arithmetic & $n^{3\log_{18}15}$ &15.06\cite{benson2015framework} & 7.94\cite{karstadt2020matrix} & $5.62+1.73M^{-0.01}-3.23M^{-0.405}$ \\ \cline{2-6}
        \rule{0pt}{16pt}
        & IO & $n^{3\log_{18}15}M^{1-\frac{3\log_{18}15}{2}}$ & 70.52\cite{benson2015framework}& 37.19\cite{karstadt2020matrix} & 32.04\\ \hline
        \rule{0pt}{8pt}
         \multirow{2}{*}{$<2,3,4;20>$} &  Arithmetic & $n^{3\log_{24}20}$ &9.96\cite{benson2015framework} & 7.46\cite{karstadt2020matrix} & $3.66+3.37M^{-0.011}-7.88M^{-0.413} $ \\ \cline{2-6}
        \rule{0pt}{16pt}
        & IO & $n^{3\log_{24}20}M^{1-\frac{3\log_{24}20}{2}}$ & 47.08\cite{benson2015framework} &35.27\cite{karstadt2020matrix} &26.32 \\ \hline

        \rule{0pt}{8pt}
        \multirow{2}{*}{$<6,3,3;40>$} &  Arithmetic & $n^{3\log_{54}40}$ &55.63\cite{smirnov2013bilinear} & 9.39\cite{karstadt2020matrix} & $6.42+2.68M^{-0.01}-5.08M^{-0.387}$ \\ \cline{2-6}
        \rule{0pt}{16pt}
        & IO & $n^{3\log_{54}40}M^{1-\frac{3\log_{54}40}{2}}$ & 255.35\cite{smirnov2013bilinear}&43.11\cite{karstadt2020matrix} &36.58  \\ \hline
	\end{tabular}
\end{table}

\begin{theorem}\label{cite1}\cite{ballard2013graph}
The IO complexity $IO(n)$ of a recursive Strassen-like fast matrix multiplication algorithm with $O(n^{\omega_0})$ arithmetic operations, on a machine with fast memory of size $M$ is
$$IO(n) = \Omega((\frac{n}{\sqrt{M}})^{\omega_0}M)$$
\end{theorem}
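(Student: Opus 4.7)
My plan is to adapt the red--blue pebble game of Hong and Kung to the computation DAG of the recursive Strassen-like algorithm, following the edge-expansion strategy of Ballard, Demmel, Holtz and Schwartz. Let $G=(V,E)$ be the DAG obtained by unfolding the $<n_0,n_0,n_0;t>$ base algorithm to depth $d=\log_{n_0} n$: its vertices are the scalars produced at every level (inputs, encoded linear combinations, scalar products, and decoded outputs), and $G$ contains exactly $t^{d}=n^{\omega_0}$ scalar-multiplication vertices.

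First I would invoke the standard $S$-partition argument. Any valid schedule with fast memory of size $M$ can be broken into consecutive segments during which the set of scalars held in fast memory never exceeds $M$. A single segment performs at most $M$ reads from and $M$ writes to slow memory, so if the schedule splits into $k$ segments then $IO(n)\geq M(k-1)$. Proving the theorem therefore reduces to upper bounding the number of scalar multiplications that one segment can execute, since the total multiplication count is $n^{\omega_0}$.

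The core step is an edge-expansion estimate on $G$. For the base bilinear algorithm one reads off, from the sparsity pattern of the encoding and decoding matrices, that every small vertex subset has edge boundary proportional to its size. Because each level of recursion replaces a scalar multiplication of the base algorithm by an entire copy of the base DAG, the recursive DAG is a Cartesian-like product of $d$ copies of the base graph, and I would propagate expansion through this product level by level. The target conclusion, matching Ballard et al., is that any vertex set of size at most $3M$ in $G$ can dominate only $O(M^{\omega_0/2})$ multiplication vertices; this is the bilinear-algorithm analogue of the Loomis--Whitney inequality used in the classical $\Omega(n^3/\sqrt{M})$ bound.

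Combining the two ingredients gives $k\geq\Omega(n^{\omega_0}/M^{\omega_0/2})$, hence $IO(n)\geq\Omega((n/\sqrt{M})^{\omega_0}M)$, as required. The main obstacle is the recursive expansion bound itself: the base case is a finite combinatorial check, but propagating the expansion constant through $\log_{n_0} n$ product levels requires ruling out adversarial subsets that concentrate their mass in a single recursive branch in order to evade the expansion guarantee, and this inductive bookkeeping is the technical heart of the Ballard--Demmel--Holtz--Schwartz argument.
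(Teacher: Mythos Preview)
The paper does not prove this theorem at all: it is quoted verbatim as a result of Ballard, Demmel, Holtz and Schwartz and carries the citation \cite{ballard2013graph} in its header, with no proof following. The authors invoke it only as context, to explain why their improved IO bound for the $\langle 3,3,3;23\rangle$ algorithm looks paradoxical (their algorithm is not a ``recursive Strassen-like'' algorithm in the sense of the lower bound, so there is no contradiction). So there is nothing in the present paper to compare your attempt against.

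That said, your sketch is a faithful high-level summary of the actual argument in the cited work: the Hong--Kung pebble game, the $S$-partition segmentation bounding IO by $M$ times the number of segments, and the edge-expansion estimate on the recursive CDAG that limits how many multiplication vertices a segment with $O(M)$ boundary can touch. The one place where your description is slightly off is the characterization of the recursive DAG as ``Cartesian-like product'': the structure Ballard et al.\ exploit is closer to a recursive substitution (each multiplication vertex is replaced by a copy of the base DAG), and the expansion is propagated via a combinatorial lemma about such recursive graphs rather than a product argument. This is exactly the ``inductive bookkeeping'' you flag as the technical heart, so you have correctly identified where the real work lies; but if you were writing this up in full you would need to state and prove that recursive-expansion lemma precisely, not just gesture at it.
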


Comparing results of Alternative Basis Matrix Multiplication method, our result improves both arithmetic complexity and IO complexity. And our result improves IO complexity but increases arithmetic complexity in low-order monomials comparing with Sparse Decomposition method.
Based on our main ideal, we will present two algorithms where Algorithm \ref{fin1} improves leading coefficient both in arithmetic complexity and IO complexity and Algorithm \ref{alg:W6} is better than Algorithm \ref{fin1} in some cases but worst in other cases.
We show the results of them in Tables 1 and 2.

\subsection{Organization}

In Section 2, we will show some useful algebra results which are mathematical foundations of our algorithms.
In Section 3, we will describe our main algorithms, Algorithm \ref{fin1} and Algorithm \ref{alg:W6}.
In Section 4, we will analyze their complexity. Specifically, we present the arithmetic complexity in Section 4.2, and IO complexity in Section 4.3.
In Section 5, we give an example, $<3,3,3;23>$-algorithm, of Algorithms \ref{fin1},\ref{alg:W6} and we also show the format of Appendix.
In Appendix, we give the decompositions of the fast matrix multiplication algorithms showed in the Tables 1 and 2.

\section{Preliminary}
Let $R$ be a ring.
Once we define a linear map $\varphi: R^{p_0\times q_0}\rightarrow R^{p_1\times q_1}$, we correspondingly define $\varphi(A):R^{r_1p_0\times r_2q_0}\rightarrow R^{r_1p_1\times r_2q_1}$ for any $A\in R^{r_1p_0\times r_2q_0}$ as $\varphi(A) := \varphi(B)$ where $B = (b_{ij})_{p_0\times q_0}$ and $b_{ij}$ is the i-th, j-th $r_1\times r_2$ size submatrix of $A$.

\begin{def1}
Let $a = (a_i)_{1\times p}, b = (b_i)_{1\times q}$.
Define $a\bigoplus b = (a_1,a_2,\ldots,a_p,b_1,b_2,\ldots,b_q)$, $\bigoplus\limits_{i=1}^tc_i = c_1\bigoplus(\bigoplus\limits_{i=2}^tc_i)$.
\end{def1}

\begin{def1}
Let $\varphi: R^{1\times p_0}\rightarrow R^{1\times q_0}$ be a linear map. We recursively define a linear map $\varphi^{k+1}: R^{1\times p}\rightarrow R^{1\times q}$ (where $p = p_0^{k+1}, q = q_0^{k+1}$) by $\varphi^{k+1}(A) = \varphi(\varphi^{k}(A_{1,1}),\ldots,\varphi^{k}(A_{1,p_0}))$, where $A = (A_{1,1},\ldots,A_{1,p_0})$ and $A_{1,j}$ are $1\times \frac{p}{p_0}$ subvectors.
\end{def1}

\begin{def1}
For a linear map $\ell: \ell(a) = (a_{i_1},a_{i_2},\ldots,a_{i_t})$ where $a = (a_1,a_2,\ldots,a_k), 1\leq i_1<i_2<\ldots<i_t\leq k$, we call $\ell$ a interception map. Denote $\ell^j(a) = a_{i_j}$,
identity map($I(a) = a$) as $I$.
\end{def1}

\begin{def1}
Let $\varphi: R^{1\times p_1}\times R^{1\times p_2}\times\ldots \times R^{1\times p_k}\rightarrow R^{1\times q}$ be a linear map, $a_i=(a_{i,j})_{1\times rp_i}$ for $1\leq i\leq k$,
$a_{i_1,i_2,\ldots,i_t} = (a_{i_1,i_2,\ldots,i_t,j})_{1\times \prod\limits_{j=1}^tp_{i_j}}$ for $1\leq i_1\leq k,\ldots,1\leq i_t\leq k$. \\
Denote
$${\sum\limits_{i=1}^k}^{\varphi}a_{i} = (\varphi(b_{1,1},\ldots,b_{k,1}),\varphi(b_{1,2}, \ldots,b_{k,2}), \ldots,\varphi(b_{1,r},\ldots,b_{k,r}))$$
where $b_{i,j} = (a_{i,(j-1)*p_i+1},$ $a_{i,(j-1)*p_i+2},$ $\ldots,a_{i,(j-1)*p_i+p_i})$,
$${\sum\limits_{i_1=1}^k}^{\varphi}{\sum\limits_{i_2=1}^k}^{\varphi}\ldots{\sum\limits_{i_t=1}^k}^{\varphi}a_{i_1,i_2,\ldots,i_t} = {\sum\limits_{i_1=1}^k}^{\varphi}({\sum\limits_{i_2=1}^k}^{\varphi}(\ldots{\sum\limits_{i_t=1}^k}^{\varphi}a_{i_1,i_2,\ldots,i_t}))\ldots).$$
\end{def1}

\begin{observation}\label{oo01}
Let $\varphi: R^{1\times p_1}\times R^{1\times p_2}\times\ldots \times R^{1\times p_k}\rightarrow R^{1\times q}$ be a linear map, $a_{i,j}=(a_{i,j,z})_{1\times r_jp_i}, b_{i,j} = (b_{i,j,z})_{1\times rp_i}$ and $\lambda_j$ be real number where $1\leq i\leq k, 1\leq j\leq t$.
Then,
$$ {\sum\limits_{i=1}^k}^{\varphi}\bigoplus\limits_{j=1}^ta_{i,j} = \bigoplus\limits_{j=1}^t{\sum\limits_{i=1}^k}^{\varphi}a_{i,j} \text{\,\,\,\,and\,\,\,\,}  {\sum\limits_{i=1}^k}^{\varphi}\sum\limits_{j=1}^t\lambda_j b_{i,j} = \sum\limits_{j=1}^t\lambda_j{\sum\limits_{i=1}^k}^{\varphi}b_{i,j}.$$
\end{observation}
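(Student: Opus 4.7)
The plan is to prove both identities by directly unpacking the definition of ${\sum}^{\varphi}$ and reducing everything to the linearity of $\varphi$ together with an index bookkeeping argument. Throughout, the key observation is that ${\sum_{i=1}^{k}}^{\varphi}$ operates \emph{blockwise}: it partitions each input vector $a_i$ of length (some multiple of) $p_i$ into consecutive chunks of length $p_i$, and then applies $\varphi$ component-wise to the $k$-tuples of corresponding chunks.

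For the first identity, I would first note that $\bigoplus_{j=1}^{t} a_{i,j}$ has length $\bigl(\sum_{j=1}^{t} r_j\bigr) p_i$, and that the consecutive $p_i$-chunks of this concatenation are exactly, in order, the $p_i$-chunks of $a_{i,1}$, followed by those of $a_{i,2}$, and so on. Formally, if $a_{i,j}$ decomposes into chunks $c_{i,j,1},\ldots,c_{i,j,r_j}$ of length $p_i$, then the $\ell$-th chunk of $\bigoplus_j a_{i,j}$ equals $c_{i,j,\ell'}$ for the unique $j,\ell'$ with $\sum_{j'<j} r_{j'} + \ell' = \ell$. Applying $\varphi$ chunk-wise to the concatenated inputs thus produces $\bigoplus_{j=1}^{t}\bigl(\varphi(c_{1,j,1},\ldots,c_{k,j,1}),\ldots,\varphi(c_{1,j,r_j},\ldots,c_{k,j,r_j})\bigr)$, which is exactly $\bigoplus_{j=1}^{t} {\sum_{i=1}^{k}}^{\varphi} a_{i,j}$. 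This is really a re-indexing argument and contains no algebraic content beyond the definitions.

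For the second identity, I would fix $r$ so that all $b_{i,j}$ decompose into the same number $r$ of chunks of size $p_i$. Write $b_{i,j} = (b_{i,j,1},\ldots,b_{i,j,r})$ in chunk form, so that $\sum_{j} \lambda_j b_{i,j}$ has $s$-th chunk equal to $\sum_{j} \lambda_j b_{i,j,s}$. The $s$-th entry of the left-hand side is then $\varphi\bigl(\sum_{j}\lambda_j b_{1,j,s},\ldots,\sum_{j}\lambda_j b_{k,j,s}\bigr)$. Since $\varphi$ is a linear map from a product of free modules, it is multilinear; iteratively pulling each sum and each scalar $\lambda_j$ outside each argument yields $\sum_{j}\lambda_j\, \varphi(b_{1,j,s},\ldots,b_{k,j,s})$. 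This is precisely the $s$-th entry of $\sum_{j} \lambda_j {\sum_{i=1}^{k}}^{\varphi} b_{i,j}$, completing the chunk-wise comparison.

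The main obstacle is not mathematical but notational: managing the triple indexing $(i,j,\text{chunk index})$ and making sure the block sizes line up. The actual algebraic steps reduce to (i) associativity of concatenation and (ii) multilinearity of $\varphi$. I would therefore keep the proof short by introducing compact chunk notation $c_{i,j,\ell}$ once and performing all manipulations at the level of corresponding chunks, so that both identities follow from a single equality of $s$-th outputs.
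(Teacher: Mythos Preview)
Your approach is correct and is exactly the kind of direct unfolding of the definition that the paper has in mind; in fact the paper states this as an ``Observation'' with no proof at all, so any careful verification from the definition is already more than what the paper provides.

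One small correction: in the second identity you write that ``since $\varphi$ is a linear map from a product of free modules, it is multilinear.'' This is not quite right, and multilinearity is not what you use. A linear map on the product $R^{1\times p_1}\times\cdots\times R^{1\times p_k}$ (viewed as a direct sum) satisfies
\[
\varphi\Bigl(\sum_j \lambda_j b_{1,j,s},\ldots,\sum_j \lambda_j b_{k,j,s}\Bigr)
=\varphi\Bigl(\sum_j \lambda_j\,(b_{1,j,s},\ldots,b_{k,j,s})\Bigr)
=\sum_j \lambda_j\,\varphi(b_{1,j,s},\ldots,b_{k,j,s}),
\]
which is exactly the identity you derive. Genuine multilinearity would instead produce cross terms indexed by tuples $(j_1,\ldots,j_k)$, which is not what is claimed. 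So just replace ``multilinear'' by ``linear on the product'' and your argument goes through verbatim.
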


\begin{def1}\label{def1}
Let $a=(a_i)_{1\times p^t}$, $\varphi_i: R^{1\times p_i}\rightarrow R^{1\times q_i}$ be a linear map and $\ell_i: R^{p}\rightarrow R^{p_i}$ be an interception map where $1\leq i\leq t$.
Denote $$\varphi_1\ell_1\circ\varphi_2\ell_2\ldots\circ\varphi_t\ell_t(a) = \varphi_1(\bigoplus\limits_{i=1}^{p_1}\varphi_2\ell_2\circ\ldots\circ\varphi_t\ell_t(\ell_1^i(a')))$$
where $a' = (b_j)_{1\times p}$ and $b_j$ is $1\times p^{t-1}$ subvector of $a$.
Denote $\varphi I$ as $\varphi$, $I\ell$ as $\ell$ and $\varphi_{1}\circ\varphi_{2}\circ\ldots\circ\varphi_{t}(A)$
as ${\prod\limits_{j=1}^t}^{\circ}\varphi_{j}(A)$ shortly.
\end{def1}
\begin{observation}\label{oo91}
Following the definition \ref{def1}, we have
$$\varphi_1\ell_1\circ\varphi_2\ell_2\ldots\circ\varphi_t\ell_t(a) = \varphi_{1}\circ\varphi_{2}\ldots\circ\varphi_{k}(\ell_{1}\circ\ell_{2}\circ\ldots \circ\ell_{t}(a)).$$
\end{observation}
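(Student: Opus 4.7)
The plan is to prove the identity by induction on the composition length $t$, using Definition \ref{def1} as the unfolding rule and Observation \ref{oo01} to commute linear operations past the $\bigoplus$ operator. For the base case $t=1$, Definition \ref{def1} together with the conventions $I\ell_1 = \ell_1$ and $\varphi_1 I = \varphi_1$ collapse both sides to $\varphi_1(\ell_1(a))$, so there is nothing to check.

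For the inductive step, assuming the identity for $t-1$, I first expand the left-hand side with Definition \ref{def1}:
\[
\varphi_1\ell_1\circ\varphi_2\ell_2\circ\cdots\circ\varphi_t\ell_t(a) \;=\; \varphi_1\!\left(\bigoplus_{i=1}^{p_1}\varphi_2\ell_2\circ\cdots\circ\varphi_t\ell_t(\ell_1^i(a'))\right).
\]
Applying the induction hypothesis term by term, each inner expression becomes $\varphi_2\circ\cdots\circ\varphi_t(\ell_2\circ\cdots\circ\ell_t(\ell_1^i(a')))$. I then push the composition $\varphi_2\circ\cdots\circ\varphi_t$ through the $\bigoplus_{i=1}^{p_1}$ using Observation \ref{oo01}; combined with $\varphi_1$ and the unfolded form of $\varphi_1\circ\cdots\circ\varphi_t$ on a direct-sum argument, this produces $\varphi_1\circ\varphi_2\circ\cdots\circ\varphi_t\bigl(\bigoplus_{i=1}^{p_1}\ell_2\circ\cdots\circ\ell_t(\ell_1^i(a'))\bigr)$. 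Finally I recognise the inner direct sum as the image of $a$ under $\ell_1\circ\ell_2\circ\cdots\circ\ell_t$, by invoking Definition \ref{def1} once more with every $\varphi_j$ specialised to $I$, which closes the induction.

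The main obstacle is the middle step where the composed linear map $\varphi_1\circ\cdots\circ\varphi_t$ is pulled outside the outer $\bigoplus$. This is not a one-line appeal to linearity, because $\varphi_1\circ\cdots\circ\varphi_t$ is itself a recursively defined object from Definition \ref{def1}; the clean way is to unfold that recursion one layer at a time, at each layer using Observation \ref{oo01} (the distributivity of $\sum^{\varphi}$ over $\bigoplus$) and carefully matching the subvector partition $a'$ against the interception indices $\ell_1^i$. Once this dimension bookkeeping is in place, the remaining identifications are purely formal.
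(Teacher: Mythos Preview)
The paper states this as an Observation with no proof; it is treated as immediate from Definition~\ref{def1}. Your inductive argument is the natural way to make it precise, and the overall skeleton (unfold via Definition~\ref{def1}, apply the hypothesis blockwise, re-fold) is correct.

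One point to tighten: the appeal to Observation~\ref{oo01} in the middle step is misplaced. Observation~\ref{oo01} governs how $\sum^{\varphi}$ interacts with $\bigoplus$ and with scalar linear combinations; it says nothing about pulling a $\circ$-composition $\varphi_2\circ\cdots\circ\varphi_t$ through an outer $\bigoplus$. What you actually need there is simply Definition~\ref{def1} again, now applied to $\varphi_1\circ\varphi_2\circ\cdots\circ\varphi_t$ with each interception map specialised to the identity. By that definition,
\[
\varphi_1\circ\varphi_2\circ\cdots\circ\varphi_t\Bigl(\bigoplus_{i=1}^{p_1} c_i\Bigr)
\;=\;
\varphi_1\Bigl(\bigoplus_{i=1}^{p_1}\varphi_2\circ\cdots\circ\varphi_t(c_i)\Bigr),
\]
since the identity interception just returns the $i$-th block $c_i$. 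Your final paragraph essentially says this (``unfold that recursion one layer at a time''), so the earlier invocation of Observation~\ref{oo01} should simply be replaced by a second use of Definition~\ref{def1}. The same remark applies to the closing identification $\bigoplus_{i=1}^{p_1}\ell_2\circ\cdots\circ\ell_t(\ell_1^i(a')) = \ell_1\circ\ell_2\circ\cdots\circ\ell_t(a)$: that is Definition~\ref{def1} with every $\varphi_j$ set to $I$, not a consequence of Observation~\ref{oo01}. With these citations corrected, your proof goes through and is more explicit than anything the paper provides.
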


\begin{lemma}\label{l1}
Let $a = (a_i)_{1\times p}, b = (b_i)_{1\times p^k}$, $\psi: R^{1\times p}\rightarrow R^{n\times m}$, $\varphi_i: R^{1\times p_i}\rightarrow R^{1\times q_i}$, $\varphi: R^{1\times p_1}\times R^{1\times p_2}\times\ldots\times R^{1\times p_t}\rightarrow R^{n\times m}$ be linear maps and $\ell_i: R^{1\times p}\rightarrow R^{1\times p_i}$ be an interception map, where $1\leq i\leq t$.
If $\psi(a) = \varphi(\varphi_1\ell_1(a),\varphi_2\ell_2(a),\ldots,\varphi_t\ell_t(a))$, then
\begin{equation*}
\begin{split}
\psi^k(b) = & {\sum\limits_{i_1=1}^t}^{\varphi}{\sum\limits_{i_2=1}^t}^{\varphi}\ldots{\sum\limits_{i_k=1}^t}^{\varphi}{\prod\limits_{j=1}^k}^{\circ}\varphi_{i_j}\ell_{i_j}(b).\\
\end{split}
\end{equation*}
\end{lemma}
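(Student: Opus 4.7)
The plan is to induct on $k$. For the base case $k=1$, the claimed formula reads ${\sum\limits_{i_1=1}^t}^{\varphi}\varphi_{i_1}\ell_{i_1}(a)$, and reading the definition of $\sum^{\varphi}$ with $r=1$ this collapses to a single application of $\varphi$ to the tuple $(\varphi_{i_1}\ell_{i_1}(a))_{i_1=1}^t$, which is exactly the hypothesis on $\psi$. The empty inner product $\prod^{\circ}$ (of length one) contributes nothing.

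For the inductive step, assume the formula at level $k$ and unfold $\psi^{k+1}(b)$ with Definition \ref{def1}-style block recursion: write $b=(b_{1,1},\ldots,b_{1,p})$ with $b_{1,j}$ consecutive $1\times p^{k}$ subvectors, so that $\psi^{k+1}(b)=\psi\bigl(\psi^{k}(b_{1,1}),\ldots,\psi^{k}(b_{1,p})\bigr)$. Apply the hypothesis on $\psi$ to the outermost call to rewrite this as $\varphi\bigl(\varphi_{i_1}\ell_{i_1}(c),\ldots\bigr)$ where $c$ is the concatenation of the $\psi^{k}(b_{1,j})$, and substitute the inductive formula into each $\psi^{k}(b_{1,j})$.

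From here the proof is a symbolic rearrangement. Observation \ref{oo01} lets the outer ${\sum_{i_1=1}^{t}}^{\varphi}$ pass through the $\bigoplus_{j=1}^{p}$ coming from the concatenation of the $p$ inductive instances, so the new outer summation is pulled to the front of the whole expression. Observation \ref{oo91} identifies $\varphi_{i_1}\ell_{i_1}\circ\bigl({\prod_{j=2}^{k+1}}^{\circ}\varphi_{i_j}\ell_{i_j}\bigr)$ with ${\prod_{j=1}^{k+1}}^{\circ}\varphi_{i_j}\ell_{i_j}$, which is precisely the level-$(k+1)$ inner composition. Combining these two moves yields the desired
$${\sum_{i_1=1}^{t}}^{\varphi}{\sum_{i_2=1}^{t}}^{\varphi}\cdots{\sum_{i_{k+1}=1}^{t}}^{\varphi}{\prod_{j=1}^{k+1}}^{\circ}\varphi_{i_j}\ell_{i_j}(b).$$

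The main obstacle will not be any new idea but rather the bookkeeping: verifying that the block sizes $p_i$, $q_i$, $n$, $m$ on which each $\sum^{\varphi}$, each $\ell_{i_j}$, and each $\varphi_{i_j}$ is meant to act all line up correctly across the recursion, and that the $p$ subvectors appearing in the block recursion for $\psi^{k+1}$ are the same ones that the interception maps $\ell_{i_1}$ read off after the outer layer. Once the types are pinned down, the two observations do precisely the rewriting needed, so the proof reduces to invoking them in the right order.
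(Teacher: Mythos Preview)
Your proposal is correct and follows essentially the same route as the paper: induction on $k$, peeling off the outermost layer of $\psi^{k}$ via its recursive definition, invoking the hypothesis on $\psi$, and then rearranging with Observation~\ref{oo01} so that the inductive hypothesis applies to each of the $p$ blocks. The paper runs the same computation in the opposite direction (it starts from the right-hand side and reduces it to $\psi^{k}(b)$) and, instead of invoking Observation~\ref{oo91}, writes $\varphi_{i_1}\ell_{i_1}$ explicitly as $\bigoplus_{j}\sum_{i}\lambda^{i_1}_{j,i}A_i$ to carry out the same step; these differences are cosmetic.
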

\begin{proof}
We will prove this by induction on $t$.
First, this lemma holds when $t=1$.
Assume that it holds on $t-1$.

Let $c_i = (b_{(i-1)*p^{k-1}+j})_{1\times p^{k-1}}, c = (c_i)_{1\times p}, d_i = \psi^{k-1}(c_i), e_i = {\sum\limits_{i_2=1}^t}^{\varphi}\ldots{\sum\limits_{i_k=1}^t}^{\varphi}\varphi_{i_2}\ell_{i_2}\circ\ldots\circ\varphi_{i_k}\ell_{i_k}(c_i)$.
W.l.o.g, we assume that $\varphi_{i_1}\ell_{i_1}(A) = \bigoplus\limits_{j=1}^{q_{i_1}}\sum\limits_{i=1}^{p}\lambda_{j,i}^{i_1}A_{i}$ where $A = (A_i)_{1\times p}$.
Therefore,
\begin{equation*}
\begin{aligned}
&{\sum\limits_{i_1=1}^t}^{\varphi}{\sum\limits_{i_2=1}^t}^{\varphi}\ldots{\sum\limits_{i_k=1}^t}^{\varphi}{\prod\limits_{j=1}^{k}}^{\circ}\varphi_{i_j}\ell_{i_j}(b)&\\ =& {\sum\limits_{i_1=1}^t}^{\varphi}{\sum\limits_{i_2=1}^t}^{\varphi}\ldots{\sum\limits_{i_k=1}^t}^{\varphi}\varphi_{i_1}(\bigoplus\limits_{z=1}^{p_{i_1}}{\prod\limits_{j=2}^{k}}^{\circ}\varphi_{i_j}\ell_{i_j}(\ell_{i_1}^z(c)))\\
=&{\sum\limits_{i_1=1}^t}^{\varphi}{\sum\limits_{i_2=1}^t}^{\varphi}\ldots{\sum\limits_{i_k=1}^t}^{\varphi}\varphi_{i_1}
\ell_{i_1}({\prod\limits_{j=2}^{k}}^{\circ}\varphi_{i_j}\ell_{i_j}(c_1),{\prod\limits_{j=2}^{k}}^{\circ}\varphi_{i_j}\ell_{i_j}(c_2),\ldots,{\prod\limits_{j=2}^{k}}^{\circ}\varphi_{i_j}\ell_{i_j}(c_p))&\\
=&{\sum\limits_{i_1=1}^t}^{\varphi}{\sum\limits_{i_2=1}^t}^{\varphi}\ldots{\sum\limits_{i_k=1}^t}^{\varphi}
\bigoplus\limits_{j=1}^{q_{i_1}}\sum\limits_{i=1}^{p}\lambda_{j,i}^{i_1}{\prod\limits_{j=2}^{k}}^{\circ}\varphi_{i_j}\ell_{i_j}(c_i).&\\
\end{aligned}
\end{equation*}
By Observation \ref{oo01},
\begin{equation*}
\begin{aligned}
&{\sum\limits_{i_1=1}^t}^{\varphi}{\sum\limits_{i_2=1}^t}^{\varphi}\ldots{\sum\limits_{i_k=1}^t}^{\varphi}
\bigoplus\limits_{j=1}^{q_{i_1}}\sum\limits_{i=1}^{p}\lambda_{j,i}^{i_1}{\prod\limits_{j=2}^{k}}^{\circ}\varphi_{i_j}\ell_{i_j}(c_i)&\\
=&{\sum\limits_{i_1=1}^t}^{\varphi}\bigoplus\limits_{j=1}^{q_{i_1}}\sum\limits_{i=1}^{p}\lambda_{j,i}^{i_1}{\sum\limits_{i_2=1}^t}^{\varphi}\ldots{\sum\limits_{i_k=1}^t}^{\varphi}
{\prod\limits_{j=2}^{k}}^{\circ}\varphi_{i_j}\ell_{i_j}(c_i) \\
=&{\sum\limits_{i_1=1}^t}^{\varphi}\bigoplus\limits_{j=1}^{q_{i_1}}\sum\limits_{i=1}^{p}\lambda_{j,i}^{i_1}e_i.&\\
\end{aligned}
\end{equation*}

By the induction hypothesis, we have $d_i = e_i$. Then, we get
\begin{equation*}
\begin{split}
&{\sum\limits_{i_1=1}^t}^{\varphi}{\sum\limits_{i_2=1}^t}^{\varphi}\ldots{\sum\limits_{i_k=1}^t}^{\varphi}{\prod\limits_{j=1}^{k}}^{\circ}\varphi_{i_j}\ell_{i_j}(b)\\
=&{\sum\limits_{i_1=1}^t}^{\varphi}\bigoplus\limits_{j=1}^{q_{i_1}}\sum\limits_{i=1}^{p}\lambda_{j,i}^{i_1}e_i\\
=&{\sum\limits_{i_1=1}^t}^{\varphi}\bigoplus\limits_{j=1}^{q_{i_1}}\sum\limits_{i=1}^{p}\lambda_{j,i}^{i_1}d_i\\
=&{\sum\limits_{i_1=1}^t}^{\varphi}\varphi_{i_1}\ell_{i_1}((d_1,d_2,...,d_p))\\
=&\psi^k(b).
\end{split}
\end{equation*}
\end{proof}

\begin{def1}
Let $\varphi_i: R^{m_i\times n_i}\rightarrow R^{1\times q_i}$ be a linear map where $1 \leq i \leq t$,
$A \in R^{r_1\prod\limits_{k=1}^{t}m_{k}\times r_2\prod\limits_{k=1}^{t}n_{k}}$
and $A_{i,j}$ be $r_1\prod\limits_{k=1}^{t-1}m_{k}\times r_2\prod\limits_{k=1}^{t-1}n_{k}$ submatrix of $A$.
Define ${\prod\limits_{j=1}^t}^\ast\varphi_{j}(A) = \varphi_{1}\ast\varphi_{2}\ast\ldots\ast\varphi_{t}(A) = \bigoplus\limits_{i=1}^{q_t}\varphi_{1}\ast\varphi_{2}\ast\ldots\ast\varphi_{t-1}(b_i)$
where $(b_i)_{1\times q_t} = \varphi_{t}(A)$.
\end{def1}

\begin{observation}
$\varphi_{1}\ast\varphi_{2}\ast\ldots\ast\varphi_{t}(\sum_{i=1}^{t}\lambda_iA_i) = \sum_{i=1}^t\lambda_i\varphi_{1}\ast\varphi_{2}\ast\ldots\ast\varphi_{t}(A_i).$
\end{observation}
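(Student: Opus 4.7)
The plan is to prove the observation by induction on $t$, relying only on the linearity of each individual $\varphi_i$ and of the operation $\bigoplus$. Before starting the induction, I would note a preliminary fact: the block-extension convention at the start of Section~2, which sends $\varphi(A)$ to $\varphi(B)$ for a block-matrix $B$ whose entries are the $r_1\times r_2$ submatrices of $A$, preserves linearity. Indeed, if $A=\sum_j\lambda_j A_j$, then its block decomposition $B$ equals $\sum_j\lambda_jB_j$ entrywise, and the linearity of $\varphi$ on matrices gives $\varphi(B)=\sum_j\lambda_j\varphi(B_j)$. In particular, each extended $\varphi_i$ remains a linear map on the ambient space $R^{r_1\prod m_k\times r_2\prod n_k}$, as do its restrictions to block components.

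The base case $t=1$ is then immediate: $\varphi_{1}(\sum_j\lambda_j A_j)=\sum_j\lambda_j\varphi_{1}(A_j)$ by the preliminary remark. For the inductive step, suppose the identity holds for $t-1$. Writing $(b_i)_{1\times q_t}=\varphi_t(A)$ and $(b_i^{(j)})_{1\times q_t}=\varphi_t(A_j)$ for each summand, the linearity of $\varphi_t$ yields $b_i=\sum_j\lambda_jb_i^{(j)}$ for every $i$. Applying the inductive hypothesis to $\varphi_1\ast\cdots\ast\varphi_{t-1}$ on each $b_i$ pulls the linear combination outside, and since the operation $\bigoplus_{i=1}^{q_t}$ is coordinate-wise (i.e.\ distributes over scalar multiplication and addition), I can swap the order of the outer direct sum and the outer sum over $j$ to obtain
$$\bigoplus_{i=1}^{q_t}\sum_{j}\lambda_j\,\varphi_1\ast\cdots\ast\varphi_{t-1}(b_i^{(j)})=\sum_{j}\lambda_j\bigoplus_{i=1}^{q_t}\varphi_1\ast\cdots\ast\varphi_{t-1}(b_i^{(j)})=\sum_{j}\lambda_j\,\varphi_1\ast\cdots\ast\varphi_t(A_j),$$
which is exactly the desired identity.

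There is no real mathematical obstacle here; everything reduces to linearity. The main care needed is notational bookkeeping: one must verify that the block-extension convention preserves linearity (handled in the preliminary remark), that $\bigoplus$ distributes coordinate-wise over finite sums and scalar multiples, and that the decomposition of $\varphi_t(A)$ into the $q_t$ blocks $b_i$ depends linearly on $A$. Once these routine facts are in place, the induction is a one-line unwinding of the recursive definition of $\ast$, so I would keep the written proof correspondingly short.
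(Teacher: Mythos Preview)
Your induction on $t$ is correct and is the natural argument; the only care points you identify (block-extension preserves linearity, $\bigoplus$ distributes coordinatewise over finite $R$-linear combinations) are precisely what is needed. The paper itself gives no proof for this observation---it is stated without argument---so there is nothing to compare against beyond noting that your write-up spells out the routine linearity unwinding that the authors leave implicit.
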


\begin{lemma}\label{l2}
Let $\ell_i: R^{1\times u_i}\rightarrow R^{1\times v_i}$ be an interception map, $\varphi_i: R^{p_i\times q_i}\rightarrow R^{1\times u_i}$ be a linear map where $1\leq i\leq k$ and $A\in R^{r_1\prod\limits_{i=1}^kp_i\times r_2\prod\limits_{i=1}^kq_i}$.
We have
$$\ell_1\circ \ell_2\circ\ldots\circ \ell_k(\varphi_{k}\ast \varphi_{k-1}\ast\ldots\ast \varphi_{1}(A)) = \ell_k\varphi_{k}\ast \ell_{k-1}\varphi_{k-1}\ast\ldots\ast \ell_1\varphi_{1}(A).$$
\end{lemma}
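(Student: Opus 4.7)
The plan is to prove this by induction on $k$, unfolding the definition of $\ast$ to strip off the outermost interception/linear map, then invoking Def 1 to evaluate the $\circ$-composition on the resulting $\bigoplus$.

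For the base case $k=1$, both sides collapse to $\ell_1(\varphi_1(A))$ by the conventions $\varphi I = \varphi$ and $I\ell = \ell$, so nothing is to be shown. For the inductive step, assume the lemma holds for $k-1$. By the definition of $\ast$, the left-hand side becomes
$$\ell_1\circ \ell_2\circ\ldots\circ \ell_k\Bigl(\bigoplus_{i=1}^{u_1}\varphi_{k}\ast\ldots\ast \varphi_{2}(b_i)\Bigr),$$
where $(b_i)_{1\times u_1}=\varphi_1(A)$, and each $b_i$ is itself a block matrix of the sizes required to apply the induction hypothesis.

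I would then apply Def 1 (and Observation \ref{oo91}, interpreting the outermost $\ell_1$ as acting on the $u_1$-fold chunking) to rewrite the expression above as
$$\bigoplus_{j=1}^{v_1}\ell_2\circ\ldots\circ \ell_k\bigl(\varphi_{k}\ast\ldots\ast \varphi_{2}(b_{\ell_1^j})\bigr),$$
since the interception $\ell_1$ selects precisely the $\ell_1^j$-th block out of the $u_1$ chunks. Invoking the induction hypothesis on each summand converts this into $\bigoplus_{j=1}^{v_1}\ell_k\varphi_{k}\ast\ldots\ast \ell_2\varphi_{2}(b_{\ell_1^j})$. On the other hand, the right-hand side expanded via the $\ast$ definition is $\bigoplus_{j=1}^{v_1}\ell_k\varphi_{k}\ast\ldots\ast \ell_2\varphi_{2}(c_j)$, where $(c_j)_{1\times v_1}=\ell_1\varphi_1(A)=\ell_1(\varphi_1(A))$; but $\ell_1$ is an interception map, so $c_j=b_{\ell_1^j}$, matching the LHS term-by-term.

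The routine part is the algebra; the delicate part is bookkeeping. I would carefully check that the implicit block-extension convention ``$\varphi(A):=\varphi(B)$ on block partitions'' behaves correctly when composed with the $\ast$-product, so that the $b_i$ produced by the outermost $\varphi_1$ really are the matrices of the sizes needed to feed into the induction hypothesis, and that the interception map $\ell_1$ commutes with the $\bigoplus$-concatenation of chunks of identical block shape. This compatibility is the only nontrivial point, and once it is verified the induction closes immediately.
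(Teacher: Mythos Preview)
Your proposal is correct and follows essentially the same approach as the paper: induction on $k$, unfolding the $\ast$-product to expose the $\bigoplus$ over $\varphi_1(A)=(b_1,\ldots,b_{u_1})$, then using Definition~\ref{def1} to peel off $\ell_1$ as a selection of the $\ell_1^j$-th chunk before invoking the induction hypothesis on each summand. The paper's proof is slightly terser in that it does not comment on the block-size compatibility you flag, but the structure and the key steps are identical.
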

\begin{proof}
We prove this by induction on $k$. First, this lemma holds when $k=1$.
Assume this lemma holds on $k-1$.
W.l.o.g, we assume $\varphi_1(A) = (b_1, b_2, ..., b_{u_1})$.
Thus,
\begin{equation*}
\begin{split}
{\prod\limits_{i=1}^k}^{\circ}\ell_i({\prod\limits_{j=k}^{1}}^{\ast}\varphi_{j}(A)) =&{\prod_{i=1}^k}^{\circ}\ell_i(\bigoplus\limits_{t=1}^{u_1}{\prod\limits_{j=k}^{2}}^{\ast}\varphi_{j}(b_t))\\
=&\bigoplus\limits_{h=1}^{v_1}{\prod_{i=2}^k}^{\circ}\ell_i(\ell_1^h(\bigoplus\limits_{t=1}^{u_1}({\prod\limits_{j=k}^{2}}^{\ast}\varphi_{j}(b_t))))\\
=&\bigoplus\limits_{h=1}^{v_1}{\prod_{i=2}^k}^{\circ}\ell_i({\prod\limits_{j=k}^{2}}^{\ast}\varphi_{j}(\ell_1^h(b_1,b_2,...,b_{u_1}))).\\
\end{split}
\end{equation*}
By the induction hypothesis, we have
\begin{equation*}
\begin{split}
{\prod\limits_{i=1}^k}^{\circ}\ell_i({\prod\limits_{j=k}^{1}}^{\ast}\varphi_{j}(A)) = & \bigoplus\limits_{h=1}^{v_1}{\prod_{i=2}^k}^{\circ}\ell_i({\prod\limits_{j=k}^{2}}^{\ast}\varphi_{j}(\ell_1^h(b_1,b_2,...,b_{u_1})))\\
= & \bigoplus\limits_{h=1}^{v_1}{\prod\limits_{j=k}^{2}}^{\ast}\ell_j\varphi_{j}(\ell_1^h(b_1,b_2,...,b_{u_1}))\\
= & \bigoplus\limits_{h=1}^{v_1}{\prod\limits_{j=k}^{2}}^{\ast}\ell_j\varphi_{j}(\ell_1^h(\varphi_1(A)))\\
= & {\prod\limits_{j=k}^{1}}^{\ast}\ell_j\varphi_{j}(A).\\
\end{split}
\end{equation*}
\end{proof}
\begin{lemma}\label{l3}
Let $\varphi_i: R^{m_i\times n_i}\rightarrow R^{1\times q_i}$, $\phi_i:R^{m_i\times n_i}\rightarrow R^{1\times p_i}$ and $\psi_i:R^{1\times p_i}\rightarrow R^{1\times q_i}$ be linear maps for $1 \leq i \leq k$,
$A\in R^{r_1\prod\limits_{i=1}^{k}m_{i}\times r_2\prod\limits_{i=1}^{k}n_{i}}$.
If $\psi_i\phi_i = \varphi_i$ for $1\leq i\leq k$, then
$$\varphi_{1}\ast\varphi_{2}\ast\ldots\ast\varphi_{k}(A) = \psi_{1}\ast\psi_{2}\ast\ldots\ast\psi_{k}(\phi_{1}\ast\phi_{2}\ast\ldots\ast\phi_{k}(A)).$$
\end{lemma}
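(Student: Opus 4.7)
The plan is to induct on $k$. The base case $k=1$ is immediate: $\varphi_1(A) = \psi_1\phi_1(A) = \psi_1(\phi_1(A))$ directly from the hypothesis $\psi_1\phi_1 = \varphi_1$, so both sides of the claimed equation coincide.

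For the inductive step, assuming the lemma for $k-1$, I would unfold both sides using the recursive definition of $\ast$. On the left, setting $(b_i)_{1\times q_k} = \varphi_k(A) = \psi_k(\phi_k(A))$, the definition gives
$$\varphi_{1}\ast\varphi_{2}\ast\ldots\ast\varphi_{k}(A) \;=\; \bigoplus_{i=1}^{q_k}\varphi_{1}\ast\varphi_{2}\ast\ldots\ast\varphi_{k-1}(b_i),$$
and applying the inductive hypothesis to each inner $(k-1)$-fold product rewrites this as $\bigoplus_{i=1}^{q_k}\psi_{1}\ast\ldots\ast\psi_{k-1}(\phi_{1}\ast\ldots\ast\phi_{k-1}(b_i))$. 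On the right, setting $(c_j)_{1\times p_k} = \phi_k(A)$ and $C := \phi_{1}\ast\ldots\ast\phi_{k}(A) = \bigoplus_{j=1}^{p_k}\phi_{1}\ast\ldots\ast\phi_{k-1}(c_j)$, the definition gives $\psi_{1}\ast\ldots\ast\psi_{k}(C) = \bigoplus_{i=1}^{q_k}\psi_{1}\ast\ldots\ast\psi_{k-1}(e_i)$ where $(e_i)_{1\times q_k} = \psi_k(C)$. Thus the whole lemma reduces to showing that $e_i = \phi_{1}\ast\ldots\ast\phi_{k-1}(b_i)$ for every $i$.

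To establish this identification, I would expand $\psi_k$ in coordinates: since $\psi_k\colon R^{1\times p_k}\to R^{1\times q_k}$ is linear, write $b_i = \sum_{j=1}^{p_k}\lambda_{ij}c_j$ for scalars $\lambda_{ij}$. Then by the observation that $\varphi_1\ast\cdots\ast\varphi_t$ is linear in its matrix argument, $\phi_{1}\ast\ldots\ast\phi_{k-1}(b_i) = \sum_j\lambda_{ij}\,\phi_{1}\ast\ldots\ast\phi_{k-1}(c_j)$. On the other hand, applying $\psi_k$ to $C$ means viewing $C$ as a $1\times p_k$ vector of blocks whose $j$-th block is $\phi_{1}\ast\ldots\ast\phi_{k-1}(c_j)$ (this is exactly the extension convention for linear maps introduced at the start of Section~2), so the $i$-th coordinate of $\psi_k(C)$ is again $\sum_j\lambda_{ij}\,\phi_{1}\ast\ldots\ast\phi_{k-1}(c_j)$. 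The two expressions match, giving $e_i = \phi_{1}\ast\ldots\ast\phi_{k-1}(b_i)$, and the induction closes.

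The main obstacle, and the place that demands the most care, is keeping track of the two different block-level interpretations that coexist in the same expression: the coordinate vector $(c_j)_{1\times p_k}$ on which $\psi_k$ is defined versus the big matrix $C$ on which $\psi_k$ acts by the block-wise extension convention. One has to check that the scalar coefficients produced by $\psi_k$ on $\phi_k(A)$ agree with those produced when $\psi_k$ acts on $C$ regarded as a blocked matrix; this is forced by the extension convention together with the linearity observation, but cleanly separating the sizes $r_1\prod_{i<k}m_i\times r_2\prod_{i<k}n_i$ of the outer blocks from the sizes $r_1\times r_2$ of the innermost blocks is where the argument is easy to trip on.
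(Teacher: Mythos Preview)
Your proof is correct and follows essentially the same approach as the paper's: both proceed by induction on $k$, unfold the outermost $\ast$ using the recursive definition, express $\psi_k$ in scalar coordinates, and use the linearity observation to match the block components. The only cosmetic difference is that the paper writes out all three maps $\varphi_k,\psi_k,\phi_k$ in explicit coefficients $\lambda_{ij}^r,\nu_z^r,\mu_{ij}^z$ from the start, whereas you first reduce to the identity $e_i=\phi_1\ast\cdots\ast\phi_{k-1}(b_i)$ and then introduce the coefficients of $\psi_k$; the underlying computation is the same.
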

\begin{proof}
We prove this by induction on $k$. First, this lemma holds when $k=1$.
Assume this lemma holds on $k-1$. 
W.l.o.g, we assume $\varphi_{k}(A) = \bigoplus\limits_{r=1}^{q_{k}}\sum\limits_{i,j}\lambda_{ij}^{r}A_{ij}$ where $A = (A_{ij})_{m_{k}\times n_{k}}$, $\psi_{k}(B) = \bigoplus\limits_{r=1}^{q_k}\sum\limits_{z=1}^{p_k}\nu_{z}^{r}B_{z}$ where $B = (B_{z})_{1\times p_{k}}$, and $\phi_{k}(A) = \bigoplus\limits_{z=1}^{q_k}\sum\limits_{i,j}\mu_{ij}^{z}A_{ij}$ where $A = (A_{ij})_{m_{k}\times n_{k}}$.
Since $\varphi_{k} = \psi_{k}\phi_{k}$, we have
$\sum\limits_{i,j}\lambda_{ij}^{r}A_{ij} = \sum\limits_{z=1}^{p_{k}}\nu_{z}^{r}\sum\limits_{i,j}\mu_{ij}^{z}A_{ij}$.
By the induction hypothesis, we have
\begin{equation*}
\begin{split}
\varphi_{1}\ast\varphi_{2}\ast\ldots\ast\varphi_{k}(A)
&= \bigoplus\limits_{r=1}^{q_i}\varphi_{1}\ast\varphi_{2}\ast\ldots\ast\varphi_{{k-1}}(\sum_{i,j}\lambda_{ij}^rA_{ij})\\
& = \bigoplus\limits_{r=1}^{q_i}\psi_{1}\ast\psi_{2}\ast\ldots\ast\psi_{{k-1}}(\phi_{1}\ast\phi_{2}\ast\ldots\ast\phi_{k-1}(\sum_{z=1}^{p_{k}}\nu_{z}^r\sum_{i,j}\mu_{ij}^zA_{ij}))\\
& = \bigoplus\limits_{r=1}^{q_i}\psi_{1}\ast\psi_{2}\ast\ldots\ast\psi_{k-1}(\sum_{z=1}^{p_{k}}\nu_{z}^r\phi_{1}\ast\phi_{2}\ast\ldots\ast\phi_{{k-1}}(\sum_{i,j}\mu_{ij}^zA_{ij}))\\
& = \psi_{1}\ast\psi_{2}\ast\ldots\ast\psi_{{k}}(\bigoplus\limits_{z=1}^{p_{k}}\phi_{1}\ast\phi_{2}\ast\ldots\ast\phi_{{k-1}}(\sum_{i,j}\mu_{ij}^zA_{ij}))\\
& = \psi_{1}\ast\psi_{2}\ast\ldots\ast\psi_{{k}}(\phi_{1}\ast\phi_{2}\ast\ldots\ast\phi_{{k}}(A)).\\
\end{split}
\end{equation*}
\end{proof}

\section{The Design Of Algorithm}
We refer to the $<U_{t\times n_0m_0},V_{t\times m_0k_0},W_{t\times n_0k_0}>$ of a $<n_0,m_0,k_0;t>$-algorithm as its encoding/decoding matrices\cite{benson2015framework} (where $U,V$ are encoding matrices and $W$ is the decoding matrix).
An encoding/decoding matrix is corresponding to a bilinear function $f: R^{1\times n_0m_0}\times R^{1\times m_0k_0}\rightarrow R^{1\times n_0k_0}$ with
$$f(x,y) = W^{T}((U\cdot x)\odot(V\cdot y))$$
where $x\in R^{1\times n_0m_0}, y\in R^{1\times m_0k_0}$ and $\odot$ is element-wise vector product(Hadamard product).
Let $\psi(A) = U\cdot x, \phi(B) = V\cdot y,\varphi(C) = W^{T}C$ where $A = (a_{ij})_{n_0\times m_0}, B = (b_{ij})_{m_0\times k_0}, C = (c_{ij})_{1\times t}, x = (a_{ij})_{1\times n_0m_0}, y = (b_{ij})_{1\times m_0k_0}$.

Now, we will decompose the linear maps $\varphi,\psi,\phi$ to get the faster algorithm.
Let $\varphi_i: R^{1\times p_i}\rightarrow R^{1\times q_i},\psi_i: R^{n_0\times m_0}\rightarrow R^{1\times p_i}, \psi_{i,1}: R^{n_0\times m_0}\rightarrow R^{1\times p_{i,1}}, \psi_{i,2}: R^{1\times p_{i,1}}\rightarrow R^{1\times p_{i}}, \phi_i:R^{m_0\times k_0}\rightarrow R^{1\times p_i}, \phi_{i,1}: R^{m_0\times k_0}\rightarrow R^{1\times p'_{i,1}}, \phi_{i,2}: R^{1\times p'_{i,1}}\rightarrow R^{1\times p_{i}}$ be linear maps and $\ell_{i}$ be interception maps which satisfy that $\varphi = {\sum\limits_{i=1}^{h}}^{\varphi_0}\varphi_i\ell_{i}$,
$\psi_i = \ell_{i}\psi$, $\phi_i = \ell_{i}\phi$
and $\psi_i = \psi_{i,2}\psi_{i,1}, \phi_i = \phi_{i,2}\phi_{i,1}$, where $1\leq i\leq h$.

In this paper, we consider the situation that $\varphi_i = \psi_{i,2} = \phi_{i,2}= I ,p_i= q_i = p_{i,1} = p'_{i,1} = 1$ for $2\leq i\leq h$, $p_{1,1}\geq 2, p'_{1,1}\geq 2, q_1\geq 2, t-p_1>\max\{m_0n_0,m_0k_0,n_0k_0\}$ and $h = t - p_1+1$.

\bigskip
\noindent

Let $A\in R^{n_0^q\times m_0^q}, B\in R^{m_0^q\times k_0^q}$ be the input of $<m_0,n_0,k_0;t>$-algorithm and $C$ be the output.
Then, we have\cite{cenk2017arithmetic,karstadt2020matrix,beniamini2019faster}
$$C = \varphi^{q}({\prod\limits_{i=1}^q}^{\ast}\psi(A)\odot{\prod\limits_{i=1}^q}^\ast\phi(B)).$$

Further more, by Lemma \ref{l1}, and Observation \ref{oo91}, we have
\begin{equation*}
\begin{split}
C = & {\sum\limits_{i_1=1}^{h}}^{\varphi_0}{\sum\limits_{i_2=1}^{h}}^{\varphi_0}
\ldots{\sum\limits_{i_q=1}^{h}}^{\varphi_0}{\prod\limits_{j=1}^q}^{\circ }\varphi_{i_j}\ell_{i_j}
({\prod\limits_{i=1}^q}^{\ast}\psi(A)
\odot{\prod\limits_{i=1}^q}^{\ast}\phi(B))\\
= & {\sum\limits_{i_1=1}^{h}}^{\varphi_0}{\sum\limits_{i_2=1}^{h}}^{\varphi_0}
\ldots{\sum\limits_{i_q=1}^{h}}^{\varphi_0}{\prod\limits_{j=1}^q}^{\circ }\varphi_{i_j}
({\prod\limits_{z=1}^q}^{\circ }\ell_{i_z}({\prod\limits_{i=1}^q}^{\ast}\psi(A)
\odot{\prod\limits_{i=1}^q}^{\ast}\phi(B)))\\
= & {\sum\limits_{i_1=1}^{h}}^{\varphi_0}{\sum\limits_{i_2=1}^{h}}^{\varphi_0}
\ldots{\sum\limits_{i_q=1}^{h}}^{\varphi_0}{\prod\limits_{j=1}^q}^{\circ }\varphi_{i_j}
({\prod\limits_{z=1}^q}^{\circ }\ell_{i_z}({\prod\limits_{i=1}^q}^{\ast}\psi(A))
\odot{\prod\limits_{j=1}^q}^{\circ }\ell_{i_z}({\prod\limits_{i=1}^q}^{\ast}\phi(B))).\\
\end{split}
\end{equation*}

By Lemma \ref{l2}, we have
\begin{equation*}
\begin{split}
C
=& {\sum\limits_{i_1=1}^{h}}^{\varphi_0}{\sum\limits_{i_2=1}^{h}}^{\varphi_0}
\ldots{\sum\limits_{i_q=1}^{h}}^{\varphi_0}{\prod\limits_{z=1}^q}^{\circ }\varphi_{i_z}
(({\prod\limits_{j=q}^1}^\ast\ell_{i_j}\psi(A))
\odot({\prod\limits_{j=q}^1}^\ast\ell_{i_j}\phi(B)))\\
= & {\sum\limits_{i_1=1}^{h}}^{\varphi_0}{\sum\limits_{i_2=1}^{h}}^{\varphi_0}
\ldots{\sum\limits_{i_q=1}^{h}}^{\varphi_0}{\prod\limits_{z=1}^q}^{\circ }\varphi_{i_z}
(({\prod\limits_{j=q}^{1}}^{\ast}\psi_{i_j}(A))
\odot({\prod\limits_{j=q}^1}^{\ast}\phi_{i_j}(B)))\\
= & {\sum\limits_{i_1=1}^{h}}^{\varphi_0}{\sum\limits_{i_2=1}^{h}}^{\varphi_0}
\ldots{\sum\limits_{i_q=1}^{h}}^{\varphi_0}{\prod\limits_{z=1}^q}^{\circ }\varphi_{i_z}
(({\prod\limits_{j=q}^1}^{\ast}\psi_{i_j,2}\psi_{i_j,1}(A))
\odot({\prod\limits_{j=q}^1}^{\ast}\phi_{i_j,2}\phi_{i_j,1}(B))).\\
\end{split}
\end{equation*}

By Lemma \ref{l3}, we have
\begin{equation}\label{e1}
C = {\sum\limits_{i_1=1}^{h}}^{\varphi_0}{\sum\limits_{i_2=1}^{h}}^{\varphi_0}
\ldots{\sum\limits_{i_q=1}^{h}}^{\varphi_0}{\prod\limits_{z=1}^q}^{\circ }\varphi_{i_z}
({\prod\limits_{j=q}^1}^{\ast}\psi_{i_j,2}({\prod\limits_{j=q}^1}^{\ast}\psi_{i_j,1}(A))
\odot({\prod\limits_{j=q}^1}^{\ast}\phi_{i_j,2}({\prod\limits_{j=q}^1}^{\ast}\phi_{i_j,1}(B)))).
\end{equation}
Let $S=(s_1,s_2,...,s_u), T = (t_1,t_2,...,t_v)$ where $s_i,t_j$ are matrices for $1\leq i\leq u, 1\leq j\leq v$.
When $u=1,v=1$, $t=1$, Equation \ref{e1} is equivalent following formula:
\begin{equation}\label{e2}
{\sum\limits_{i_{t}=1}^{h}}^{\varphi_0}{\sum\limits_{i_{t+1}=1}^{h}}^{\varphi_0}
\ldots{\sum\limits_{i_q=1}^{h}}^{\varphi_0}{\prod\limits_{k=1}^{q}}^{\circ}\varphi_{i_k}
({\prod\limits_{j=q}^1}^{\ast}\psi_{i_j,2}(\bigoplus\limits_{z=1}^{u}{\prod\limits_{j=q}^{{t}}}^{\ast}\psi_{i_j,1}(s_z))
\odot{\prod\limits_{j=q}^1}^{\ast}\phi_{i_j,2}(\bigoplus\limits_{z=1}^{v}{\prod\limits_{j=q}^{{t}}}^{\ast}\phi_{i_j,1}(t_z)))
\end{equation}
where $i_1, i_2, \ldots, i_{t-1}$ are specified numbers with $1\leq i_j\leq h$ for $1\leq j\leq t-1$.
Then, we compute Formula \ref{e2}.

First, we compute the situation that $t-1=q$, i.e., compute the following formula:

\begin{equation}\label{e0}
{\prod\limits_{k=1}^{q}}^{\circ}\varphi_{i_k}
({\prod\limits_{j=q}^1}^{\ast}\psi_{i_j,2}(S)
\odot{\prod\limits_{j=q}^1}^{\ast}\phi_{i_j,2}(T)).
\end{equation}

Let $x$ be the number of $j$ with $i_j=1$ for $1\leq j\leq q$.
Since $\varphi_i = \psi_{i,2} = \phi_{i,2}= I$ for $2\leq i\leq t-p_1$, it equals to compute the following equation:

$$
{\prod\limits_{k=1}^{x}}^{\circ}\varphi_{1}
({\prod\limits_{j=1}^x}^{\ast}\psi_{1,2}(S)
\odot{\prod\limits_{j=1}^x}^{\ast}\phi_{1,2}(T)).
$$
Let $\psi_{1,2}(S) = (s'_1,s'_2,...,s'_{p_1}), \phi_{1,2}(T) = (t'_1,t'_2,...,t'_{p_1})$.
Then,
\begin{equation}\label{er}
\begin{split}
{\prod\limits_{k=1}^{x}}^{\circ}\varphi_{1}
({\prod\limits_{j=1}^x}^{\ast}\psi_{1,2}(S)
\odot{\prod\limits_{j=1}^x}^{\ast}\phi_{1,2}(T))
= &{\prod\limits_{k=1}^{x}}^{\circ}\varphi_{1}(\bigoplus\limits_{z=1}^{p_1}{\prod\limits_{j=1}^{x-1}}^{\ast}\psi_{1,2}(s'_z)
\odot{\prod\limits_{j=1}^{x-1}}^{\ast}\phi_{1,2}(t'_z))\\
=&\varphi_{1}(\bigoplus\limits_{z=1}^{p_1}{\prod\limits_{k=1}^{x-1}}^{\circ}\varphi_{1}({\prod\limits_{j=1}^{x-1}}^{\ast}\psi_{1,2}(s'_z)
\odot{\prod\limits_{j=1}^{x-1}}^{\ast}\phi_{1,2}(t'_z))).
\end{split}
\end{equation}

The Algorithm \ref{alg:W4} is designed to compute Formula \ref{e0} by Equation \ref{er}.

\begin{algorithm}\label{AC}
\SetAlgoNoLine
\KwIn{S,T}
\eIf{length of $S$ and $T$ equal one}{
    return $S\odot T$
}
{
    Let $(s'_1, s'_2, ..., s'_{p_1})$ be $\psi_{1,2}(S)$\;
    Let $(t'_1, t'_2, ..., t'_{p_1})$ be $\phi_{1,2}(T)$\;
    \For{$i=1$ to $p_1$}{
        Compute $s'_i,t'_i$\;
        $z_i = AC(s'_i,t'_i)$\;
        {Add $z_i$ to $re$ as $re = \varphi_1(z_1,z_2,...,z_{p_1})$*\;}
    }
}
\Return{re}
\caption{AC}
\label{alg:W4}
\end{algorithm}

\vspace{-0.55cm}
\begin{scriptsize}
*: Add $z_i$ to $re$ as $re = \varphi_1(z_1,z_2,...,z_{p_1})$ means that renew $re$ as $re = re + \varphi_1(z_1,z_2,...,z_{p_1})$ with setting $z_1 = z_2 = ...= z_{i-1} = z_{i+1} =...= z_{p_1} = 0$.
\end{scriptsize}

\vspace{0.4cm}
Now, we show how to compute Formula \ref{e2}.
First, for a specified $i_t$ and $1\leq z\leq u$, let
$$\psi_{i_t,1}(s_z) = (s_{i_t,(z-1)*p_{i_t,1}+1}, s_{i_t,(z-1)*p_{i_t,1}+2}, ..., s_{i_t,(z-1)*p_{i_t,1}+p_{i_t,1}})$$
and
$$\phi_{i_t,1}(t_z) = (t_{i_t,(z-1)*p'_{i_t,1}+1}, t_{i_t,(z-1)*p'_{i_t,1}+2}, ..., t_{i_t,(z-1)*p_{i_t,1}+p'_{i_t,1}}).$$
Then, for a specified $i_t$, we have
$$\bigoplus\limits_{z=1}^{u}{\prod\limits_{j=q}^{{t}}}^{\ast}\psi_{i_j,1}(s_z) = \bigoplus\limits_{z=1}^{up_{i_t,1}}{\prod\limits_{j=q}^{{t+1}}}^{\ast}\psi_{i_j,1}(s_{i_t,z})$$
and
$$\bigoplus\limits_{z=1}^{v}{\prod\limits_{j=q}^{{t}}}^{\ast}\phi_{i_j,1}(t_z) = \bigoplus\limits_{z=1}^{vp'_{i_t,1}}{\prod\limits_{j=q}^{{t+1}}}^{\ast}\phi_{i_j,1}(t_{i_t,z}).$$

Let $A_{i_t}$ be

\begin{equation*}
{\sum\limits_{i_{t+1}=1}^{h}}^{\varphi_0}
...{\sum\limits_{i_q=1}^{h}}^{\varphi_0}{\prod\limits_{k=1}^{q}}^{\circ}\varphi_{i_k}
({\prod\limits_{j=q}^1}^{\ast}\psi_{i_j,2}(\bigoplus\limits_{z=1}^{up_{i_t,1}}{\prod\limits_{j=q}^{{t+1}}}^{\ast}\psi_{i_j,1}(s_{i_t,z}))
\odot{\prod\limits_{j=q}^1}^{\ast}\phi_{i_j,2}(\bigoplus\limits_{z=1}^{vp'_{i_t,1}}{\prod\limits_{j=q}^{{t+1}}}^{\ast}\phi_{i_j,1}(t_{i_t,z}))).
\end{equation*}

Then, ${\sum\limits_{i_{t}=1}^{h}}^{\varphi_0}A_{i_t}$ equals Formula \ref{e2}.

\begin{def1}
Let $\xi_1, \xi_2$ be two maps with two inputs,
$S = (s_i)_{1\times u}, T = (t_i)_{1\times v}$,
$a_i = \bigoplus\limits_{j=1}^u\psi_{i,1}(s_j), b_i = \bigoplus\limits_{j=1}^v\phi_{i,1}(t_j)$ for $1\leq i\leq h$,
and $c_1 = \xi_1(a_1,b_1)$, $c_i = \xi_2(a_i, b_i)$ for $2\leq i\leq h$.
Define $\mathcal{F}(\xi_1, \xi_2, S, T) = {\sum\limits_{i=1}^h}^{\varphi_0}c_i$.

\end{def1}

Let $BC(\bigoplus\limits_{z=1}^{up_{i_t,1}}(s_{i_t,z}), \bigoplus\limits_{z=1}^{vp'_{i_t,1}}(t_{i_t,z})) := A_{i_t}$.
Notice that ${\sum\limits_{i_{t}=1}^{h}}^{\varphi_0}A_{i_t}$ equals $\mathcal{F}(BC,BC,S,T)$.

Let $\varphi_{1}(\psi_{1,2}\odot \phi_{1,2})(A',B')$ be the map $\varphi_1(\psi_{1,2}(A')\odot \phi_{1,2}(B'))$ where $A'\in R^{m_0\times n_0}, B'\in R^{n_0\times k_0}$, $\mathcal{A}$ be an algorithm computing $\mathcal{F}(\varphi_1(\psi_{1,2}\odot \phi_{1,2}), \odot, A', B')$.
Let $\mathcal{F}_{\mathcal{A}}(\xi_1, \xi_2, S, T)$ be an algorithm obtained by replacing $a'_i = \psi_{i,1}(A'), b'_i = \phi_{i,1}(B')$ with $a_i = \bigoplus\limits_{j=1}^u\psi_{i,1}(s_j), b_i = \bigoplus\limits_{j=1}^v\phi_{i,1}(t_j)$ $1\leq i\leq h$,
$c'_i = \psi_{i,1}(A')\odot \phi_{i,1}(B')$ with $c_i = \xi_2(a_i, b_i)$ for $2\leq i\leq h$,
$c'_1 = \varphi_1(\psi_{1,2}(a'_1)\odot \phi_{1,2}(b'_1))$ with $c_1 = \xi_1(a_1, b_1)$,
and $\sum\limits_{i=1}^hc'_i$ with $\sum\limits_{i=1}^h c_i$.

We present the Algorithm \ref{alg:W5} to compute Formula \ref{e2}.

We will present another algorithms(Algorithm \ref{CC},\ref{DC}), where Algorithm \ref{CC} is obtained from Algorithm \ref{alg:W5} by modifying the recursive exit and Algorithm \ref{DC} is similar with Algorithm \ref{alg:W4}.
In Algorithm \ref{CC}, we call Algorithm \ref{DC} when $BC(s_1,t_1)$ can be computed
within memory $M$.
In next section, we will give the needed memory size of $BC(S,T)$.

\break
\begin{multicols}{2}
\begin{minipage}{6cm}
\begin{algorithm}[H]
\SetAlgoNoLine
\KwIn{S,T}
\eIf{$s_1$ of $S$ is a number instead of a matrix}{
    $re = AC(S,T)$\;
}{
    $re = \mathcal{F}_{A}(BC, BC, S, T)$
}
\Return{re}
\caption{BC}
\label{alg:W5}
\end{algorithm}
\begin{algorithm}[H]\label{CC}
\SetAlgoNoLine
\KwIn{S,T}
\eIf{$BC(s_1, t_1)$ can be computed within memory $M$} {
    $re = DC(S, T)$\;
}{
    $re = \mathcal{F}_{A}(CC, CC, S, T)$
}
\Return{re}
\caption{CC}
\label{alg:W7}
\end{algorithm}
\end{minipage}
$ $

$ $

$ $
\begin{minipage}{6cm}
\begin{algorithm}[H]\label{DC}
\SetAlgoNoLine
\KwIn{S,T}
\eIf{length of $S$ and $T$ equal one}{
    \Return{$BC((S),(T))$}\;
}
{
    Let $(s'_1, s'_2, ..., s'_{p_1})$ be $\psi_{1,2}(S)$\;
    Let $(t'_1, t'_2, ..., t'_{p_1})$ be $\phi_{1,2}(T)$\;
    \For{$i=1$ to $p_1$}{
        Computing $s'_i,t'_i$\;
        $z_i = DC(s'_i,t'_i)$\;
        Add $z_i$ to $re$ as $re = \varphi_1(z_1,z_2,...,z_{p_1})$\;
    }
}
\Return{re}
\caption{DC}
\end{algorithm}
\end{minipage}
\end{multicols}
\subsection{Combine with alternative basis}
It is clear that our method can be combined with Karstadt-Schwartz's method\cite{karstadt2020matrix}.
Let $\gamma_1, \gamma_2, \gamma_3$ be the basis transformation functions.
We give the combined algorithms as following:

\begin{multicols}{2}
\begin{minipage}{6cm}
\begin{algorithm}[H]\label{fin1}
\SetAlgoNoLine
\KwIn{$A\in R^{n\times m}, B\in R^{m\times k}$}
{
$\tilde{A} = \gamma_1(A)$\;
$\tilde{B} = \gamma_2(B)$\;
$\tilde{C} = BC((\tilde{A}),(\tilde{B}))$\;
$C = \gamma_3^{-1}(\tilde{C})$\;
}
\Return{re}
\caption{BC*}
\label{alg:W6}
\end{algorithm}
\end{minipage}
$ $

$ $

$ $
\begin{minipage}{6cm}
\begin{algorithm}[H]\label{fin2}
\SetAlgoNoLine
\KwIn{$A\in R^{n\times m}, B\in R^{m\times k}$}
{
$\tilde{A} = \gamma_1(A)$\;
$\tilde{B} = \gamma_2(B)$\;
$\tilde{C} = CC((\tilde{A}),(\tilde{B}))$\;
$C = \gamma_3^{-1}(\tilde{C})$\;
}
\Return{re}
\caption{CC*}
\label{alg:W6}
\end{algorithm}
\end{minipage}
\end{multicols}
\section{complexity}
Recall that $\varphi_i = \psi_{i,2} = \phi_{i,2}= I ,p_i= q_i = p_{i,1} = p'_{i,1} = 1$ for $2\leq i\leq h$, $h = t - p_1+1$, and $A\in R^{n_0^q\times m_0^q}, B\in R^{m_0^q\times k_0^q}, C\in R^{n_0^q\times k_0^q}$.

For convenient, we assume that there are $\ell_{1,1}, \ell_{2,1}, \ell_{3,1}$ linear operations in $\psi_{1,2}$, $\phi_{1,2}$ and $\varphi_1$ respectively.
Let $\ell_{1,3} = p_{1,1}, \ell_{2,3} = p'_{1,1}, \ell_{3,3} = q_{1}, \ell_{1,4} = m_0n_0, \ell_{2,4} = n_0k_0, \ell_{3,4} = m_0k_0, u_0 = \max\{m_0n_0, n_0k_0, m_0k_0\}, r = \max\{p_{1,1}, p'_{1,1}, q_1\}$.

Let $\mathcal{A}$ be an algorithm to compute $\mathcal{F}(\psi_{1,2}\odot \phi_{1,2}, \odot, A', B')$ where $A'\in R^{m_0\times n_0}, B'\in R^{n_0\times k_0}$ and $C' = \mathcal{F}(\psi_{1,2}\odot \phi_{1,2}, \odot, A', B')$.
And let $\ell_{1,2}$ be total linear operations of $\mathcal{A}$ in computing $a'_i = \psi_{i,1}(A')$ for $1\leq i\leq h$, $\ell_{2,2}$ be total linear operations of $\mathcal{A}$ in computing $b'_i = \phi_{i,1}(B')$ for $1\leq i\leq h$ and $\ell_{3,2}$ be linear operations of $\mathcal{A}$ in computing ${\sum\limits_{i=1}^h}^{\varphi_{0}}c'_i$ where $c'_1 = \psi_{1,2}(a'_1)\odot \phi_{1,2}(b'_1), c'_{2} = a'_i\odot b'_i$ for $2\leq i\leq h$.
We assume that $\mathcal{A}$ needs additional $\alpha_1z_1 + \alpha_2z_2 + \alpha_3z_3$ size of memory to compute
$\mathcal{F}(\psi_{1,2}\odot \phi_{1,2}, \odot, A', B')$ where $z_1, z_2, z_3$ are the size of elements in $A', B'$ and $C'$ respectively.

\begin{observation}\label{ok1}
The arithmetic complexity of Algorithm \ref{fin1}(Algorithm \ref{fin2}) is the sum of arithmetic complexity in Algorithm \ref{alg:W5}(Algorithm \ref{CC}) and basis transformations.
The IO complexity of Algorithm \ref{fin1}(Algorithm \ref{fin2}) is the sum of IO complexity in Algorithm \ref{alg:W5}(Algorithm \ref{CC}) and basis transformations.

\end{observation}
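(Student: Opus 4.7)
The plan is to observe that Algorithms \ref{fin1} and \ref{fin2} both have a strictly sequential three-stage structure: (i) apply the input basis transformations $\tilde{A} = \gamma_1(A)$ and $\tilde{B} = \gamma_2(B)$; (ii) invoke the recursive computation $\tilde{C} = BC((\tilde{A}),(\tilde{B}))$ (resp.\ $\tilde{C} = CC((\tilde{A}),(\tilde{B}))$); (iii) apply the output basis transformation $C = \gamma_3^{-1}(\tilde{C})$. Because these three stages are executed one after the other and share no computation, both the arithmetic counter and the IO counter are additive across them.

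First I would formalise stage additivity. Fix any reasonable machine model in which arithmetic operations and memory transfers are counted per instruction executed; then for any straight-line composition of independent subroutines $P = P_1; P_2; P_3$ the total cost of $P$ equals the sum of the costs of the $P_i$. This is immediate from the definition of the cost measures and does not require any property of the particular transformations or of $BC/CC$. Applying this with $P_1$ = the two basis transformations on the inputs, $P_2$ = the call to $BC$ or $CC$, and $P_3$ = the inverse basis transformation on the output, we obtain the claimed decomposition.

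Next I would check the minor technical point that stages (i) and (iii) are exactly the cost attributed to ``basis transformations'' in the statement (so that nothing is double-counted or omitted), and that stage (ii) is exactly what is measured by the stated complexity of Algorithm \ref{alg:W5} (resp.\ Algorithm \ref{CC}) when invoked on inputs of size matching $\tilde{A}$ and $\tilde{B}$. Since $\gamma_1,\gamma_2,\gamma_3^{-1}$ are applied only at the top level (not inside the recursion that $BC$/$CC$ performs), their contribution is disjoint from the internal recursive cost, so no overlap arises.

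I do not expect a genuine obstacle here: the observation is essentially a bookkeeping statement that the sequential composition used in Algorithms \ref{fin1} and \ref{fin2} makes the costs additive. The only conceptual care needed is to ensure that the IO cost of the inner call $BC(\tilde{A},\tilde{B})$ is measured on the already-transformed inputs (i.e.\ after $\gamma_1,\gamma_2$ have been applied), which is exactly how Algorithm \ref{fin1} is written; this justifies using the complexity of Algorithm \ref{alg:W5} verbatim without any adjustment for the transformed basis.
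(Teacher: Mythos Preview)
Your proposal is correct and matches the paper's treatment: the paper states this as an observation with no accompanying proof, treating it as immediate from the sequential three-stage structure of Algorithms~\ref{fin1} and~\ref{fin2}. Your write-up simply makes explicit the additivity-over-sequential-composition reasoning that the paper leaves implicit.
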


\begin{lemma}\cite{karstadt2020matrix}\label{lemmak1}
Let $A\in R^{n\times n}$ and $\psi:R^{n_0\times n_0} \rightarrow R^{n_0\times n_0}$ be a basis transformation. The arithmetic complexity and IO complexity of $\psi$ on $A$ are $\frac{q}{n_0^2}n^2\log_{n_0}{n}$ and $\frac{3q}{n_0^2}n^2\log_{n_0}(\sqrt{2}\frac{n}{\sqrt{M}})+2M$ respectively where $q$ is the number of linear operations of $\psi$.
\end{lemma}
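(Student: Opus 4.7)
The plan is to prove both bounds by setting up and solving standard divide-and-conquer recurrences that mirror the recursive definition (inherited from the convention $\varphi(A):=\varphi(B)$ where $B$ is the block matrix) by which $\psi$ acts on an $n\times n = n_0^k\times n_0^k$ matrix. At the outer level we view $A$ as an $n_0\times n_0$ arrangement of $(n/n_0)\times (n/n_0)$ sub-blocks, apply the $q$ scalar linear operations of $\psi$ blockwise, and then recursively transform each of the $n_0^2$ resulting sub-blocks.

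For the arithmetic complexity, each of the $q$ blockwise operations costs $(n/n_0)^2$ scalar additions/scalings, so
\begin{equation*}
T(n) = n_0^2\, T(n/n_0) + q\,(n/n_0)^2, \qquad T(n_0)=q.
\end{equation*}
Dividing by $n^2$ gives $T(n)/n^2 = T(n/n_0)/(n/n_0)^2 + q/n_0^2$, so unrolling over the $\log_{n_0}(n)$ levels telescopes to $T(n)=\tfrac{q}{n_0^2}n^2\log_{n_0}(n)$, which is the first claim.

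For the IO complexity I would run the same recursion, but terminate it at the level at which a sub-block first fits in fast memory together with enough workspace (the natural threshold is $n^2\le M/2$, which produces the $\sqrt{2}$ inside the logarithm). At every non-base level each of the $q$ blockwise linear operations reads two operand sub-blocks and writes one result sub-block of size $(n/n_0)^2$, giving a per-level IO cost of $3q(n/n_0)^2$; at the base level the entire remaining transformation is performed inside fast memory, and the slow-memory traffic is bounded by loading the sub-block and writing it back, yielding the additive $2M$ term. This gives
\begin{equation*}
IO(n) = n_0^2\, IO(n/n_0) + 3q\,(n/n_0)^2 \quad\text{while } n>\sqrt{M/2},
\end{equation*}
with base IO at the threshold contributing $2M$. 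Telescoping exactly as in the arithmetic case (but over $\log_{n_0}(\sqrt{2}n/\sqrt{M})$ levels instead of $\log_{n_0}n$ levels) yields the stated $\tfrac{3q}{n_0^2}n^2\log_{n_0}(\sqrt{2}n/\sqrt{M})+2M$.

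The arithmetic recurrence is routine once the blockwise recursion is set up. The delicate point — and the main obstacle — is the IO analysis: one must pick the terminating level so that both the transformed sub-block and the workspace needed to carry out all remaining levels fit in fast memory, and then verify that the constants match (namely the factor $3$ per linear operation from a 2-read/1-write accounting, the factor $\sqrt{2}$ inside the logarithm from the $M/2$ workspace constraint, and the $2M$ additive term from the leaf-level data movement). Since this is a cited result of Karstadt–Schwartz, I would also check that the machine model (two-level memory, counting words moved across the slow/fast boundary) and the convention for reads versus writes match theirs before reading off the coefficients.
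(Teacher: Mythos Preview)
The paper does not supply its own proof of this lemma: it is quoted verbatim from Karstadt--Schwartz \cite{karstadt2020matrix} and used as a black box. Your divide-and-conquer recurrence argument is exactly the standard derivation given there, so your proposal matches the intended proof. One small point to watch when you verify the constants: telescoping the IO recurrence over all leaves actually produces an additive $2n^2$ (total data read once and written once), and the $2M$ in the stated bound is the upper bound coming from the threshold $2n_{\text{base}}^2\le M$ applied at each leaf; make sure your accounting lines up with that convention rather than claiming $2M$ as a single global leaf cost.
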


By Observation \ref{ok1} and Lemma \ref{lemmak1}, we will only study the arithmetic complexity and IO complexity of Algorithms \ref{alg:W5}, \ref{CC}.

For convenient, define a $(h,x)$-instance as $(S=(s_i)_{1\times \ell_{1,3}^x}, T = (t_i)_{1\times \ell_{2,3}^x})$ where $s_i\in R^{n_0^h\times m_0^h}$ for $1\leq i\leq \ell_{1,3}^x$, $t_i\in R^{m_0^h\times k_0^h}$ for $1\leq i\leq \ell_{2,3}^x$.
Specially, we call a $(0,x)$-instance as $x$-instance.
\subsection{Preliminary Of Complexity}

In this subsection, we discuss how much memory size it needs to compute Algorithm \ref{alg:W4} and Algorithm \ref{alg:W5}.
\begin{lemma}\label{IOL0}Algorithm \ref{alg:W4} can compute an $x$-instance within memory $M = \sum\limits_{i=1}^3\frac{\ell_{i,3}^{x+1}}{\ell_{i,3}-1}$.
\end{lemma}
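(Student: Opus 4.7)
The plan is to prove the lemma by induction on $x$, accounting for the peak working memory of Algorithm \ref{alg:W4} one recursive level at a time. Let $M(x)$ denote the peak memory used by $AC$ on any $x$-instance, recalling from the definition of $x$-instance that the inputs are scalar vectors $S$ of length $\ell_{1,3}^x$ and $T$ of length $\ell_{2,3}^x$ and the output has length $\ell_{3,3}^x$.

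For the base case $x=0$, both $S$ and $T$ are singletons and the algorithm returns $S\odot T$, so $M(0)\leq 3$. Since each $\ell_{i,3}\geq 2$, we have $\ell_{i,3}/(\ell_{i,3}-1)\leq 2$, hence $\sum_{i=1}^3 \ell_{i,3}/(\ell_{i,3}-1)\geq 3\geq M(0)$, giving the base case.

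For the inductive step, I would dissect the else-branch. At level $x$ the algorithm must keep (i) the input $S$ (size $\ell_{1,3}^x$), (ii) the input $T$ (size $\ell_{2,3}^x$), and (iii) the accumulator $re$ (size $\ell_{3,3}^x$, since $\varphi_1:R^{1\times p_1}\to R^{1\times q_1}$ lifts block-wise so that $\varphi_1$ applied to a vector of $p_1$ blocks of length $\ell_{3,3}^{x-1}$ produces a vector of length $q_1\cdot\ell_{3,3}^{x-1}=\ell_{3,3}^x$). Inside each iteration $i$, one block $s'_i$ of $\psi_{1,2}(S)$ of length $\ell_{1,3}^{x-1}$ and one block $t'_i$ of $\phi_{1,2}(T)$ of length $\ell_{2,3}^{x-1}$ are computed (from $S$ and $T$ respectively, which must therefore remain intact), forming an $(x-1)$-instance that is then passed to the recursive call $z_i=AC(s'_i,t'_i)$. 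Crucially, the scratch needed for this call is by definition bounded by $M(x-1)$ and already includes storage for $s'_i$, $t'_i$, and $z_i$ themselves, so no extra charge is incurred. After $z_i$ is folded into $re$ via the specified $\varphi_1$-update, all scratch of this iteration may be freed and reused for iteration $i+1$. This yields the recurrence
\[
M(x)\;\leq\;\ell_{1,3}^x+\ell_{2,3}^x+\ell_{3,3}^x+M(x-1).
\]
Applying the inductive hypothesis $M(x-1)\leq \sum_{i=1}^3 \ell_{i,3}^x/(\ell_{i,3}-1)$ and simplifying $\ell_{i,3}^x+\ell_{i,3}^x/(\ell_{i,3}-1)=\ell_{i,3}^{x+1}/(\ell_{i,3}-1)$ termwise closes the induction.

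The only delicate point, and what I expect to be the main thing to articulate carefully, is the argument that scratch memory can be reused across the $p_1$ iterations of the for-loop: because $s'_j,t'_j$ for $j>i$ are linear combinations of the untouched $S,T$, we do not need to retain $s'_i,t'_i,z_i$ after iteration $i$ completes its update to $re$, so the $M(x-1)$ term appears only once in the recurrence rather than $p_1$ times. Everything else is routine arithmetic in the induction.
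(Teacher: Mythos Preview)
Your proof is correct and follows essentially the same approach as the paper's: both argue by induction on $x$, observing that at each level one must store $S$, $T$, and the accumulator $re$ (total $\sum_i \ell_{i,3}^x$), leaving $\sum_i \ell_{i,3}^x/(\ell_{i,3}-1)$ of scratch space which by the inductive hypothesis suffices for the single recursive call whose workspace is reused across the $p_1$ iterations. The only cosmetic difference is that the paper takes $x=1$ as the base case while you take $x=0$; also, in your base case the clause ``$\ell_{i,3}/(\ell_{i,3}-1)\le 2$'' does not support the conclusion $\sum_i \ell_{i,3}/(\ell_{i,3}-1)\ge 3$---you want the lower bound $\ell_{i,3}/(\ell_{i,3}-1)>1$ instead---but this is a harmless slip.
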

\begin{proof}
We prove this by induction on $x$.
When $x=1$, since $\frac{\ell_{i,3}^2}{\ell_{i,3}-1}\geq \ell_{i,3}  + 1$ for $1\leq i \leq 3$, there are space storing $s'_i$ for $1\leq i\leq p_1$ where $(s'_1, s'_2, ..., s'_{p_1}) = \psi_{1,2}(S)$. Similarly, there is space storing $t'_i$ and $s'_i\odot t'_i$ for $1\leq i\leq p_1$. By Algorithm \ref{alg:W4}, it holds when $x=1$.
With assuming that it holds on $x-1$, we prove that it holds on $x$.
By Algorithm \ref{alg:W4}, we need to compute $s'_i,t'_i$, $z_i = AC(s'_i,t'_i)$ and add $z_i$ to $re$ as $re = \varphi_1(z_1,z_2,...,z_{p_1})$  for $1\leq i\leq p_1$.
Since length of $S$ is $\ell_{1,3}^x$, length of $T$ is $\ell_{2,3}^{x}$ and length of $AC(S,T)$ is $\ell_{3,3}^{x}$, there is $\sum\limits_{i=1}^3\frac{\ell_{i,3}^x}{\ell_{i,3}-1}$ memory size after storing $S, T, AC(S,T)$.
By the induction hypothesis, it is enough for computing $AC(s_i,t_i)$ for $1\leq i\leq p_1$.
So it holds on $x$.
This lemma holds.
\end{proof}

Let $\beta_i = \max\{\frac{\alpha_{i}}{\ell_{i,4}-\ell_{i,3}}, \frac{1}{\ell_{i,3}-1}\}, \beta=\sum\limits_{i=1}^3(1+\beta_i)$.
\begin{lemma}\label{IOL1}
Algorithm \ref{alg:W5} can compute a $(h,x)$-instance within memory $M = \sum\limits_{i=1}^3(\ell_{i,4}^h\ell_{i,3}^x+\beta_i\ell_{i,4}^h\ell_{i,3}^x)$.
\end{lemma}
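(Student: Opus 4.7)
The plan is to prove the lemma by induction on $h$.

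For the base case $h = 0$, a $(0,x)$-instance is by definition an $x$-instance in the sense of Lemma \ref{IOL0}, so Algorithm \ref{alg:W5} immediately falls through to the $AC$ branch. Lemma \ref{IOL0} gives the bound $\sum_{i=1}^{3} \frac{\ell_{i,3}^{x+1}}{\ell_{i,3}-1}$. This is dominated by $\sum_{i=1}^{3}(1+\beta_i)\ell_{i,3}^x = \sum_{i=1}^{3}(\ell_{i,4}^0\ell_{i,3}^x + \beta_i \ell_{i,4}^0\ell_{i,3}^x)$ because the definition $\beta_i = \max\{\alpha_i/(\ell_{i,4}-\ell_{i,3}),\, 1/(\ell_{i,3}-1)\}$ forces $\beta_i \geq 1/(\ell_{i,3}-1)$, and therefore $1+\beta_i \geq \ell_{i,3}/(\ell_{i,3}-1)$; multiplying by $\ell_{i,3}^x$ yields the required term-by-term domination.

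For the inductive step $h \geq 1$, I would trace through one unfolding of $\mathcal{F}_{\mathcal{A}}(BC, BC, S, T)$, the non-trivial branch of Algorithm \ref{alg:W5}. The template $\mathcal{A}$ spawns $h = t - p_1 + 1$ subproblems. In the $i=1$ branch we form $a_1 = \bigoplus_{j=1}^{\ell_{1,3}^x}\psi_{1,1}(s_j)$ and $b_1 = \bigoplus_{j=1}^{\ell_{2,3}^x}\phi_{1,1}(t_j)$; since $p_{1,1}\geq 2$ and $p'_{1,1}\geq 2$, the pair $(a_1, b_1)$ is a $(h-1, x+1)$-instance, and the recursive invocation of $BC$ on it is bounded by $\sum_{i=1}^{3}(1+\beta_i)\ell_{i,4}^{h-1}\ell_{i,3}^{x+1}$ via the inductive hypothesis. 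For each $i \geq 2$, $p_{i,1} = p'_{i,1} = 1$, so $(a_i, b_i)$ is a smaller $(h-1, x)$-instance whose scratch space can be laid over the memory freed once the $i = 1$ branch has completed and its output has been accumulated into $re$.

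The verification then reduces to two coordinate-wise inequalities that match the two arguments of the $\max$ in the definition of $\beta_i$. The peak footprint happens during the dominant $i=1$ branch, consisting of the stored $S$, $T$, and $re$ ($\sum_i \ell_{i,4}^h \ell_{i,3}^x$), the additional $\mathcal{A}$-overhead $\sum_i \alpha_i \cdot (\text{per-level block size})$, and the inductive bound $\sum_i (1+\beta_i)\ell_{i,4}^{h-1}\ell_{i,3}^{x+1}$ on the recursion. Requiring this total to fit inside $\sum_i (1+\beta_i)\ell_{i,4}^h\ell_{i,3}^x$ and dividing coordinate by coordinate by $\ell_{i,4}^{h-1}\ell_{i,3}^x$ collapses to a linear inequality in $\beta_i$: the $\mathcal{A}$-overhead side demands $\beta_i \geq \alpha_i/(\ell_{i,4}-\ell_{i,3})$, and the geometric-accumulation side (present because the $i=1$ branch inflates the $x$-index by one, mirroring the base case) demands $\beta_i \geq 1/(\ell_{i,3}-1)$. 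Taking the $\max$ of the two yields precisely the $\beta_i$ that appears in the statement.

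The main obstacle will be the bookkeeping, not any deep idea: correctly modeling which buffers can be overwritten across the $h$ sequential branches so that only the largest subproblem dictates the peak, verifying that storage for $a_1, b_1$ is already subsumed in the $\ell_{i,4}^{h-1}\ell_{i,3}^{x+1}$ input slot of the recursive invocation rather than being double-counted, and showing that the $\alpha_i z_i$ memory model for the primitive algorithm $\mathcal{A}$ lifts under $\mathcal{F}_{\mathcal{A}}$ with $z_i$ scaling to $\ell_{i,4}^{h-1}\ell_{i,3}^x$ (so that the first argument of the $\max$ does the right work at every level). Once these accounting points are pinned down, the inductive inequalities hold by the very definition of $\beta_i$ and the induction closes.
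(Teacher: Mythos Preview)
Your plan is essentially the paper's: induction on $h$, with the base case handled via Lemma~\ref{IOL0} together with $\beta_i \ge 1/(\ell_{i,3}-1)$, and the inductive step by replacing the elementary operations inside the $\mathcal{F}_{\mathcal{A}}$ template with recursive $BC$ calls and bounding the resulting memory increment.

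One small correction to your inductive-step accounting: only the first branch of the $\max$, namely $\beta_i \ge \alpha_i/(\ell_{i,4}-\ell_{i,3})$, is actually needed there. Once you carry out the subtraction you yourself flag as necessary (the space $\sum_i \ell_{i,4}^{h-1}\ell_{i,3}^{x+1}$ already reserved inside the $\mathcal{A}$-template for the simple operation on $(a_1,b_1,c_1)$) and replace it by the inductive bound $\sum_i (1+\beta_i)\ell_{i,4}^{h-1}\ell_{i,3}^{x+1}$, the per-coordinate inequality collapses to
\[
\alpha_i + \beta_i\,\ell_{i,3} \;\le\; \beta_i\,\ell_{i,4},
\]
which is precisely $\beta_i \ge \alpha_i/(\ell_{i,4}-\ell_{i,3})$. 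There is no separate ``geometric-accumulation'' constraint in the inductive step; the second branch $\beta_i \ge 1/(\ell_{i,3}-1)$ is consumed entirely by the base case. This does not affect correctness, since $\beta_i$ is defined as the max of both, but it will simplify your bookkeeping.
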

\begin{proof}
We also prove this by induction on $h$.
When $h=0$, Algorithm \ref{alg:W5} calls Algorithm \ref{alg:W4}.
Since $\sum\limits_{i=1}^3(\ell_{i,3}^x+\beta_i\ell_{i,3}^x)\geq \sum\limits_{i=1}^3\ell_{i,3}^x\frac{\ell_{i,3}}{\ell_{i,3}-1}$, it holds by Lemma \ref{IOL0}.
With assuming that it holds on $h-1$, we prove that it holds on $h$.
Recall that $\mathcal{F}_{\mathcal{A}}(\xi_1,\xi_2,S,T)$ is obtained from $\mathcal{A}$ by replacing $a_i =  \psi_{i,1}(A'), b_i = \phi_{i,1}(B')$ with $a_i = \bigoplus\limits_{j=1}^u\psi_{i,1}(s_j), b_i = \bigoplus\limits_{j=1}^v\phi_{i,1}(t_j)$ for $1\leq i\leq t-p_1+1$, $c_1 = \varphi_1(\psi_{1,2}\odot \phi_{1,2})(a_1,b_1)$ with $c_1 = \xi_1(a_1,b_1)$, and $c_i = a_i\odot b_i$ with $c_i = \xi_2(a_i,b_i)$ for $2\leq i\leq t-p_1+1$.
Since $\mathcal{A}$ needs additional $\alpha_1z_1 + \alpha_2z_2 + \alpha_3z_3$ memory size to compute
$\mathcal{F}(\psi_{1,2}\odot \phi_{1,2}, \odot, A', B')$ where $z_1, z_2, z_3$ are the size of elements in $A', B'$ and $C'$ respectively,
$\mathcal{F}_{\mathcal{A}}(\varphi_1(\psi_{1,2}\odot \phi_{1,2}),\odot,S,T)$ needs additional $\sum\limits_{i=1}^3\alpha_i\ell_{i,4}^{h-1}\ell_{i,3}^{x}$ memory size.
Since $S,T$ and $\mathcal{F}_{\mathcal{A}}(\varphi_1(\psi_{1,2}\odot \phi_{1,2}),\odot,S,T)$ can be stored within $\sum\limits_{i=1}^3\ell_{i,4}^h\ell_{i,3}^x$ memory size, $\mathcal{F}_{\mathcal{A}}(\varphi_1(\psi_{1,2}\odot \phi_{1,2}),\odot,S,T)$ can be computed within $\sum\limits_{i=1}^3(\ell_{i,4}^h\ell_{i,3}^x+\alpha_i\ell_{i,4}^{h-1}\ell_{i,3}^{x})$ memory size.

Comparing $\mathcal{F}_{\mathcal{A}}(\varphi_1(\psi_{1,2}\odot \phi_{1,2}),\odot,S,T)$ and $\mathcal{F}_{\mathcal{A}}(BC,BC,S,T)$, $\mathcal{F}_{\mathcal{A}}(BC,BC,S,T)$ replaces $\varphi_1(\psi_{1,2}\odot \phi_{1,2}),\odot$ operations with $BC$ operation.
Notice that $\varphi_1(\psi_{1,2}\odot \phi_{1,2})(a_1,b_1),a_1\odot b_1$ need at least $\sum\limits_{i=1}^3\ell_{i,4}^{h-1}\ell_{i,3}^{x+1}$ memory size
and $\varphi_1(\psi_{1,2}\odot \phi_{1,2})(a_j,b_j),a_j\odot b_j$ need at least $\sum\limits_{i=1}^3\ell_{i,4}^{h-1}\ell_{i,3}^{x}$ memory size for $2\leq j\leq t-p_1+1$.
Since $BC(a_1,b_1)$ can be computed within $\sum\limits_{i=1}^3(\ell_{i,4}^{h-1}\ell_{i,3}^{x+1}+\beta_i\ell_{i,4}^{h-1}\ell_{i,3}^{x+1})$ memory size and $BC(a_j,b_j)$ can be computed within $\sum\limits_{i=1}^3(\ell_{i,4}^{h-1}\ell_{i,3}^{x}+\beta_i\ell_{i,4}^{h-1}\ell_{i,3}^{x})$ memory size for $2\leq j\leq t-p_1+1$ by induction hypothesis,
$\mathcal{F}_{\mathcal{A}}(BC,BC,S,T)$ can be computed within $\sum\limits_{i=1}^3(\ell_{i,4}^h\ell_{i,3}^x+\alpha_i\ell_{i,4}^{h-1}\ell_{i,3}^{x})
-\sum\limits_{i=1}^3\ell_{i,4}^{h-1}\ell_{i,3}^{x+1}+\sum\limits_{i=1}^3(\ell_{i,4}^{h-1}\ell_{i,3}^{x}+\beta_i\ell_{i,4}^{h-1}\ell_{i,3}^{x})$
memory size.
Since $\sum\limits_{i=1}^3(\ell_{i,4}^h\ell_{i,3}^x+\alpha_i\ell_{i,4}^{h-1}\ell_{i,3}^{x})
-\sum\limits_{i=1}^3\ell_{i,4}^{h-1}\ell_{i,3}^{x+1}+\sum\limits_{i=1}^3(\ell_{i,4}^{h-1}\ell_{i,3}^{x+1}+\beta_i\ell_{i,4}^{h-1}\ell_{i,3}^{x+1})\leq \sum\limits_{i=1}^3(\ell_{i,4}^h\ell_{i,3}^x+\beta_i\ell_{i,4}^h\ell_{i,3}^x)$, this lemma holds.

\end{proof}

Let
$h' = \frac{\ln{M}-\ln{\beta}}{\ln{u_0}}$.
Notice that Algorithm \ref{alg:W5} can be compute a $(h,0)$-instance within memory $M$ when $h\leq h'$.

\subsection{Arithmetic Complexity}

\begin{lemma}\label{a1}
The arithmetic complexity of computing a $x$-instance by Algorithm \ref{alg:W4} is $\mathcal{T}(x) = p_1^x +  \sum\limits_{i=1}^3\ell_{i,1}\sum\limits_{j=0}^{x-1}\ell_{i,3}^{x-1-j}p_1^{j}$.
\end{lemma}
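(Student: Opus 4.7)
The plan is to read off the recursive structure of Algorithm~\ref{alg:W4} on an $x$-instance, convert that into a linear recurrence for $\mathcal{T}(x)$, and then solve the recurrence in closed form.

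First I would handle the base case $x = 0$. A $0$-instance has $S$ and $T$ of length $\ell_{1,3}^0 = \ell_{2,3}^0 = 1$, so the algorithm enters the ``length equals one'' branch and returns $S \odot T$ at a cost of exactly one scalar multiplication. This agrees with $\mathcal{T}(0) = p_1^0 = 1$, since the sum $\sum_{j=0}^{-1}$ is empty.

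For $x \geq 1$, I would account for each of the four contributions inside one invocation of AC. By the block convention introduced at the start of Section~2, whenever a scalar linear map $\psi : R^{1 \times p_0} \to R^{1 \times q_0}$ is applied to a vector of length $p_0^x$, it acts on $p_0$ blocks of length $p_0^{x-1}$, so every one of its scalar linear operations lifts to an operation on length-$p_0^{x-1}$ blocks. Applying this to $\psi_{1,2}$ (which has $\ell_{1,1}$ scalar operations and input size $\ell_{1,3}$) gives cost $\ell_{1,1}\ell_{1,3}^{x-1}$ for computing $(s'_1,\ldots,s'_{p_1})$; likewise, $\phi_{1,2}(T)$ costs $\ell_{2,1}\ell_{2,3}^{x-1}$. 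Each output $s'_i$ has length $\ell_{1,3}^{x-1}$ and each $t'_i$ has length $\ell_{2,3}^{x-1}$, so every recursive call $\mathrm{AC}(s'_i, t'_i)$ is itself an $(x-1)$-instance, contributing $p_1\,\mathcal{T}(x-1)$ in total. Combining $z_1,\ldots,z_{p_1}$ via $\varphi_1$ (which has $\ell_{3,1}$ scalar operations, applied to $p_1$ blocks $z_i$ each of length $\ell_{3,3}^{x-1}$) costs $\ell_{3,1}\ell_{3,3}^{x-1}$. Collecting the four pieces yields
$$\mathcal{T}(x) = p_1\,\mathcal{T}(x-1) + \sum_{i=1}^{3}\ell_{i,1}\,\ell_{i,3}^{x-1}.$$

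Finally I would unwind this first-order linear recurrence starting from $\mathcal{T}(0) = 1$: telescoping $x$ times gives
$$\mathcal{T}(x) = p_1^x + \sum_{j=0}^{x-1} p_1^{j} \sum_{i=1}^{3}\ell_{i,1}\,\ell_{i,3}^{x-1-j} = p_1^x + \sum_{i=1}^{3}\ell_{i,1}\sum_{j=0}^{x-1}\ell_{i,3}^{x-1-j}p_1^{j},$$
which is precisely the claimed closed form (a routine induction on $x$ using the recurrence confirms it rigorously). The whole argument is essentially bookkeeping; the only delicate point is justifying the block-size factor $\ell_{i,3}^{x-1}$ that each scalar linear operation picks up at depth $x$, but this follows immediately from the block convention and poses no real obstacle.
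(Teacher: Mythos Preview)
Your proof is correct and follows essentially the same approach as the paper's own proof: establish the base case $\mathcal{T}(0)=1$, derive the recurrence $\mathcal{T}(x) = p_1\mathcal{T}(x-1) + \sum_{i=1}^{3}\ell_{i,1}\ell_{i,3}^{x-1}$ from the structure of Algorithm~\ref{alg:W4}, and then unwind it to the stated closed form. The only difference is that you justify the $\ell_{i,3}^{x-1}$ block-size factors more carefully than the paper does, which is a welcome addition rather than a departure.
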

\begin{proof}
We prove this by induction on $x$.
When $x=0$, we have $\mathcal{T}(x) = 1$. This lemma holds.
With assuming it holds on $x-1$, we prove that it holds on $x$.
By Algorithm \ref{alg:W4}, we have $\mathcal{T}(x) = p_1\mathcal{T}(x-1) + \sum\limits_{i=1}^3\ell_{i,1}\ell_{i,3}^{x-1}$.
It deduces that $\mathcal{T}(x) = p_1^x + \sum\limits_{i=1}^3\ell_{i,1}\sum\limits_{j=0}^{x-1}\ell_{i,3}^{x-i-1}p_1^{i}$.
This lemma holds.
\end{proof}
Now, denote following equations:

\begin{equation*}
\begin{split}
P_{1,h,x} &= t^{h}\mathcal{T}(x)\\
P_{2,h,x} &= \sum\limits_{i=1}^3\ell_{i,1}\ell_{i,3}^x\sum\limits_{j=0}^{h-1}t^j{(t-p_1+\ell_{i,3})}^{h-1-j}\\
P_{3,h,x} &= \sum\limits_{i=1}^{3}\ell_{i,2}\ell_{i,3}^x\sum\limits_{j=0}^{h-1}\ell_{i,4}^j{(t-p_1+\ell_{i,3})}^{h-1-j}
\end{split}
\end{equation*}
\begin{thm}\label{a2}

The arithmetic complexity of computing a $(h,x)$-instance by Algorithm \ref{alg:W5} is
$\mathcal{T}(h,x) = P_{1,h,x} + P_{2,h,x} + P_{3,h,x}$.
\end{thm}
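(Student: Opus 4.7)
The plan is to prove this by induction on $h$, matching the recursive structure of Algorithm \ref{alg:W5}. For the base case $h=0$, a $(0,x)$-instance is exactly an $x$-instance, so Algorithm \ref{alg:W5} dispatches to Algorithm \ref{alg:W4} and Lemma \ref{a1} gives $\mathcal{T}(0,x)=\mathcal{T}(x)$. The claimed formula reduces to $P_{1,0,x}=t^{0}\mathcal{T}(x)=\mathcal{T}(x)$ with $P_{2,0,x}=P_{3,0,x}=0$ (empty inner sums), so the base case holds.

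For the inductive step, I would dissect one invocation of $\mathcal{F}_{\mathcal{A}}(BC,BC,S,T)$ on a $(h,x)$-instance. It performs: (i) the $\ell_{1,2}$ linear operations of $\mathcal{A}$ to produce $a_i=\bigoplus_{j=1}^{\ell_{1,3}^x}\psi_{i,1}(s_j)$ for each of the $\ell_{1,3}^x$ entries of $S$, each linear operation acting on $n_0^{h-1}\times m_0^{h-1}$ submatrices and hence costing $\ell_{1,4}^{h-1}$ scalar operations; (ii) analogously $\ell_{2,2}\ell_{2,3}^x\ell_{2,4}^{h-1}$ scalar operations for the $b_i$'s; (iii) one recursive call $BC(a_1,b_1)$, which is a $(h-1,x+1)$-instance since $p_{1,1}=\ell_{1,3}$ and $p'_{1,1}=\ell_{2,3}$ inflate the outer lengths; (iv) $t-p_1$ recursive calls $BC(a_i,b_i)$ on $(h-1,x)$-instances (because $p_{i,1}=p'_{i,1}=1$ for $i\ge 2$); and (v) the $\ell_{3,2}\ell_{3,3}^x\ell_{3,4}^{h-1}$ operations implementing $\sum^{\varphi_0}c_i$. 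Together these give the recurrence
\[
\mathcal{T}(h,x) \;=\; \mathcal{T}(h-1,x+1) \;+\; (t-p_1)\,\mathcal{T}(h-1,x) \;+\; \sum_{i=1}^{3}\ell_{i,2}\,\ell_{i,3}^{x}\,\ell_{i,4}^{h-1}.
\]

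Next I would substitute the inductive hypothesis $\mathcal{T}(h-1,\cdot)=P_{1,h-1,\cdot}+P_{2,h-1,\cdot}+P_{3,h-1,\cdot}$ and verify the three pieces separately. For the $P_1$ part, the one-step version of Lemma \ref{a1} gives $\mathcal{T}(x+1)+(t-p_1)\mathcal{T}(x)=t\,\mathcal{T}(x)+\sum_{i}\ell_{i,1}\ell_{i,3}^{x}$, so $P_{1,h-1,x+1}+(t-p_1)P_{1,h-1,x}=P_{1,h,x}+t^{h-1}\sum_i\ell_{i,1}\ell_{i,3}^{x}$. For the $P_2$ part, the telescoping identity $\ell_{i,3}+(t-p_1)=(t-p_1+\ell_{i,3})$ lets us combine $P_{2,h-1,x+1}+(t-p_1)P_{2,h-1,x}=P_{2,h,x}-t^{h-1}\sum_i\ell_{i,1}\ell_{i,3}^{x}$, which exactly cancels the residual from $P_1$. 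The identical shift applied to $P_3$ yields $P_{3,h-1,x+1}+(t-p_1)P_{3,h-1,x}=P_{3,h,x}-\sum_i\ell_{i,2}\ell_{i,3}^{x}\ell_{i,4}^{h-1}$, whose deficit is precisely the additive linear-operation term in the recurrence.

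The main obstacle is the bookkeeping in step (iii)--(iv): correctly recognizing that the single index $i=1$ advances the outer $x$-parameter by one while leaving $h$ reduced by one, whereas the other $t-p_1$ indices preserve $x$, and then choosing consistent per-level scaling factors $\ell_{i,4}^{h-1}$ for the linear-operation counts. Once the recurrence is established, the rest is routine algebra: each of the three inner sums in $P_{1,h,x},P_{2,h,x},P_{3,h,x}$ has been crafted so that summand growth under the shift $(h-1,x+1)\mapsto(h,x)$ absorbs the residual from the preceding piece, so the three identities chain together to yield exactly the stated closed form.
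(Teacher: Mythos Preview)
Your proposal is correct and follows essentially the same approach as the paper: induction on $h$, the same recurrence $\mathcal{T}(h,x)=(t-p_1)\mathcal{T}(h-1,x)+\mathcal{T}(h-1,x+1)+\sum_{i=1}^{3}\ell_{i,2}\ell_{i,3}^{x}\ell_{i,4}^{h-1}$, and the same three-piece verification where the residual $t^{h-1}\sum_i\ell_{i,1}\ell_{i,3}^{x}$ from $P_1$ is absorbed by $P_2$ and the additive linear-operation term is absorbed by $P_3$. Your step-by-step accounting of the recursive calls in $\mathcal{F}_{\mathcal{A}}(BC,BC,S,T)$ is in fact more explicit than the paper's own justification of the recurrence.
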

\begin{proof}
We prove this by induction on $h$. Since Algorithm \ref{alg:W5} calls Algorithm \ref{alg:W4} when $h = 0$,
this lemma holds when $h = 0$.

With assuming it holds on $h-1$, we prove that it holds on $h$.
Notice that $\mathcal{F}_{\mathcal{A}}(BC,BC,S,T)$ will compute $a_i = \bigoplus\limits_{j=1}^{\ell_{1,3}^x}\psi_{i,1}(s_j), b_i = \bigoplus\limits_{j=1}^{\ell_{2,3}^x}\phi_{i,1}(t_j)$, $c_i = BC(a_i,b_i)$ for $1\leq i\leq t - p_1 + 1$ and ${\sum\limits_{i=1}^{t-p_1+1}}^{\varphi_0}c_i$.
Since the length of $a_1,b_1$ are $\ell_{1,3}^{x+1}, \ell_{2,3}^{x+1}$ respectively and the length of $a_i,b_i$ are $\ell_{1,3}^{x}, \ell_{2,3}^{x}$ respectively for $2\leq i\leq t-p_1+1$, the arithmetic complexity of computing $BC(a_i,b_i)$ for $1\leq i\leq t-p_1+1$ is $(t-p_1)\mathcal{T}(h-1,x) + \mathcal{T}(h-1,x+1)$.
And the arithmetic complexity of computing $a_i,b_i, \sum\limits_{i=1}^{t-p_1+1}c_i$ are $\ell_{1,2}\ell_{1,3}^x\ell_{1,4}^{h-1}, \ell_{2,2}\ell_{2,3}^x\ell_{2,4}^{h-1}, \ell_{3,2}\ell_{3,3}^x\ell_{3,4}^{h-1}$ respectively. Then,
$$\mathcal{T}(h,x) = (t-p_1)\mathcal{T}(h-1,x) + \mathcal{T}(h-1,x+1) +
\sum\limits_{i=1}^3\ell_{i,2}\ell_{i,3}^x\ell_{i,4}^{h-1}$$
First, we have
\begin{equation*}
\begin{split}
&(t-p_1)P_{1,h-1,x} + P_{1,h-1,x+1} \\
=& (t-p_1)t^{h-1}\mathcal{T}(x) + t^{h-1}\mathcal{T}(x+1)\\
=& (t-p_1)t^{h-1}\mathcal{T}(x) + p_1t^{h-1}\mathcal{T}(x) + t^{h-1}\sum\limits_{i=1}^3\ell_{i,1}\ell_{i,3}^{x}\\
=& t^h\mathcal{T}(x) + t^{h-1}\sum\limits_{i=1}^3\ell_{i,1}\ell_{i,3}^{x}\\
=& P_{1,h,x} +  t^{h-1}\sum\limits_{i=1}^3\ell_{i,1}\ell_{i,3}^{x}\\
\end{split}
\end{equation*}
Second, we have
\begin{equation*}
\begin{split}
&(t-p_1)P_{2,h-1,x} + P_{2,h-1,x+1} + t^{h-1}\sum\limits_{i=1}^3\ell_{i,1}\ell_{i,3}^{x}\\
=& \sum\limits_{i=1}^3\ell_{i,1}\ell_{i,3}^x\sum\limits_{j=0}^{h-2}t^j{(t-p_1+\ell_{i,3})}^{h-1-j} + t^{h-1} \sum\limits_{i=1}^3\ell_{i,1}\ell_{i,3}^{x}\\
=& P_{2,h,x}
\end{split}
\end{equation*}

Last, we have

\begin{equation*}
\begin{split}
&(t-p_1)P_{3,h-1,x} + P_{3,h-1,x+1} + \sum\limits_{i=1}^3\ell_{i,2}\ell_{i,3}^x\ell_{i,4}^{h-1}\\
=&\sum\limits_{i=1}^3\ell_{i,2}\ell_{i,3}^x\sum\limits_{j=0}^{h-2}\ell_{i,4}^{j}{(t-p_1+\ell_{i,3})}^{h-2-j}
+\sum\limits_{i=1}^3\ell_{i,2}\ell_{i,3}^x\ell_{i,4}^{h-1}\\
=& P_{3,h,x}
\end{split}
\end{equation*}

Then, we have
\begin{equation*}
\begin{split}
\mathcal{T}(h,x) &= (t-p_1)\mathcal{T}(h-1,x) + \mathcal{T}(h-1,x+1) +
\sum\limits_{i=1}^3\ell_{i,2}\ell_{i,3}^x\ell_{i,4}^{h-1}\\
& = (t-p_1)\sum\limits_{i=1}^3P_{i,h-1,x} + \sum\limits_{i=1}^3P_{i,h-1,x+1} +
\sum\limits_{i=1}^3\ell_{i,2}\ell_{i,3}^x\ell_{i,4}^{h-1}\\
& = P_{1,h,x} + t^{h-1}\sum\limits_{i=1}^3\ell_{i,1}\ell_{i,3}^{x} + (t-p_1)\sum\limits_{i=2}^3P_{i,h-1,x} + \sum\limits_{i=2}^3P_{i,h-1,x+1} + \sum\limits_{i=1}^3\ell_{i,2}\ell_{i,3}^x\ell_{i,4}^{h-1}\\
& = P_{1,h,x} + P_{2,h,x} + (t-p_1)P_{3,h-1,x} + P_{3,h-1,x+1} + \sum\limits_{i=1}^3\ell_{i,2}\ell_{i,3}^x\ell_{i,4}^{h-1}\\
& = P_{1,h,x} + P_{2,h,x} + P_{3,h,x}
\end{split}
\end{equation*}
Thus, this lemma holds.
\end{proof}

\begin{cor}\label{c1}
The leading coefficient of arithmetic complexity for Algorithm \ref{alg:W5} with a $(h,0)$-instance is $\sum\limits_{i=1}^3\frac{\ell_{i,1}}{p_1-\ell_{i,3}} + 1$ only depended on $\varphi_1, \phi_{1,1}, \psi_{1,1}$.
\end{cor}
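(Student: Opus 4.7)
The plan is to invoke the closed-form expression from Theorem \ref{a2} at $x=0$ and extract the coefficient of the leading monomial $t^h$ by summing the $t^h$-contributions from $P_{1,h,0}$, $P_{2,h,0}$, and $P_{3,h,0}$ separately, while verifying that all other terms are of strictly lower order in $h$.

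First, I would substitute $x=0$ and read off the trivial contribution from $P_{1,h,0} = t^{h}\mathcal{T}(0) = t^{h}$, which contributes exactly $1$ to the leading coefficient. Next, for $P_{2,h,0} = \sum_{i=1}^{3}\ell_{i,1}\sum_{j=0}^{h-1} t^{j}(t-p_1+\ell_{i,3})^{h-1-j}$, I would apply the geometric identity $\sum_{j=0}^{h-1} a^{j}b^{h-1-j}=\frac{a^{h}-b^{h}}{a-b}$ with $a=t$ and $b_i = t-p_1+\ell_{i,3}$. Since $a-b_i = p_1-\ell_{i,3}$, this yields
\[
P_{2,h,0} \;=\; \sum_{i=1}^{3}\frac{\ell_{i,1}}{p_1-\ell_{i,3}}\bigl(t^{h}-(t-p_1+\ell_{i,3})^{h}\bigr),
\]
so its contribution to the $t^{h}$ coefficient is $\sum_{i=1}^{3}\frac{\ell_{i,1}}{p_1-\ell_{i,3}}$, provided the subtracted term is of lower order.

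The remaining work is to argue that all other terms are asymptotically negligible against $t^{h}$. For the leftover $(t-p_1+\ell_{i,3})^{h}$ from $P_{2,h,0}$, the inequality $\ell_{i,3}<p_1$ (which is the same inequality that makes the denominators positive, and which is required for the decomposition to be nontrivial in the sense of the paper's setup) gives $t-p_1+\ell_{i,3}<t$, so the term is $o(t^{h})$. For $P_{3,h,0} = \sum_{i=1}^{3}\ell_{i,2}\sum_{j=0}^{h-1}\ell_{i,4}^{j}(t-p_1+\ell_{i,3})^{h-1-j}$, both bases are strictly less than $t$: the base $t-p_1+\ell_{i,3}$ by the same inequality above, and $\ell_{i,4}\le u_{0}<t-p_1<t$ by the standing assumption $t-p_1>\max\{m_0n_0,n_0k_0,m_0k_0\}$. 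Hence $P_{3,h,0}$ contributes nothing to the leading coefficient. Collecting the contributions from $P_{1,h,0}$ and $P_{2,h,0}$ then gives $1+\sum_{i=1}^{3}\frac{\ell_{i,1}}{p_1-\ell_{i,3}}$, as claimed.

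The computation is short, so there is no deep obstacle; the main care point is a bookkeeping one, namely justifying that $\ell_{i,3}<p_1$ and $\ell_{i,4}<t$ uniformly in $i$ so that every subleading contribution is genuinely $o(t^{h})$. Finally, to match the quoted dependency ``only depended on $\varphi_1,\phi_{1,1},\psi_{1,1}$'', I would remark that each surviving quantity $\ell_{i,1}$ is the linear-operation count of one of $\psi_{1,2},\phi_{1,2},\varphi_1$ and each $\ell_{i,3}\in\{p_{1,1},p'_{1,1},q_1\}$ is a dimension pinned down by $\psi_{1,1},\phi_{1,1},\varphi_1$, so no other data from the higher-level algorithm $\mathcal{A}$ or from $\psi_{i,1},\phi_{i,1}$ for $i\ge 2$ enters the leading coefficient.
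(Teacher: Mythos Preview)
Your derivation is correct and is exactly what the paper intends: the corollary is stated there without proof, as an immediate consequence of Theorem~\ref{a2}. Your computation---reading off $P_{1,h,0}=t^h$, summing the geometric series in $P_{2,h,0}$ to get $\sum_i \ell_{i,1}/(p_1-\ell_{i,3})\cdot t^h$ plus lower-order terms, and discarding $P_{3,h,0}$ via $\ell_{i,4}\le u_0<t-p_1<t$ and $t-p_1+\ell_{i,3}<t$---is the natural way to extract the leading coefficient.
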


\begin{lemma}

The arithmetic complexity of computing a $(h,x)$-instance by Algorithm \ref{DC} is $\mathcal{B}(h,x) = \sum\limits_{i=1}^3\ell_{i,1}\ell_{i,4}^h\sum\limits_{j=0}^{x-1}\ell_{i,3}^{x-1-j}p_1^{j}
+ p_1^x\mathcal{T}(h,0)$.
\end{lemma}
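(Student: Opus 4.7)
The plan is to prove the formula by induction on $x$, in the same style as Lemma \ref{a1}, with two adaptations: the recursion in Algorithm \ref{DC} bottoms out by invoking Algorithm \ref{alg:W5}'s procedure $BC$ on a $(h,0)$-instance rather than by performing a scalar multiplication, and each linear operation inside $\psi_{1,2},\phi_{1,2},\varphi_1$ is now carried out on matrices of size $\ell_{i,4}^h$ rather than on scalars.

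The base case $x=0$ is immediate: a $(h,0)$-instance has $|S|=|T|=1$, so Algorithm \ref{DC} takes its first branch and returns $BC((S),(T))$, whose cost is $\mathcal{T}(h,0)$ by Theorem \ref{a2}. The sum $\sum_{j=0}^{-1}$ is empty and $p_1^0=1$, so the claimed formula reduces to $\mathcal{T}(h,0)$, as required. For the inductive step I would unroll one level of the recursion on a $(h,x)$-instance: the algorithm evaluates $\psi_{1,2}(S)$, evaluates $\phi_{1,2}(T)$, invokes $DC$ on each of the $p_1$ resulting sub-pairs, and finally combines the outputs via $\varphi_1$. Each sub-pair is a $(h,x-1)$-instance because $\psi_{1,2}:R^{1\times p_{1,1}}\to R^{1\times p_1}$, applied under the paper's substitution convention to a vector of length $\ell_{1,3}^{x}$, outputs $p_1$ subvectors of length $\ell_{1,3}^{x-1}$ (and symmetrically for $\phi_{1,2}$), so the induction hypothesis is applicable to each $DC(s'_i,t'_i)$.

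The central accounting is that the $\ell_{i,1}$ scalar operations inside each of $\psi_{1,2},\phi_{1,2},\varphi_1$ expand into $\ell_{i,1}\cdot\ell_{i,3}^{x-1}$ operations on matrices of size $\ell_{i,4}^h$, contributing $\ell_{i,1}\ell_{i,3}^{x-1}\ell_{i,4}^h$ scalar operations each. Summing over $i$ and adding the cost of the $p_1$ recursive calls yields the recurrence
\[
\mathcal{B}(h,x) = p_1\,\mathcal{B}(h,x-1) + \sum_{i=1}^{3}\ell_{i,1}\ell_{i,3}^{x-1}\ell_{i,4}^{h}.
\]
Substituting the induction hypothesis, the term $p_1\cdot p_1^{x-1}\mathcal{T}(h,0)=p_1^{x}\mathcal{T}(h,0)$ is reproduced directly, and for each $i$ the identity $p_1\sum_{j=0}^{x-2}\ell_{i,3}^{x-2-j}p_1^{j}+\ell_{i,3}^{x-1}=\sum_{j=0}^{x-1}\ell_{i,3}^{x-1-j}p_1^{j}$ assembles the remaining terms into the stated closed form. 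The main obstacle is the per-operation accounting itself, namely confirming that each symbolic linear combination in $\psi_{1,2},\phi_{1,2},\varphi_1$ costs $\ell_{i,3}^{x-1}\ell_{i,4}^{h}$ scalar operations; once that factor is correctly attached, the rest is a routine geometric-sum manipulation.
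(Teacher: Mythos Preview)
Your proposal is correct and follows essentially the same approach as the paper: induction on $x$, with the base case handled by the call to $BC$ and the inductive step yielding the recurrence $\mathcal{B}(h,x)=p_1\mathcal{B}(h,x-1)+\sum_{i=1}^3\ell_{i,1}\ell_{i,4}^h\ell_{i,3}^{x-1}$, then solved by the standard geometric-sum identity. Your write-up is in fact more detailed than the paper's, which states the recurrence and its solution without the per-operation accounting you supply.
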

\begin{proof}
We prove this by induction on $x$.
Since Algorithm \ref{DC} calls Algorithm \ref{alg:W5} when $x=0$, this lemma holds when $x = 0 $.
With assuming it holds on $x-1$, we prove that it holds on $x$.
By Algorithm \ref{DC}, we have $\mathcal{B}(h,x) = p_1\mathcal{B}(h,x-1) + \sum\limits_{i=1}^3\ell_{i,1}\ell_{i,4}^h\ell_{i,3}^{x-1}$.
It deduces that $\mathcal{B}(h,x) = \sum\limits_{i=1}^3\ell_{i,1}\ell_{i,4}^h\sum\limits_{j=0}^{x-1}\ell_{i,3}^{x-j-1}p_1^{j} + p_1^{x}\mathcal{T}(h,0)$.
This lemma holds.
\end{proof}

Let $P'_{1,h+h',x} = t^{h}\mathcal{B}(h',x), P'_{2,h+h',x} = \ell_{i,4}^{h'}P_{2,h,x}. P'_{3,h+h',x} = \ell_{i,4}^{h'}P_{3,h,x}$.

\begin{thm}\label{a4}

The arithmetic complexity of computing a $(h,x)$-instance by Algorithm \ref{CC}
is $\mathcal{C}(h,x) = P'_{1,h,x} + P'_{2,h,x} + P'_{3,h,x}$.
\end{thm}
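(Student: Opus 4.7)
The plan is to prove Theorem \ref{a4} by induction on the excess height $h - h'$, closely paralleling the proof of Theorem \ref{a2}. Algorithm $CC$ differs from Algorithm $BC$ only in its recursive termination: once the input is small enough to fit in fast memory of size $M$ (which by definition of $h'$ happens exactly when the instance has height at most $h'$), $CC$ hands the computation off to $DC$ rather than continuing the $BC$-style recursion. Above the threshold $h'$, the two algorithms share the same recursive skeleton $\mathcal{F}_A(\cdot,\cdot,S,T)$, so the per-level local-cost analysis is unchanged.

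For the base case $h = h'$, Algorithm $CC$ immediately invokes $DC$, so $\mathcal{C}(h',x) = \mathcal{B}(h',x)$. Plugging $h = h'$ into the definitions yields $P'_{1,h',x} = t^{0}\mathcal{B}(h',x) = \mathcal{B}(h',x)$, while $P'_{2,h',x}$ and $P'_{3,h',x}$ vanish because $P_{2,0,x}$ and $P_{3,0,x}$ are empty sums (indexed from $j=0$ to $j=-1$). For the inductive step $h > h'$, since $\mathcal{F}_A(CC,CC,S,T)$ invokes $CC$ once on a $(h-1,x+1)$-instance (the pivot branch) and $t-p_1$ times on $(h-1,x)$-instances (the non-pivot branches), and since the local encoding/summation cost at depth $h$ is $\sum_{i=1}^{3}\ell_{i,2}\ell_{i,3}^{x}\ell_{i,4}^{h-1}$, the same recurrence
\[
\mathcal{C}(h,x) \;=\; (t-p_1)\mathcal{C}(h-1,x) \;+\; \mathcal{C}(h-1,x+1) \;+\; \sum_{i=1}^{3}\ell_{i,2}\ell_{i,3}^{x}\ell_{i,4}^{h-1}
\]
that was derived for $\mathcal{T}$ in Theorem \ref{a2} holds for $\mathcal{C}$. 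By the induction hypothesis, it then suffices to verify that $P'_{1,h,x}+P'_{2,h,x}+P'_{3,h,x}$ satisfies this recurrence, which I would do via three parallel identities mirroring those of Theorem \ref{a2}.

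Concretely, for the $P'_1$ identity I would use the easily-verified relation $\mathcal{B}(h',x+1) = p_1\mathcal{B}(h',x) + \sum_{i=1}^{3}\ell_{i,1}\ell_{i,4}^{h'}\ell_{i,3}^{x}$, which follows directly from the closed form of $\mathcal{B}$, to derive $(t-p_1)P'_{1,h-1,x} + P'_{1,h-1,x+1} = P'_{1,h,x} + t^{h-h'-1}\sum_{i}\ell_{i,1}\ell_{i,4}^{h'}\ell_{i,3}^{x}$. The excess term is absorbed by the $P'_{2}$ recursion, which is nothing but the $P_{2}$ identity rescaled by $\ell_{i,4}^{h'}$ (that scalar pulls through unchanged since it does not depend on the recursion variables). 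Finally, the local cost $\sum_{i}\ell_{i,2}\ell_{i,3}^{x}\ell_{i,4}^{h-1} = \sum_{i}\ell_{i,4}^{h'}\ell_{i,2}\ell_{i,3}^{x}\ell_{i,4}^{h-h'-1}$ supplies precisely the missing $j = h-h'-1$ term in $P'_{3,h,x}$, closing the induction. The main obstacle is purely bookkeeping: one must track carefully that the local cost at recursion depth $h$ uses the true current element size $\ell_{i,4}^{h-1}$ (not a quantity relative to $h'$), and recognize that its factorization $\ell_{i,4}^{h-1} = \ell_{i,4}^{h'}\cdot\ell_{i,4}^{h-h'-1}$ is exactly what makes the $\ell_{i,4}^{h'}$ prefactor in $P'_{3}$ and its shifted summation range align with the incoming local cost, so that the same algebraic chain as in Theorem \ref{a2} carries through verbatim.
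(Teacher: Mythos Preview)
Your proposal is correct and is precisely the argument the paper has in mind: the paper's own proof is the single sentence ``The proof is similar with the proof of Theorem~\ref{a2},'' and what you have written is exactly that similarity spelled out in detail---induction with base case $h=h'$ (where $CC$ delegates to $DC$ and $P'_{2},P'_{3}$ vanish as empty sums), the same recurrence above $h'$, and the three parallel $P'_j$ identities obtained from those of Theorem~\ref{a2} by inserting the $\ell_{i,4}^{h'}$ prefactor and shifting the summation index. Your handling of the key bookkeeping point---factoring $\ell_{i,4}^{h-1}=\ell_{i,4}^{h'}\cdot\ell_{i,4}^{h-h'-1}$ so the local cost slots into $P'_{3}$---is exactly right.
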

\begin{proof}
The proof is similar with the proof of Theorem \ref{a2}.
\end{proof}
\begin{cor}\label{c1}
The leading coefficient of arithmetic complexity for Algorithm \ref{alg:W7} with input $A\in R^{m_0^h\times n_0^h}, B\in R^{n_0^h\times k_0^h}$ is $1 + \frac{P_{3,h',0}}{t^{h'}} +
\sum\limits_{i=1}^3(1-{(\frac{t-p_1+\ell_{i,3}}{t})}^{h'} + {(\frac{\ell_{i,4}}{t})}^{h'})\frac{\ell_{i,1}}{p_1-\ell_{i,3}}
$.

\end{cor}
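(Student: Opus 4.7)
The plan is to start from Theorem \ref{a4}, $\mathcal{C}(h,0) = P'_{1,h,0} + P'_{2,h,0} + P'_{3,h,0}$, and extract the coefficient of the leading monomial $t^h$ from each summand separately. Recall that the total recursion depth $h$ grows with $n$ while $h'$ is a memory-dependent constant, so every $t^{h'}$, $\ell_{i,4}^{h'}$, $(t-p_1+\ell_{i,3})^{h'}$ is a constant with respect to $n$. I read the shorthand $P'_{2,h+h',x} = \ell_{i,4}^{h'}P_{2,h,x}$ (and likewise for $P'_{3}$) as placing the factor $\ell_{i,4}^{h'}$ inside the sum over $i$ in the definition of $P_{2,h,x}$.

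For $P'_{1,h,0}$, I substitute $\mathcal{B}(h',0) = \mathcal{T}(h',0) = t^{h'} + P_{2,h',0} + P_{3,h',0}$, which follows from Theorem \ref{a2} together with $\mathcal{T}(0)=1$. This yields
\begin{equation*}
P'_{1,h,0} = t^h + \frac{P_{2,h',0} + P_{3,h',0}}{t^{h'}}\,t^h,
\end{equation*}
so $P'_{1,h,0}$ contributes $1 + \frac{P_{2,h',0}}{t^{h'}} + \frac{P_{3,h',0}}{t^{h'}}$ to the leading coefficient. Next, I apply the geometric identity
\begin{equation*}
\sum_{j=0}^{h-h'-1} t^j (t-p_1+\ell_{i,3})^{h-h'-1-j} = \frac{t^{h-h'} - (t-p_1+\ell_{i,3})^{h-h'}}{p_1-\ell_{i,3}}
\end{equation*}
inside $P'_{2,h,0}$. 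Since $\ell_{i,3} < p_1$, the $(t-p_1+\ell_{i,3})^{h-h'}$ piece is of strictly smaller order in $n$ than $t^{h-h'}$, so only the $t^{h-h'}$ piece survives, yielding a contribution of $\sum_{i=1}^3 \left(\frac{\ell_{i,4}}{t}\right)^{h'}\frac{\ell_{i,1}}{p_1-\ell_{i,3}}$ to the coefficient of $t^h$. For $P'_{3,h,0}$, each inner-sum term is bounded by $\max(\ell_{i,4},\,t-p_1+\ell_{i,3})^{h-h'}$ and both bases are strictly less than $t$ (using the standing hypothesis $t-p_1 > \max\{\ell_{1,4},\ell_{2,4},\ell_{3,4}\}$ together with $\ell_{i,3}<p_1$), so $P'_{3,h,0}$ contributes nothing to the leading coefficient.

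Finally, I simplify $\frac{P_{2,h',0}}{t^{h'}}$ with the same geometric identity to obtain $\sum_{i=1}^3 \frac{\ell_{i,1}}{p_1-\ell_{i,3}}\left(1-\left(\frac{t-p_1+\ell_{i,3}}{t}\right)^{h'}\right)$, and leave $\frac{P_{3,h',0}}{t^{h'}}$ unreduced (its closed form mixes $\ell_{i,2}$ and $\ell_{i,4}$ and does not collapse as cleanly). Adding the contributions from $P'_{1,h,0}$ and $P'_{2,h,0}$ and collecting terms over $i$ produces exactly the stated leading coefficient. The main obstacle is the order analysis: one must justify uniformly that $(t-p_1+\ell_{i,3})^{h-h'}$ and $\ell_{i,4}^{h-h'}$ are $o(t^{h-h'})$ as $n\to\infty$ with $h'$ fixed, which is where the structural hypotheses $\ell_{i,3}<p_1$ and $\ell_{i,4}<t-p_1<t$ are used; once this is in place, the remaining work is a routine rearrangement of the geometric-sum expressions.
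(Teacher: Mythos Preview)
Your proposal is correct and follows exactly the natural route implied by the paper, which states this corollary without proof immediately after Theorem~\ref{a4}. Your decomposition of $P'_{1,h,0}$ via $\mathcal{B}(h',0)=\mathcal{T}(h',0)=t^{h'}+P_{2,h',0}+P_{3,h',0}$, the geometric-sum evaluation of $P'_{2,h,0}$, and the order argument eliminating $P'_{3,h,0}$ are precisely the computations that unpack the stated leading coefficient; nothing further is needed.
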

\subsection{IO Complexity}

Let $x'= \frac{\ln{M(r-1)}-\ln{3r}}{\ln{r}}$.
Notice that Algorithm \ref{alg:W4} can compute an $x$-instance within $M$ memory when $x\leq x'$ by Lemma \ref{IOL0}.
\begin{lemma}\label{oIOAC}
The IO complexity of computing a $x$-instance with $x\geq x'$ by Algorithm \ref{alg:W4}
is $M(x) = p_1^{x-x'}M + 3\sum\limits_{i=1}^3\ell_{i,1}\ell_{i,3}^{x'}\frac{p_1^{x-x'}-\ell_{i,3}^{x-x'}}{p_1-\ell_{i,3}}$.
\end{lemma}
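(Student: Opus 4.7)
The plan is to proceed by induction on $x - x'$, using the recursive structure of Algorithm \ref{alg:W4} together with the memory-fit threshold already established in Lemma \ref{IOL0}.

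For the base case $x = x'$, observe that the definition $x' = \frac{\ln M(r-1) - \ln 3r}{\ln r}$ rearranges to $\frac{3 r^{x'+1}}{r-1} = M$. Because the function $u \mapsto u^{x'+1}/(u-1)$ is increasing on $[2,\infty)$ and $\ell_{i,3} \leq r$ for $i \in \{1,2,3\}$, we obtain $\sum_{i=1}^{3} \frac{\ell_{i,3}^{x'+1}}{\ell_{i,3}-1} \leq \frac{3 r^{x'+1}}{r-1} = M$. By Lemma \ref{IOL0} the entire $x'$-instance is processed within fast memory, so the IO cost is the single pass required to read the inputs in and write the output out, which is at most $M$. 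This matches the formula at $x = x'$, namely $p_1^{0} M + 0 = M$.

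For the inductive step $x > x'$, I would first establish the recurrence
\begin{equation*}
M(x) \;\leq\; p_{1}\, M(x-1) \;+\; 3 \sum_{i=1}^{3} \ell_{i,1}\, \ell_{i,3}^{\,x-1}.
\end{equation*}
At this level, Algorithm \ref{alg:W4} performs the top-level action of $\psi_{1,2}$, $\phi_{1,2}$, and $\varphi_{1}$, each consisting of $\ell_{i,1}$ scalar linear operations applied elementwise to subvectors of length $\ell_{i,3}^{x-1}$; each such subvector operation performs two reads and one write of an $\ell_{i,3}^{x-1}$-block, which accounts for the $3 \sum_{i} \ell_{i,1} \ell_{i,3}^{x-1}$ term. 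The $p_{1} M(x-1)$ term captures the $p_{1}$ recursive invocations of $AC$ on the $(x-1)$-instances $(s'_{i}, t'_{i})$.

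Closing the induction is then an algebraic calculation: substituting the inductive formula for $M(x-1)$ into the recurrence and writing $\ell_{i,3}^{\,x-1} = \ell_{i,3}^{\,x'} \cdot \frac{(p_{1} - \ell_{i,3})\, \ell_{i,3}^{\,x-1-x'}}{p_{1} - \ell_{i,3}}$ allows the freshly-added $3 \sum_{i} \ell_{i,1} \ell_{i,3}^{\,x-1}$ to combine with $p_{1} \cdot \frac{p_{1}^{x-1-x'} - \ell_{i,3}^{\,x-1-x'}}{p_{1} - \ell_{i,3}}$ inside the sum, producing exactly $\frac{p_{1}^{x-x'} - \ell_{i,3}^{\,x-x'}}{p_{1} - \ell_{i,3}}$. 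The main obstacle is justifying the $3 \ell_{i,1} \ell_{i,3}^{\,x-1}$ IO count at each recursion level: since $x > x'$, no $\ell_{i,3}^{\,x-1}$-size block fits in fast memory, so transfers cannot be amortized across distinct subvector operations and the per-operation read--read--write pattern is genuinely forced; one must also be careful with $\varphi_{1}$, where the accumulation into the single buffer $re$ is actually a sequence of read--modify--writes whose total cost needs to be shown to fit within $3 \ell_{3,1} \ell_{3,3}^{\,x-1}$ rather than requiring an additional additive term.
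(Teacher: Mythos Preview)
Your approach is essentially identical to the paper's: both use the base case $x=x'$ (where Lemma~\ref{IOL0} guarantees the whole instance fits in fast memory, so the IO cost is at most $M$), set up the recurrence $M(x)=p_1M(x-1)+3\sum_{i=1}^3\ell_{i,1}\ell_{i,3}^{x-1}$, and then unroll it to the stated closed form. The paper's proof is terser---it simply asserts the fit-in-memory fact and the recurrence without your read--read--write accounting or the monotonicity justification (the latter, incidentally, needs $x'\geq 1$ to be valid as stated, which the paper also leaves implicit).
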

\begin{proof}
For an $x$-instance with $x<x'$, it can be computed within memory $M$ by Lemma \ref{IOL0}.
Then, when $x\leq x'$, the IO complexity is $\sum\limits_{i=1}^3\ell_{i,3}^x$ less than $M$.
When $x>x'$, $M(x) = p_1M(x-1) + 3\sum\limits_{i=1}^3\ell_{i,1}\ell_{i,3}^{x-1}$.
Thus $M(x) = p_1^{x-x'}M + 3\sum\limits_{i=1}^3\ell_{i,1}\sum\limits_{j=x'}^{x-1}p_1^{x-1-j}\ell_{i,3}^{j}= p_1^{x-x'}M + 3\sum\limits_{i=1}^3\ell_{i,1}\ell_{i,3}^{x'}\frac{p_1^{x-x'}-\ell_{i,3}^{x-x'}}{p_1-\ell_{i,3}}$.
\end{proof}

For convenient, we denote
\begin{equation*}
M(x) =  \left\{
\begin{aligned}
&p_1^{x-x'}M + 3\sum\limits_{i=1}^3\ell_{i,1}\ell_{i,3}^{x'}\frac{p_1^{x-x'}-\ell_{i,3}^{x-x'}}{p_1-\ell_{i,3}}, & x\geq x' \\
&p_1^{x-x'}M, & x < x'
\end{aligned}
\right.
\end{equation*}

Note that $M(x)$ is not the IO complexity of Algorithm \ref{alg:W4} when $x<x'$.

\begin{lemma}
The IO complexity of computing a $(h',x)$-instance by Algorithm \ref{DC}
is $M'(x) = p_1^x\sum\limits_{i=1}^3\ell_{i,4}^{h'} + 3\sum\limits_{i=1}^3\ell_{i,1}\ell_{i,4}^{h'}\sum\limits_{j=0}^{x-1}\ell_{i,3}^{j}p_1^{x-1-j}$.
\end{lemma}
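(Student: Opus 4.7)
The plan is to induct on $x$ with $h'$ fixed, mirroring the structure of Algorithm \ref{DC}, and then to solve the resulting linear recurrence.

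For the base case $x=0$, a $(h',0)$-instance has $S=(s_1)$ and $T=(t_1)$ of length one, so by the first branch of Algorithm \ref{DC} we delegate to $BC((s_1),(t_1))$. Recall $h'=\frac{\ln M-\ln\beta}{\ln u_0}$, which by Lemma \ref{IOL1} guarantees that a $(h',0)$-instance fits entirely in fast memory. Hence the only IO cost is the initial transfer of the inputs $s_1,t_1$ and the final write-back of the result, giving $\sum_{i=1}^{3}\ell_{i,4}^{h'}=M'(0)$, which matches the claimed formula with the empty sum $\sum_{j=0}^{-1}$ contributing nothing.

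For the inductive step, assume the formula holds for $x-1$ and consider a $(h',x)$-instance. Algorithm \ref{DC} executes three linear stages followed by $p_1$ recursive calls: (i) the top-level application of $\psi_{1,2}$ to $S$, which consists of $\ell_{1,1}$ linear operations on sub-vectors of length $\ell_{1,3}^{\,x-1}$ whose entries are matrices of size $\ell_{1,4}^{h'}$; (ii) the analogous application of $\phi_{1,2}$ to $T$, contributing $\ell_{2,1}$ operations on blocks of size $\ell_{2,3}^{\,x-1}\ell_{2,4}^{h'}$; and (iii) the incremental accumulation $re=\varphi_1(z_1,\ldots,z_{p_1})$, contributing $\ell_{3,1}$ operations on blocks of size $\ell_{3,3}^{\,x-1}\ell_{3,4}^{h'}$. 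Each linear operation performed outside of fast memory incurs two reads (for the two operands) and one write (for the result), so each stage contributes $3\ell_{i,1}\ell_{i,4}^{h'}\ell_{i,3}^{\,x-1}$ word-transfers. The $p_1$ recursive invocations each operate on a fresh $(h',x-1)$-instance, so by the induction hypothesis they contribute $p_1 M'(x-1)$ in total. Summing yields the recurrence
\[
M'(x)=p_1 M'(x-1)+3\sum_{i=1}^{3}\ell_{i,1}\ell_{i,4}^{h'}\ell_{i,3}^{\,x-1}.
\]

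To finish, I unroll this linear recurrence. Iterating gives
\[
M'(x)=p_1^{x}M'(0)+3\sum_{i=1}^{3}\ell_{i,1}\ell_{i,4}^{h'}\sum_{j=0}^{x-1}p_1^{\,x-1-j}\ell_{i,3}^{\,j},
\]
and substituting $M'(0)=\sum_{i=1}^{3}\ell_{i,4}^{h'}$ produces exactly the stated closed form. The main delicate point is the inductive step: one has to be sure that the factor of $3$ on each linear-operation term is correct (namely that the sub-vectors $s'_i,t'_i$ and the partial sums contributing to $re$ all live in slow memory when they are written, since they do not collectively fit in $M$ once $x\geq 1$), and that each recursive subproblem really is a full $(h',x-1)$-instance whose cost is captured by the induction hypothesis rather than being overcounted or undercounted due to the structure of the accumulator pattern ``add $z_i$ to $re$'' used in Algorithm \ref{DC}.
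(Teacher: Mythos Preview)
Your proposal is correct and follows essentially the same approach as the paper: establish the base case $M'(0)=\sum_{i=1}^3\ell_{i,4}^{h'}$ via Lemma~\ref{IOL1}, derive the recurrence $M'(x)=p_1M'(x-1)+3\sum_{i=1}^3\ell_{i,1}\ell_{i,3}^{x-1}\ell_{i,4}^{h'}$ from the structure of Algorithm~\ref{DC}, and unroll it. You simply supply more justification for the factor of $3$ and the shape of the recursive subproblems than the paper does.
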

\begin{proof}
By Lemma \ref{IOL1}, $M'(0) = \sum\limits_{i=1}^3\ell_{i,4}^{h'}\leq M$, since Algorithm \ref{DC} calls Algorithm \ref{alg:W5} when $x = 0$.
And $M'(x) = p_1M'(x-1) + 3\sum\limits_{i=1}^3\ell_{i,1}\ell_{i,3}^{x-1}\ell_{i,4}^{h'}$ when $x>0$.
Then $M'(x) = p_1^x\sum\limits_{i=1}^3\ell_{i,4}^{h'} + 3\sum\limits_{i=1}^3\ell_{i,1}\ell_{i,4}^{h'}\sum\limits_{j=0}^{x-1}\ell_{i,2}^{j}p_1^{x-1-j}$.
\end{proof}

\begin{thm}\label{IOA3}

The IO complexity of computing a $(h,x)$-instance by Algorithm \ref{CC} is $M'(h,x) = t^{h-h'}M'(x) +3
\sum\limits_{i=1}^3\ell_{i,3}^x\ell_{i,4}^{h'}\sum\limits_{j=0}^{h-1-h'}(t-p_1+\ell_{i,3})^{h-h'-1-j}(\ell_{i,1}t^{j} + \ell_{i,2}\ell_{i,4}^{j})$.
\end{thm}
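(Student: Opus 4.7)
The plan is to prove Theorem \ref{IOA3} by induction on $h\ge h'$, mirroring the skeleton of the proof of Theorem \ref{a2} but attaching the factor of $3$ that turns each linear operation into an IO trio (two reads plus one write) whenever the current sub-instance exceeds fast memory. For the base case $h=h'$, Lemma \ref{IOL1} guarantees that $BC(s_1,t_1)$ fits in $M$, so the test in Algorithm \ref{CC} passes and the computation is delegated to $DC(S,T)$; the preceding lemma then gives IO cost $M'(x)$. In the claimed closed form the factor $t^{h-h'}$ equals $1$ and the $j$-sum is empty (upper limit $h-1-h'=-1$), leaving exactly $M'(x)$.

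For the inductive step $h>h'$, Algorithm \ref{CC} executes $\mathcal{F}_{\mathcal{A}}(CC,CC,S,T)$. Exactly as in the proof of Theorem \ref{a2}, this breaks into: encoding $a_i=\bigoplus_{j=1}^{\ell_{1,3}^{x}}\psi_{i,1}(s_j)$ and $b_i=\bigoplus_{j=1}^{\ell_{2,3}^{x}}\phi_{i,1}(t_j)$; a single call $CC(a_1,b_1)$ on a $(h-1,x+1)$-instance (since the lengths of $a_1,b_1$ scale up by $\ell_{1,3},\ell_{2,3}$) and $t-p_1$ calls $CC(a_i,b_i)$ on $(h-1,x)$-instances; and the combining step ${\sum\limits_{i=1}^{t-p_1+1}}^{\varphi_0} c_i$. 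Since none of the intermediate blocks fit in memory, each of the $\sum_{i=1}^{3}\ell_{i,2}\ell_{i,3}^{x}\ell_{i,4}^{h-1}$ linear operations at this level contributes three IO words, producing the recurrence
\begin{equation*}
M'(h,x)=(t-p_1)\,M'(h-1,x)+M'(h-1,x+1)+3\sum_{i=1}^{3}\ell_{i,2}\,\ell_{i,3}^{x}\,\ell_{i,4}^{h-1}.
\end{equation*}

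From here the work is algebraic. Substituting the induction hypothesis into both $M'(h-1,x)$ and $M'(h-1,x+1)$, I combine the two $t^{h-1-h'}M'(\cdot)$ contributions via the identity $M'(x+1)=p_1 M'(x)+3\sum_i\ell_{i,1}\ell_{i,3}^{x}\ell_{i,4}^{h'}$ coming from the $DC$-lemma; this yields $t^{h-h'}M'(x)$ together with a leftover $3t^{h-1-h'}\sum_i\ell_{i,1}\ell_{i,3}^{x}\ell_{i,4}^{h'}$. The two remaining double sums merge through $(t-p_1)\ell_{i,3}^{x}+\ell_{i,3}^{x+1}=(t-p_1+\ell_{i,3})\ell_{i,3}^{x}$, which raises the exponent of $(t-p_1+\ell_{i,3})$ from $h-h'-2-j$ to $h-h'-1-j$. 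The leftover $\ell_{i,1}$ term together with the new direct $3\sum_i\ell_{i,2}\ell_{i,3}^{x}\ell_{i,4}^{h-1}$ term are exactly the $j=h-1-h'$ entries of the $\ell_{i,1}t^{j}$ and $\ell_{i,2}\ell_{i,4}^{j}$ parts of the outer sum, extending its range to $0\le j\le h-1-h'$ as stated.

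The main obstacle is not the telescoping---which is a direct analogue of the one in Theorem \ref{a2}---but justifying the factor of $3$ attached to each linear operation at the current level. I would handle this by arguing at the block level exactly as Lemma \ref{oIOAC} does at the scalar level: once $h>h'$, Lemma \ref{IOL1} forces the operands of every linear combination at the present recursion depth to live in slow memory between uses, so each elementary linear step contributes two reads and one write, giving precisely the $3\sum_i\ell_{i,2}\ell_{i,3}^{x}\ell_{i,4}^{h-1}$ term in the recurrence above. With that in hand, the induction goes through verbatim.
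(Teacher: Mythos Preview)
Your proposal is correct and follows essentially the same approach as the paper: establish the base case $h=h'$ via the $DC$-lemma, derive the recurrence $M'(h,x)=(t-p_1)M'(h-1,x)+M'(h-1,x+1)+3\sum_i\ell_{i,2}\ell_{i,3}^{x}\ell_{i,4}^{h-1}$, and solve it by the same telescoping used in Theorem~\ref{a2}. The paper's own proof is much terser---it simply writes down the recurrence and then says ``similarly with the proof of Theorem~\ref{a2}''---so your explicit use of $M'(x+1)=p_1M'(x)+3\sum_i\ell_{i,1}\ell_{i,3}^{x}\ell_{i,4}^{h'}$ and your identification of the $j=h-1-h'$ boundary terms are exactly the details that the paper leaves implicit.
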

\begin{proof}
When $h\leq h'$, Algorithm \ref{CC} calls Algorithm \ref{DC}, then the IO complexity is $M'(x)$.
When $h>h'$, $M'(h,x) = (t-p_1)M'(h-1,x) + M'(h-1,x+1) + 3\sum\limits_{i=1}^{3}\ell_{i,2}\ell_{i,3}^{x}\ell_{i,4}^{h-1}$.
Similarly with the proof of Theorem \ref{a2}, $M'(h,x) = t^{h-h'}M'(x) +3
\sum\limits_{i=1}^3\ell_{i,3}^x\ell_{i,4}^{h'}\sum\limits_{j=0}^{h-1-h'}(t-p_1+\ell_{i,3})^{h-h'-1-j}(\ell_{i,1}t^{j} + \ell_{i,2}\ell_{i,4}^{j})$.
\end{proof}

Let
\begin{equation*}
\begin{split}
Z(h,x) &= \sum\limits_{i=1}^3(\ell_{i,4}^h\ell_{i,3}^x + \beta_i\ell_{i,3}^x\ell_{i,4}^h),\\
M_1(h,x) &= 3\sum\limits_{i=1}^3\ell_{i,1}\ell_{i,3}^x\sum\limits_{j=0}^{h-1}(t-p_1+\ell_{i,3})^{h-1-j}t^j,\\
M_2(h,x) &= 3 \sum\limits_{i=1}^3\ell_{i,1}\ell_{i,3}^x(\frac{\ell_{i,3}}{p_1})^{x'-x}\sum\limits_{j=0}^{h-1}(t-p_1+\ell_{i,3})^{h-1-j}t^j,\\
T_x(h,j)&\leq \left\{
\begin{aligned}
&3\sum\limits_{i=1}^3\ell_{i,2}{h\choose {x-j}}\frac{(t-p_1)^{h-x+j}}{t-p_1+\ell_{i,3}-\ell_{i,4}}\ell_{i,3}^{x}, & Z(0,x)\geq M \\
&3\sum\limits_{i=1}^3\ell_{i,2}{h\choose {x-j}}(\frac{\ell_{i,4}}{t-p_1})^{h'-\frac{\ln{r}}{\ln{u_0}}x}\frac{(t-p_1)^{h-x+j}}{t-p_1+\ell_{i,3}-\ell_{i,4}}\ell_{i,3}^{x}, & Z(0,x)<M\\
\end{aligned}
\right.,\\
M_3(h,x) &= \sum\limits_{i=x}^{h+x}T_{i}(h,x)-
3\sum\limits_{i=1}^3\ell_{i,3}^x\ell_{i,2}\frac{\ell_{i,4}^h}{t-p_1+\ell_{i,3}-\ell_{i,4}},\\
R(h,x)
&= \left\{
\begin{aligned}
&t^hM(x) + M_1(h,x) + M_3(h,x), & x\geq x' \\
&t^hM(x) + M_2(h,x) + M_3(h,x), & x < x'
\end{aligned}
\right..\\
\end{split}
\end{equation*}

\begin{observation}\label{oIO1}
$R(h,x)\geq (t-p_1)R(h-1,x) + R(h-1,x+1) + 3\sum\limits_{i=1}^3\ell_{i,2}\ell_{i,3}^x\ell_{i,4}^{h-1}$.
\end{observation}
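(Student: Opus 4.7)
The plan is to expand both sides of the claimed inequality along the three defining pieces of $R$, namely $t^hM(x)$, the middle term ($M_1(h,x)$ or $M_2(h,x)$, depending on whether $x\geq x'$), and $M_3(h,x)$, and to verify the relation piece-by-piece. The recurrences already established for $M(x)$ in the lemmas preceding this observation, combined with Pascal's identity $\binom{h}{k}=\binom{h-1}{k}+\binom{h-1}{k-1}$, will do most of the work.

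First I would handle the $t^hM(x)+M_1(h,x)$ piece (the case $x\geq x'$; the $M_2$ case is analogous). Writing $t^hM(x)=(t-p_1)t^{h-1}M(x)+p_1t^{h-1}M(x)$ and using the recurrence $M(x+1)=p_1M(x)+3\sum_i\ell_{i,1}\ell_{i,3}^x$ from Lemma 9, the factor $p_1t^{h-1}M(x)$ becomes $t^{h-1}M(x+1)-3t^{h-1}\sum_i\ell_{i,1}\ell_{i,3}^x$. For $M_1$, the geometric-series shape of its closed form gives directly, by factoring out $(t-p_1+\ell_{i,3})$ and separating the top index $j=h-1$, the telescoping identity
\[
M_1(h,x)=(t-p_1)M_1(h-1,x)+M_1(h-1,x+1)+3t^{h-1}\sum_i\ell_{i,1}\ell_{i,3}^x,
\]
and the residual $3t^{h-1}\sum_i\ell_{i,1}\ell_{i,3}^x$ cancels exactly the residual from the $t^hM(x)$ step, leaving the clean equality $t^hM(x)+M_1(h,x)=(t-p_1)[t^{h-1}M(x)+M_1(h-1,x)]+[t^{h-1}M(x+1)+M_1(h-1,x+1)]$.

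The main obstacle is Step 2, the inequality
\[
M_3(h,x)\geq (t-p_1)M_3(h-1,x)+M_3(h-1,x+1)+3\sum_i\ell_{i,2}\ell_{i,3}^x\ell_{i,4}^{h-1}.
\]
Here I would first observe that the constant-offset correction term $3\sum_k\ell_{k,3}^x\ell_{k,2}\ell_{k,4}^h/(t-p_1+\ell_{k,3}-\ell_{k,4})$ obeys the identity $(t-p_1)\ell_{k,4}^{h-1}+\ell_{k,3}\ell_{k,4}^{h-1}-\ell_{k,4}^h=\ell_{k,4}^{h-1}(t-p_1+\ell_{k,3}-\ell_{k,4})$, so the denominator cancels and produces exactly the target additive term $3\sum_i\ell_{i,2}\ell_{i,3}^x\ell_{i,4}^{h-1}$. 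The core combinatorial content is then the termwise comparison $T_i(h,x)\geq (t-p_1)T_i(h-1,x)+T_i(h-1,x+1)$, which follows from Pascal's identity applied to $\binom{h}{i-x}$ inside the closed-form bound on $T$, together with matching the powers $(t-p_1)^{h-i+x}$ on both sides. The index ranges also line up: the sum for $M_3(h-1,x+1)$ runs from $i=x+1$ to $h+x$, which matches the natural telescope of the sum for $M_3(h,x)$ that runs from $i=x$ to $h+x$, with the boundary term $i=x$ absorbed by $(t-p_1)M_3(h-1,x)$.

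Finally I would handle the case split at the piecewise boundary. When $x\geq x'$ or $x+1\leq x'$, both sides use the same $M_j$, and the calculation goes through as above. At the boundary $x=x'-1$, the LHS uses $M_2$ but $R(h-1,x+1)$ on the RHS uses $M_1$; here I would verify that $M_1$ and $M_2$ agree at $x=x'$ in the sense that the factor $(\ell_{i,3}/p_1)^{x'-x}$ in $M_2$ evaluates to $1$ precisely at $x=x'$, so the recursion pattern from Step 1 still closes. The hard part will be Step 2: carefully matching the two sub-cases $Z(0,x)\geq M$ and $Z(0,x)<M$ in the definition of $T$, keeping track of the inequality direction (the $T$-values are defined by an upper bound, so one must take the version valid for the right-hand side and dominate the corresponding piece on the left), and checking that the telescoping of the correction constants yields exactly the additive $3\sum_i\ell_{i,2}\ell_{i,3}^x\ell_{i,4}^{h-1}$ with no slack. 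Once these bookkeeping identities are in place, summing the three pieces yields the stated inequality.
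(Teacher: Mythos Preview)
Your overall decomposition into the three pieces $t^hM(x)$, $M_1/M_2$, and $M_3$ is exactly the route the paper takes, and your treatment of $t^hM(x)+M_1(h,x)$ (for $x\geq x'$) and of the correction term inside $M_3$ is correct. However, you have mislocated where the genuine inequality lives.

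The $M_3$ piece is \emph{not} the obstacle. Your Pascal-identity computation actually gives an \emph{equality}
\[
T_i(h,x) \;=\; (t-p_1)\,T_i(h-1,x)+T_i(h-1,x+1),
\]
because both the binomial $\binom{h}{i-x}=\binom{h-1}{i-x}+\binom{h-1}{i-x-1}$ and the power $(t-p_1)^{h-i+x}$ match exactly. There is also no subcase-matching problem: the case in the definition of $T_x(h,j)$ is governed by the subscript $x$ (via $Z(0,x)$), not by the second argument $j$, so $T_i(h,x)$, $T_i(h-1,x)$ and $T_i(h-1,x+1)$ all fall in the same branch. Combined with the correction-term identity you already noted, this yields $(t-p_1)M_3(h-1,x)+M_3(h-1,x+1)+3\sum_i\ell_{i,2}\ell_{i,3}^x\ell_{i,4}^{h-1}=M_3(h,x)$ with no slack.

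The real work is in the $M_2$ piece, which is \emph{not} analogous to $M_1$. When $x<x'$ one has $M(x+1)=p_1M(x)$ \emph{exactly} (from the piecewise definition $M(x)=p_1^{x-x'}M$), so $(t-p_1)t^{h-1}M(x)+t^{h-1}M(x+1)=t^hM(x)$ with \emph{no} residual $3t^{h-1}\sum_i\ell_{i,1}\ell_{i,3}^x$ to absorb. Hence you cannot rely on a cancellation; you must prove directly that
\[
(t-p_1)M_2(h-1,x)+M_2(h-1,x+1)\ \leq\ M_2(h,x).
\]
Carrying this out (as the paper does) amounts, after factoring $\ell_{i,3}^x(\ell_{i,3}/p_1)^{x'-x}$, to the inequality
\[
(p_1-\ell_{i,3})\sum_{j=0}^{h-2}(t-p_1+\ell_{i,3})^{h-2-j}t^{j}\;=\;t^{h-1}-(t-p_1+\ell_{i,3})^{h-1}\;\leq\;t^{h-1},
\]
which is where the strict inequality in the observation actually comes from. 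Your proposal, by calling the $M_2$ case ``analogous'' to $M_1$, skips precisely this step. Finally, note that $x'$ is generally not an integer, so the boundary case is the interval $x'-1<x<x'$ (where the left side uses $M_2$ but $R(h-1,x+1)$ uses $M_1$), not the single point $x=x'-1$; the verification there again uses the $M_2$-type inequality rather than a pure matching of $M_1$ and $M_2$ at $x=x'$.
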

\begin{proof}
It is clear to check that $T_i(h-1,x+1) + (t-p_1)T_i(h-1,x) = T_i(h,x)$ for $1+x\leq i\leq h+x-1$.
Then, $(t-p_1)\sum\limits_{i=x}^{h+x-1}T_{i}(h-1,x) + \sum\limits_{i=x+1}^{h+x}T_{i}(h-1,x+1) = \sum\limits_{i=x}^{h+x}T_i(h,x)$.
Since $(t-p_1)\sum\limits_{i=1}^3\ell_{i,3}^x\ell_{i,2}\frac{\ell_{i,4}^{h-1}}{t-p_1+\ell_{i,3}-\ell_{i,4}}
+\sum\limits_{i=1}^3\ell_{i,3}^{x+1}\ell_{i,2}\frac{\ell_{i,4}^{h-1}}{t-p_1+\ell_{i,3}-\ell_{i,4}}
-$ $\sum\limits_{i=1}^3\ell_{i,2}\ell_{i,3}^x\ell_{i,4}^{h-1}$
 $=\sum\limits_{i=1}^3\ell_{i,3}^x\ell_{i,2}\frac{\ell_{i,4}^h}{t-p_1+\ell_{i,3}-\ell_{i,4}}$,
$(t-p_1)M_3(h-1,x) + M_3(h-1,x+1) + 3\sum\limits_{i=1}^3\ell_{i,2}\ell_{i,3}^x\ell_{i,4}^{h-1}= M_3(h,x)$.

\textbf{Case 1}: $x\geq x'$.\\
First, we have $(t-p_1)t^{h-1}M(x) + t^{h-1}M(x+1) = t^{h}M(x) + 3t^{h-1}\sum\limits_{i=1}^3\ell_{i,1}\ell_{i,3}^{x-1}$.
Second, we have $(t-p_1)M_1(h-1,x) + M_1(h-1,x+1) + 3t^{h-1}\sum\limits_{i=1}^3\ell_{i,1}\ell_{i,3}^{x-1} = M_1(h,x)$.
Then, we have $R(h,x) =  (t-p_1)R(h-1,x) + R(h-1,x+1) + 3\sum\limits_{i=1}^3\ell_{i,2}\ell_{i,3}^x\ell_{i,4}^{h-1}$.

\textbf{Case 2}: $x<x'$.\\
First, $p_1M(x)= M(x+1)$ for $x\leq x'-1$.
Then, $(t-p_1)t^{h-1}M(x) + t^{h-1}M(x+1)= (t-p_1)t^{h-1}M(x) + p_1t^{h-1}M(x)=t^hM(x)$ for $x\leq x'-1$. Second,

\begin{equation*}
\begin{split}
&(t-p_1)M_2(h-1,x) + M_2(h-1,x+1)\\
\leq& 3\sum\limits_{i=1}^3\ell_{i,1}\ell_{i,3}^x(\frac{\ell_{i,3}}{p_1})^{x'-x-1}((t-p_1)\frac{\ell_{i,3}}{p_1} +\ell_{i,3})\sum\limits_{j=0}^{h-2}(t-p_1+\ell_{i,3})^{h-2-j}t^j\\
\leq & 3\sum\limits_{i=1}^3\ell_{i,1}\ell_{i,3}^x(\frac{\ell_{i,3}}{p_1})^{x'-x-1}((t-p_1)\frac{\ell_{i,3}}{p_1}+\ell_{i,3}\frac{\ell_{i,3}}{p_1}+\ell_{i,3}\frac{p_1 - \ell_{i,3}}{p_1})\sum\limits_{j=0}^{h-2}(t-p_1+\ell_{i,3})^{h-2-j}t^j\\
\leq & 3\sum\limits_{i=1}^3\ell_{i,1}\ell_{i,3}^x(\frac{\ell_{i,3}}{p_1})^{x'-x-1}(\frac{\ell_{i,3}}{p_1}\sum\limits_{j=0}^{h-2}(t-p_1+\ell_{i,3})^{h-1-j}t^j + \ell_{i,3}\frac{p_1 - \ell_{i,3}}{p_1}\sum\limits_{j=0}^{h-2}(t-p_1+\ell_{i,3})^{h-2-j}t^j)\\
\leq &
3\sum\limits_{i=1}^3(\ell_{i,1}\ell_{i,3}^x(\frac{\ell_{i,3}}{p_1})^{x'-x-1}(\frac{\ell_{i,3}}{p_1}\sum\limits_{j=0}^{h-2}(t-p_1+\ell_{i,3})^{h-1-j}t^j + \frac{\ell_{i,3}}{p_1}t^{h-1})\\
=& M_2(h,x)\\
\end{split}
\end{equation*}

Then, $(t-p_1)R(h-1,x)+R(h-1,x+1)+3\sum\limits_{i=1}^3\ell_{i,2}\ell_{i,3}^x\ell_{i,4}^{h-1}\leq R(h,x)$ for $x\leq x'-1$.
It is clear to check that $t^{h-1}M(x+1) + M_1(h-1,x+1) + (t-p_1)(t^{h-1}M(x)+M_2(h-1,x))\leq t^hM(x)+M_2(h,x)$ for $x'>x>x'-1$.
Then, $R(h,x)\geq (t-p_1)R(h-1,x)+R(h-1,x+1)+3\sum\limits_{i=1}^3\ell_{i,2}\ell_{i,3}^x\ell_{i,4}^{h-1}$ for $x<x'$.
So this observation holds.
\end{proof}

\begin{thm}\label{IOL2}
The IO complexity of computing a $(h,x)$-instance which can not computed within memory $M$ by Algorithm \ref{alg:W5}
is $M(h,x)\leq R(h,x)$.
\end{thm}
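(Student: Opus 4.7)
The plan is to induct on $h$, closely mirroring the recursive structure of Algorithm \ref{alg:W5}. For the base case $h=0$, a $(0,x)$-instance has each $s_i$ a $1\times 1$ entry, so Algorithm \ref{alg:W5} takes the if-branch and dispatches to Algorithm \ref{alg:W4}; by Lemma \ref{oIOAC}, its IO cost is exactly $M(x)$. Unfolding $R(0,x)$, the $M_1$ (or $M_2$) contribution vanishes because its index range $\{0,\ldots,-1\}$ is empty, and $M_3(0,x)\geq 0$, so $R(0,x)\geq M(x)$ establishes the base case.

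For the inductive step, suppose $h\geq 1$ and the $(h,x)$-instance does not fit in memory. Algorithm \ref{alg:W5} then executes $\mathcal{F}_{\mathcal{A}}(BC,BC,S,T)$, which I would decompose into three IO contributions: the preprocessing forming the vectors $a_i=\bigoplus_{j=1}^{\ell_{1,3}^x}\psi_{i,1}(s_j)$ and $b_i=\bigoplus_{j=1}^{\ell_{2,3}^x}\phi_{i,1}(t_j)$; the recursive invocations, namely one $BC$ call on a $(h-1,x+1)$-instance together with $t-p_1$ calls on $(h-1,x)$-instances; and the postprocessing assembling ${\sum_{i=1}^{t-p_1+1}}^{\varphi_0} c_i$. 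The preprocessing and postprocessing together contribute at most $3\sum_{i=1}^3 \ell_{i,2}\ell_{i,3}^x\ell_{i,4}^{h-1}$ IO, where the factor $3$ accounts for two reads plus one write per linear operation. Putting this together yields
$$M(h,x)\leq (t-p_1)M(h-1,x)+M(h-1,x+1)+3\sum_{i=1}^3 \ell_{i,2}\ell_{i,3}^x\ell_{i,4}^{h-1}.$$
Applying the induction hypothesis to both $M(h-1,x)$ and $M(h-1,x+1)$, then invoking Observation \ref{oIO1}, which furnishes the matching inequality in the opposite direction for $R$, will immediately give $M(h,x)\leq R(h,x)$.

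The main obstacle I anticipate is handling boundary behavior cleanly. When a sub-instance $(h-1,\cdot)$ already fits in memory, its true IO is only a single pass of size $Z(h-1,\cdot)$ rather than the full recursive formula, and I must check that $R(h-1,\cdot)$ still bounds this trivial cost; this reduces to verifying that the leading term $t^{h-1}M(\cdot)$ together with the nonnegative $M_3$ term majorizes the raw data size at the memory threshold $h'$. A secondary bookkeeping issue is that $R$ is defined piecewise in $x$ at $x=x'$, so the recurrence must be shown to cross this split correctly; this is exactly the Case~1/Case~2 analysis already carried out inside the proof of Observation \ref{oIO1}, so once that is invoked the induction closes without further work.
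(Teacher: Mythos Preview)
Your overall plan---induct on $h$, read off the recurrence from $\mathcal{F}_{\mathcal{A}}(BC,BC,S,T)$, and close it via Observation~\ref{oIO1}---is the same as the paper's. The gap is precisely at the boundary handling you flag as the ``main obstacle,'' and your sketch there does not go through as written.

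What must be shown is $R(h{-}1,x)\geq 0$ whenever the $(h{-}1,x)$-instance fits in memory (the paper accounts the IO of a fitting sub-call as $0$, not as a pass of size $Z$, so nonnegativity of $R$ is exactly what is needed to absorb those calls into the recurrence). You refer to ``the nonnegative $M_3$ term'' as if this were automatic, but $M_3(h,x)$ contains a subtracted term $3\sum_i \ell_{i,2}\ell_{i,3}^x\ell_{i,4}^h/(t-p_1+\ell_{i,3}-\ell_{i,4})$, and establishing $M_3\geq 0$ at the threshold is precisely where the \emph{piecewise} definition of $T_x(h,j)$ earns its keep. The paper's argument runs: from $Z(h,x)=M$ one deduces $h\geq h'-\tfrac{\ln r}{\ln u_0}x$; in the branch $Z(0,x)<M$ the extra factor $(\ell_{i,4}/(t-p_1))^{\,h'-\frac{\ln r}{\ln u_0}x}$ in $T_x(h,x)$ then dominates $(\ell_{i,4}/(t-p_1))^{h}$ because $t-p_1>\ell_{i,4}$, yielding $T_x(h,x)\geq 3\sum_i \ell_{i,2}\ell_{i,3}^x\ell_{i,4}^h/(t-p_1+\ell_{i,3}-\ell_{i,4})$ and hence $M_3(h,x)\geq 0$. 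Without this computation your inductive step cannot swallow the fitting sub-instances, and nothing else in your outline supplies it.

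Organizationally, the paper sidesteps having to deal with fitting sub-instances inside the inductive step altogether: for each $x$ it takes the base case to be the \emph{smallest} $h$ for which $(h,x)$ no longer fits, and verifies the bound there (this is where the $R\geq 0$ work above lives). Once past the base case, $Z(h{-}1,x)\geq M$ automatically, so both $(h{-}1,x)$ and $(h{-}1,x+1)$ are covered by the induction hypothesis and Observation~\ref{oIO1} closes the step with no further case analysis.
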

\begin{proof}
We prove this by induction on $h$.
First, we show that it holds in the base case for each $x$.

$\bullet$ For $x$ with $Z(0,x)\geq M$, the IO complexity is the IO complexity of Algorithm \ref{alg:W4} when $h = 0$, since Algorithm \ref{alg:W5} calls Algorithm \ref{alg:W4} when $h = 0$.
If $x\geq x'$, the base case is the $(0,x)$-instance.
By Lemma \ref{oIOAC}, IO complexity of Algorithm \ref{alg:W4} is $M(x)$.
It holds, since $R(0,x) \geq M(x)$.
If $x<x'$, the $(0,x)$-instance can be computed within memory $M$ by Lemma \ref{IOL0}.
Then, the base case is the $(1,x)$-instance.
Let $S = (s_i)_{1\times \ell_{1,3}^x}, T = (t_i)_{1\times \ell_{2,3}^x}$ be this $(1,x)$-instance and $a_i = \bigoplus\limits_{j=1}^{\ell_{1,3}^x}\psi_{i,1}(s_j),b_i = \bigoplus\limits_{j=1}^{\ell_{2,3}^x}\phi_{i,1}(s_j)$ $1\leq i\leq t-p_1+1$.
Then, IO complexity of $BC(a_i,b_i)$ is $0$ for $2\leq i\leq t-p_1+1$, since $(a_i,b_i)$ is $(0,x)$-instance for $2\leq i\leq t-p_1+1$ and it can be within memory $M$.
If $x+1 < x'$, $BC(a_1,b_1)$ can be computed within memory $M$ by Lemma \ref{IOL0}.
Then, either the IO complexity of $BC(a_1,b_1)$ is less than $R(0,x+1)$(when $x+1\geq x'$), or it is $0$(when $x+1<x'$).
Thus, $M(1,x)\leq (t-p_1)R(0,x) + R(0,x+1) + 3\sum\limits_{i=1}^3\ell_{i,2}\ell_{i,3}^x\ell_{i,4}^{h-1}$, since $R(0,x)\geq 0, R(0,x+1)\geq 0$ for $x$ with $Z(0,x)\geq M$.
What's more, $M(1,x)\leq R(1,x)$ by observation \ref{oIO1}.

$\bullet$ For $x$ with $Z(0,x)< M$, let $h$ satisfy that $Z(h,x) = M$.
The base case is the $(h+1,x)$-instance.
Let $S = (s_i)_{1\times \ell_{1,3}^x}, T = (t_i)_{1\times \ell_{2,3}^x}$ be this $(h+1,x)$-instance and $a_i = \bigoplus\limits_{j=1}^{\ell_{1,3}^x}\psi_{i,1}(s_j),b_i = \bigoplus\limits_{j=1}^{\ell_{2,3}^x}\phi_{i,1}(s_j)$ $1\leq i\leq t-p_1+1$.
Since $Z(h,x) = M$, $BC(a_i,b_i)$ can be computed within memory $M$ for $2\leq i\leq t-p_1+1$ by Lemma \ref{IOL1}.
Then, IO complexity of $BC(a_i,b_i)$ is $0$ for $2\leq i\leq t-p_1+1$.
Since $Z(h,x+1)>M$, the IO complexity of $BC(a_1,b_1)$ is less than $R(h,x+1)$ by the induction hypothesis.
Then, $M(h+1,x)\leq R(h,x+1) + 3\sum\limits_{i=1}^3\ell_{i,2}\ell_{i,3}^x\ell_{i,4}^{h}$.
Since $Z(h,x) = M$, $(3+\sum\limits_{i=1}^3\beta_i)u_0^{h}r^x\geq Z(h,x)=  M$, e.q, $h\geq h'-\frac{\ln{r}}{\ln{u_0}}x$.
Since $t-p_1> \ell_{i,4}$ for $1\leq i\leq 3$, $T_x(h,x)\geq 3\sum\limits_{i=1}^3\ell_{i,3}^x\ell_{i,2}\frac{\ell_{i,4}^h}{t-p_1+\ell_{i,3}-\ell_{i,4}}$. Then, $M_3(h,x)\geq 0$.
Thus, $R(h,x)\geq 0$.
Therefore, $M(h+1,x)\leq (t-p_1)R(h,x) + R(h,x+1) + 3\sum\limits_{i=1}^3\ell_{i,2}\ell_{i,3}^x\ell_{i,4}^{h}$.
By observation \ref{oIO1}, $M(h+1,x)\leq R(h+1,x)$.

Then it holds in the base case.
In other cases, by Algorithm \ref{alg:W5}, $M(h,x)\leq (t-p_1)M(h-1,x) + M(h-1,x+1)+3\sum\limits_{i=1}^3\ell_{i,2}\ell_{i,3}^x\ell_{i,4}^{h-1}$.
Since $(h,x)$-instance is not the base case, $Z(h-1,x)\geq M$. Then, $M(h-1,x)\leq R(h-1,x), M(h-1,x+1)\leq R(h-1,x+1)$ by the induction hypothesis.
By Observation \ref{oIO1}, $M(h,x)\leq R(h,x)$.

\end{proof}

\begin{cor}
Let $m_0=n_0=k_0=t_0$. The IO complexity of Algorithm \ref{alg:W5} with input $A,B\in R^{n\times n}$ is no more than $O(n^{{\log_{t_0}{t}}}M^{1-\log_{r}{p_1}})$.
\end{cor}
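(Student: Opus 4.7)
The plan is to invoke Theorem \ref{IOL2} and then perform an asymptotic term‐by‐term analysis of $R(h,0)$. Since $n=t_0^h$, we have $h=\log_{t_0} n$, so the call on input $(A,B)\in R^{n\times n}\times R^{n\times n}$ is a $(h,0)$-instance. For $M$ large enough (in particular once $M(r-1)\geq 3r$) we have $x'>0$, so $0<x'$ and the relevant branch of $R(h,x)$ is $R(h,0)=t^{h}M(0)+M_2(h,0)+M_3(h,0)$. The target $O\bigl(n^{\log_{t_0}t}M^{1-\log_r p_1}\bigr)$ will be obtained by bounding each of the three summands separately and checking that every one of them is dominated by this quantity.

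First I would handle the leading term $t^{h}M(0)$. Using $x=0<x'$, Lemma \ref{oIOAC}/the definition of $M(x)$ give $M(0)=p_1^{-x'}M$, and substituting $x'=\frac{\ln M(r-1)-\ln 3r}{\ln r}$ yields $p_1^{-x'}=\bigl(\tfrac{3r}{M(r-1)}\bigr)^{\log_r p_1}=\Theta\bigl(M^{-\log_r p_1}\bigr)$. Combined with $t^{h}=n^{\log_{t_0}t}$ this produces $t^{h}M(0)=\Theta\bigl(n^{\log_{t_0}t}M^{1-\log_r p_1}\bigr)$, which is exactly the target order.

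Next I would absorb $M_2(h,0)$ into the same order. The inner sum $\sum_{j=0}^{h-1}(t-p_1+\ell_{i,3})^{h-1-j}t^{j}=\frac{t^{h}-(t-p_1+\ell_{i,3})^{h}}{p_1-\ell_{i,3}}$ is a geometric progression of order $\Theta(t^{h})$ since $\ell_{i,3}<p_1$. The prefactor $(\ell_{i,3}/p_1)^{x'}$ equals $C\cdot M^{\log_r(\ell_{i,3}/p_1)}$, and because $\ell_{i,3}\in\{p_{1,1},p'_{1,1},q_1\}$ is at most $r$ we have $\log_r(\ell_{i,3}/p_1)\leq 1-\log_r p_1$. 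Multiplying gives $M_2(h,0)=O\bigl(n^{\log_{t_0}t}M^{1-\log_r p_1}\bigr)$, as required.

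The main obstacle will be $M_3(h,0)$, since its definition involves a $(h+1)$-term sum $\sum_{i=0}^{h}T_i(h,0)$ with two different bounds for $T_x(h,j)$ depending on whether $Z(0,x)\gtrless M$, i.e. whether $i$ exceeds the threshold $x''=\log_r\!\bigl(M/(3+\sum_k\beta_k)\bigr)$. I would split the sum at $x''$: in the regime $i\geq x''$ the binomial identity $\sum_{i=0}^{h}\binom{h}{i}(t-p_1)^{h-i}\ell_{k,3}^{i}=(t-p_1+\ell_{k,3})^{h}$ controls the contribution by $O\bigl((t-p_1+\ell_{k,3})^{h}\bigr)=o\bigl(n^{\log_{t_0}t}\bigr)$ using $\ell_{k,3}<p_1$. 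In the regime $i<x''$, the extra factor $\bigl(\ell_{k,4}/(t-p_1)\bigr)^{h'-\frac{\ln r}{\ln u_0}i}$ is exactly the damping needed: expanding $h'=\log_{u_0}(M/\beta)$ turns it into a power of $M$ whose exponent is calibrated so that, after summing the binomial weights, the total contribution is again $O\bigl(n^{\log_{t_0}t}M^{1-\log_r p_1}\bigr)$. Finally the subtracted correction $3\sum_k\ell_{k,2}\ell_{k,4}^{h}/(t-p_1+\ell_{k,3}-\ell_{k,4})=O(\ell_{k,4}^{h})=O(n^{2})$ is of strictly lower order because for Strassen-like algorithms $t>t_0^{2}$, hence $\log_{t_0}t>2$. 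Summing the three estimates and applying Theorem \ref{IOL2} yields the claimed bound.
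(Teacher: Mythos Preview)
Your overall structure---invoke Theorem~\ref{IOL2}, use the branch $R(h,0)=t^{h}M(0)+M_2(h,0)+M_3(h,0)$, and bound the three summands separately---is exactly the paper's approach, and your treatment of $t^{h}M(0)$ and $M_2(h,0)$ is correct (in fact more explicit than the paper, which merely declares ``it is clear'').

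The one place you diverge is $M_3(h,0)$, and there you work harder than necessary. You split $\sum_{i=0}^{h}T_i(h,0)$ at the threshold $x''$ where $Z(0,x)=M$ and propose to exploit the damping factor $(\ell_{k,4}/(t-p_1))^{h'-\frac{\ln r}{\ln u_0}i}$ in the low-$i$ regime; your justification of that part (``the exponent is calibrated so that \ldots'') is left vague. The paper sidesteps all of this with a single observation: in the regime $Z(0,x)<M$ the extra factor satisfies $(\ell_{k,4}/(t-p_1))^{h'-\frac{\ln r}{\ln u_0}x}\leq 1$ (the base is $<1$ because $t-p_1>u_0=\ell_{k,4}$, and the exponent is nonnegative precisely when $Z(0,x)<M$), so \emph{both} cases of $T_x(h,0)$ are bounded by the first-case expression
\[
T_x(h,0)\leq 3\sum_{k=1}^{3}\ell_{k,2}\binom{h}{x}\frac{(t-p_1)^{h-x}}{t-p_1+\ell_{k,3}-\ell_{k,4}}\ell_{k,3}^{x}.
\]
One binomial summation then gives $M_3(h,0)\leq 3\sum_{k}\ell_{k,2}\,\dfrac{(t-p_1+\ell_{k,3})^{h}-\ell_{k,4}^{h}}{t-p_1+\ell_{k,3}-\ell_{k,4}}$, which is $O\bigl((t-p_1+\ell_{k,3})^{h}\bigr)=o(n^{\log_{t_0}t})$ since $\ell_{k,3}<p_1$. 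This is strictly stronger than the $O(n^{\log_{t_0}t}M^{1-\log_r p_1})$ you aim for, and it avoids the regime split entirely. Your route would eventually close, but the paper's shortcut is both shorter and yields a cleaner bound on $M_3$.
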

\begin{proof}
By Theorem \ref{IOL2}, we have $M(h,0)\leq t^hM(0) + M_2(h,0) + M_3(h,0)$.
Since $M_3(h,0) = \sum\limits_{i=0}^{h}T_{i}(h,0)-
3\sum\limits_{i=1}^3\ell_{i,3}^x\ell_{i,2}\frac{\ell_{i,4}^h}{t-p_1+\ell_{i,3}-\ell_{i,4}}$ and $T_{x}(h,0)\leq 3\sum\limits_{i=1}^3\ell_{i,2}{h\choose {x}}\frac{(t-p_1)^{h-x}}{t-p_1+\ell_{i,3}-\ell_{i,4}}\ell_{i,3}^{x}$ for $0\leq x\leq h$.
We have $M_3(h,0)\leq 3\sum\limits_{i=1}^3\frac{(t-p_1+\ell_{i,3})^{h} - \ell_{i,4}^h}{t-p_1+\ell_{i,3}-\ell_{i,4}}$.
It is clear that $t^hM(0) + M_2(h,0) = O(n^{{\log_{t_0}{t}}}M^{1-\log_{r}{p_1}})$.
Then, it holds.
\end{proof}
\begin{cor}\label{cor4}
Let $m_0=n_0=k_0=t_0$. The IO complexity of Algorithm \ref{alg:W5} with input $A,B\in R^{n\times n}$ is no more than $$(\frac{r-1}{3r})^{-\frac{\ln{p_1}}{\ln{r}}}n^{{\log_{t_0}{t}}}M^{1-\frac{\ln{p_1}}{\ln{r}}} + 3\sum\limits_{i=1}^3\frac{\ell_{i,1}(n^{{\log_{t_0}{t}}} - n^{{\log_{t_0}{t-p_1+\ell_{i,3}}}})}{p_1 - \ell_{i,3}}(\frac{r-1}{3r})^{\frac{\ln{\ell_{i,3}}-\ln{{p_1}}}{\ln{r}}}M^{\frac{\ln{\ell_{i,3}}-\ln{p_1}}{\ln{r}}}$$
$$+3\sum\limits_{i=1}^3\frac{\ell_{i,2}}{t-p_1+\ell_{i,3}-t_0^2}({(\frac{M}{\beta})}^{1-\frac{\ln{t-p_1}}{2\ln{t_0}}}n^{\log_{t_0}{(t-p_1+\ell_{i,3}(\frac{t_0^2}{t-p_1})^{-\frac{\ln{r}}{2\ln{t_0}}})}}-n^2)$$
\end{cor}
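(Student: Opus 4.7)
The plan is to instantiate Theorem \ref{IOL2} at $(h,x)=(\log_{t_0}n,0)$ and evaluate the three summands of $R(h,0)$ in closed form. Since $x=0<x'$ for $M$ large enough, $R(h,0)=t^hM(0)+M_2(h,0)+M_3(h,0)$, and after plugging in $n=t_0^h$ each piece should match one of the three terms in the stated bound.

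First I will handle the $t^hM(0)$ and $M_2(h,0)$ terms, which are routine once $x'$ is expanded. From $x'=\frac{\ln M(r-1)-\ln 3r}{\ln r}$ one gets $p_1^{-x'}=(\tfrac{r-1}{3r})^{-\ln p_1/\ln r}M^{-\ln p_1/\ln r}$ and $(\ell_{i,3}/p_1)^{x'}=(\tfrac{r-1}{3r})^{(\ln\ell_{i,3}-\ln p_1)/\ln r}M^{(\ln\ell_{i,3}-\ln p_1)/\ln r}$. Combining with $t^h=n^{\log_{t_0}t}$ and the geometric-series identity $\sum_{j=0}^{h-1}(t-p_1+\ell_{i,3})^{h-1-j}t^{j}=\frac{t^h-(t-p_1+\ell_{i,3})^h}{p_1-\ell_{i,3}}$ gives exactly the first two summands of the claim.

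The main effort is evaluating $M_3(h,0)=\sum_{x=0}^{h}T_x(h,0)-3\sum_i\ell_{i,2}\frac{\ell_{i,4}^h}{t-p_1+\ell_{i,3}-\ell_{i,4}}$. Here $T_x(h,0)$ is defined piecewise according to whether $Z(0,x)\ge M$ or not, so I cannot simply apply one formula. The key observation is that the case-2 bound, when formally applied for every $x$, dominates the case-1 bound: the factor $(\ell_{i,4}/(t-p_1))^{h'-(\ln r/\ln u_0)x}$ is at most $1$ precisely when $x\le x''$ (the boundary $Z(0,x)=M$), and strictly greater than $1$ past that point, so using the case-2 expression uniformly yields a valid upper bound on $\sum_xT_x(h,0)$. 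Pulling the factor $(\ell_{i,4}/(t-p_1))^{h'}$ out and applying the binomial identity then collapses the $x$-sum to
\[
(\tfrac{\ell_{i,4}}{t-p_1})^{h'}\bigl(t-p_1+\ell_{i,3}(\tfrac{\ell_{i,4}}{t-p_1})^{-\ln r/\ln u_0}\bigr)^h.
\]
Substituting $\ell_{i,4}=t_0^2$ and $u_0=t_0^2$ turns $(t_0^2/(t-p_1))^{h'}$ into $(M/\beta)^{1-\ln(t-p_1)/(2\ln t_0)}$ and the inner base into $t-p_1+\ell_{i,3}(t_0^2/(t-p_1))^{-\ln r/(2\ln t_0)}$, producing the $n^{\log_{t_0}(\cdots)}$ factor. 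Finally, subtracting $3\sum_i\ell_{i,2}\ell_{i,4}^h/(t-p_1+\ell_{i,3}-\ell_{i,4})=3\sum_i\ell_{i,2}n^2/(t-p_1+\ell_{i,3}-t_0^2)$ yields the stated third summand, and assembling the three pieces completes the bound.

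The hardest step is the uniform use of the case-2 formula: one must verify both that $T_x(h,0)$ is indeed upper-bounded by that expression for every $x\in\{0,\dots,h\}$ (using $\ell_{i,4}<t-p_1$ for monotonicity across the $x=x''$ threshold), and that the resulting binomial-like sum factors cleanly under $m_0=n_0=k_0=t_0$; once that is granted, the rest is algebraic bookkeeping of the exponents of $M$ and $n$.
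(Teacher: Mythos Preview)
Your proposal is correct and follows essentially the same route as the paper: instantiate Theorem~\ref{IOL2} at $(h,0)$, evaluate $t^hM(0)$ and $M_2(h,0)$ via the closed form for $x'$ and the geometric sum, and for $M_3(h,0)$ show that the case-$2$ expression for $T_x(h,0)$ serves as a uniform upper bound so that the binomial theorem collapses the sum. One small imprecision: the factor $(\ell_{i,4}/(t-p_1))^{h'-(\ln r/\ln u_0)x}$ does not cross $1$ \emph{precisely} at the threshold $Z(0,x)=M$; what you actually need (and what the paper tacitly uses) is the one-sided implication $Z(0,x)\ge M\Rightarrow \beta r^x\ge Z(0,x)\ge M\Rightarrow x\ge h'\ln u_0/\ln r\Rightarrow$ the factor is $\ge 1$, which suffices since for case-$2$ values of $x$ you use the case-$2$ formula by definition.
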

\begin{proof}
By Theorem \ref{IOL2}, we have $M(h,0)\leq t^hM(0) + M_2(h,0) + M_3(h,0)$.
First, we have
$$t^hM(0)\leq (\frac{r-1}{3r})^{-\frac{\ln{p_1}}{\ln{r}}}n^{{\log_{t_0}{t}}}M^{1-\frac{\ln{p_1}}{\ln{r}}}$$
and
$$M_2(h,0)\leq3\sum\limits_{i=1}^3\frac{\ell_{i,1}(n^{{\log_{t_0}{t}}} - n^{{\log_{t_0}{t-p_1+\ell_{i,3}}}})}{p_1 - \ell_{i,3}}(\frac{r-1}{3r})^{\frac{\ln{\frac{\ell_{i,3}}{p_1}}}{\ln{r}}}M^{\frac{\ln{\ell_{i,3}}-\ln{p_1}}{\ln{r}}}$$

For $Z(0,x)<M$, we have
$$T_{x}(h,0) = 3\sum\limits_{i=1}^3\ell_{i,2}{h\choose {x}}(\frac{t_0^2}{t-p_1})^{h'}\frac{(t-p_1)^{h-x}}{t-p_1+\ell_{i,3}-t_0^2}(\ell_{i,3}(\frac{t_0^2}{t-p_1})^{-\frac{\ln{r}}{2\ln{t_0}}})^{x}, $$

For $Z(0,x)\geq M$, we have
$$T_{x}(h,0) \leq 3\sum\limits_{i=1}^3\ell_{i,2}{h\choose {x}}(\frac{t_0^2}{t-p_1})^{h'}\frac{(t-p_1)^{h-x}}{t-p_1+\ell_{i,3}-t_0^2}(\ell_{i,3}(\frac{t_0^2}{t-p_1})^{-\frac{\ln{r}}{2\ln{t_0}}})^{x}, $$

Then, we have
$$M_3(h,0)\leq 3\sum\limits_{i=1}^3\frac{\ell_{i,2}}{t-p_1+\ell_{i,3}-t_0^2}({(\frac{M}{\beta})}^{1-\frac{\ln{t-p_1}}{2\ln{t_0}}}n^{\log_{t_0}{(t-p_1+\ell_{i,3}(\frac{t_0^2}{t-p_1})^{-\frac{\ln{r}}{2\ln{t_0}}})}}-n^2)$$
Then, it holds.
\end{proof}
\section{Example}

\begin{lemma}\cite{1973Duality}\label{lle1}
Let $<U_{n_0\times m_0}, V_{m_0\times k_0}, W_{n_0\times k_0}>$ be an encoding/decoding matrix.
Then, $<U\otimes V\otimes W, V\otimes W \otimes U, W\otimes U \otimes V>$ is an encoding-decoding matrix($\otimes$ is the kronecker product).
\end{lemma}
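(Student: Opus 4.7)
The plan is to interpret the encoding/decoding matrices tensorially, exploit the cyclic symmetry of the matrix multiplication tensor, and compose three rank-$t$ decompositions via tensor product.

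First, I would recast the hypothesis as a tensor decomposition. An encoding/decoding triple $\langle U,V,W\rangle$ for $\langle n_0,m_0,k_0;t\rangle$ is exactly a rank-$t$ decomposition of the matrix multiplication tensor $T_{n_0,m_0,k_0} = \sum_{i,j,k} e_{ij}\otimes e_{jk}\otimes e_{ki}$, namely
$$T_{n_0,m_0,k_0} = \sum_{r=1}^{t} u_r \otimes v_r \otimes w_r,$$
where $u_r,v_r,w_r$ are the $r$-th rows of $U,V,W$ (suitably reshaped as $n_0\times m_0$, $m_0\times k_0$, $n_0\times k_0$ matrices). Correctness of the bilinear identity $f(x,y)=W^{T}((Ux)\odot(Vy))$ is equivalent to this trilinear identity.

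Second, I would invoke the cyclic symmetry of the matrix multiplication tensor: since $\operatorname{tr}(CAB)=\operatorname{tr}(ABC)=\operatorname{tr}(BCA)$, relabelling the three index sets by a cyclic shift identifies $T_{n_0,m_0,k_0}$ with $T_{m_0,k_0,n_0}$ and $T_{k_0,n_0,m_0}$ up to a permutation of factors. Applying this relabelling to the given decomposition yields rank-$t$ decompositions
$$T_{m_0,k_0,n_0} = \sum_{r=1}^{t} v_r\otimes w_r\otimes u_r,\qquad T_{k_0,n_0,m_0} = \sum_{r=1}^{t} w_r\otimes u_r\otimes v_r.$$

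Third, I would use that the matrix multiplication tensor is multiplicative under tensor product, $T_{n_1 n_2, m_1 m_2, k_1 k_2} = T_{n_1,m_1,k_1}\otimes T_{n_2,m_2,k_2}$. Setting $N=n_0 m_0 k_0$ and applying this twice gives
$$T_{N,N,N} = T_{n_0,m_0,k_0}\otimes T_{m_0,k_0,n_0}\otimes T_{k_0,n_0,m_0}.$$
Tensoring the three decompositions from the previous step produces a rank-$t^3$ decomposition of $T_{N,N,N}$ whose rank-one summands, indexed by $(r_1,r_2,r_3)$, are $(u_{r_1}\otimes v_{r_2}\otimes w_{r_3})\otimes (v_{r_1}\otimes w_{r_2}\otimes u_{r_3})\otimes (w_{r_1}\otimes u_{r_2}\otimes v_{r_3})$. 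Stacking these summands into three matrices of size $t^3\times N^2$ yields precisely $U\otimes V\otimes W$, $V\otimes W\otimes U$, and $W\otimes U\otimes V$, completing the identification as an encoding/decoding triple for $\langle N,N,N;t^3\rangle$.

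The main obstacle is bookkeeping: one must verify that the index orderings of the three cyclically shifted decompositions combine correctly under the tensor product so that stacking the rank-one factors into matrices really reproduces the Kronecker products in the claimed order. This is routine once one fixes a consistent identification between tensors in $R^{n_0m_0}\otimes R^{m_0k_0}\otimes R^{n_0k_0}$ and row-indexed matrices, but it is where all the cyclic shifts have to line up.
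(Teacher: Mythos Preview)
The paper does not actually prove this lemma; it is quoted from \cite{1973Duality} and used as a black box. Your proposal is a correct proof and follows the standard argument: interpret $\langle U,V,W\rangle$ as a rank-$t$ decomposition of the matrix multiplication tensor, use the cyclic symmetry of $\operatorname{tr}(ABC)$ to obtain the two rotated decompositions, and then take the threefold tensor product using multiplicativity of the matrix multiplication tensor. The only delicate point, as you note, is checking that the Kronecker ordering matches the index ordering of $T_{N,N,N}$, which is a routine verification once the identifications are fixed. There is nothing to compare against in the paper itself.
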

Thus, we only need to give the basis transformations and algebra decomposition for $<U,V,W>$.
Now, we give an example of $<3,3,3;23>$-algorithm.

\textbf{Basis Transformations:}
In Appendix \textrm{I}, we give basis transformation matrices $\eta_1, \eta_2, \eta_3$ of the encoding/deconding matrices $<U_{23\times 9}, V_{23\times 9}, W_{23\times 9}>$.

\textbf{Algebra Decomposition:}
Let $U' = U\eta_1^{-1}, V' = V\eta_2^{-1}, W' = W{\eta_3}^{-1}$, $\psi'_1(A) = U'x, \psi'_2(B) = V'y, \psi'_3(C) = W'z, \phi'_1(B) = V'y, \phi'_2(C) = W'z, \phi'_3(A) = U'x, \varphi'_1(z) = W'^{T}z, \varphi'_2(x) = U'^{T}x, \varphi'_3(y) = V'^{T}y$ where $A = (a_{i,j})_{3\times 3}, B = (a_{i,j})_{3\times 3}, C = (c_{i,j})_{3\times 3}, x = (a_{i,j})_{1\times 9}, y = (b_{i,j})_{1\times 9}, z = (c_{i,j})_{1\times 9}$.

In Appendix \textrm{I}, we give the corresponding matrices of linear maps $\varphi'_{j,0}$ and $\ell_{j,i}', \phi_{j,i,1}', \phi_{j,i,2}',$ $\psi_{j,i,1}', \psi_{j,i,2}', \varphi'_{j,i}$ for $1\leq j\leq 3, 1\leq i\leq 22$ which satisfy that $\varphi'_{j} = {\sum\limits_{i=1}^{22}}^{\varphi'_{j,0}}\varphi'_{j,i}\ell'_{j,i}$ and
$\psi'_{j,i,2}\psi'_{j,i,1} = \ell'_{j,i}\psi'_j$, $\phi'_{j,i,2}\phi'_{j,i,1} = \ell'_{j,i}\phi'_j$, for $1\leq j\leq 3, 1\leq i\leq 22$, and specially, $\phi'_{j,i,2} = \psi'_{j,i,2} = \varphi'_{j,i} = I$ for $1\leq j\leq 3, 2\leq i\leq 22$.
Denote $\psi'_{j,i} = \ell'_{j,i}\psi'_j, \phi'_{j,i} = \ell'_{j,i}\phi'_j$ for $1\leq j\leq 3, 1\leq i\leq 22$.

\begin{algorithm}[t]\label{tmp}
\SetAlgoNoLine
\KwIn{S,T,level}
\eIf{$level==3$}{
   \Return{$ {\prod\limits_{i=1}^3}^{\circ}\varphi_{i,1}'({\prod\limits_{j=3}^1}^{\ast}\psi_{j,1,2}'(S)\odot{\prod\limits_{j=3}^1}^{\ast}\phi_{j,1,2}'(T))$}
}
{
    Let $R = \textbf{0}$\;
    \For{$i=2$ to $22$}{
        Compute $s' = \psi_{level+1,i,1}'(s)$ for $s\in S$\;
        Compute $t' = \psi_{level+1,i,1}'(t)$ for $t\in T$\;
        Denote $S' = (s'), T' = (t')$\;
        \If{$level\geq 1$}{
        Compute $S' = {\prod\limits_{j=level}^{1}}^{\ast}\psi_{j,1,2}'(S'), T' = {\prod\limits_{j=level}^{1}}^{\ast}\psi_{j,1,2}'(T')$\;}
        \If{$level\leq 1$}{
        Compute $c' = {\prod\limits_{k=level+2}^{3}}^{\circ}\varphi'_{k}({\prod\limits_{j=3}^{level}}^{\ast}\psi'_{j}(s')\odot{\prod\limits_{j=3}^{level}}^{\ast}\phi'_{j}(t'))$\;}
        \Else{
            $c' = s'\odot t'$\;
        }
        Denote $C' = (c')$\;
        \If{$level\geq 1$}{$C'= {\prod\limits_{j=1}^{level}}^{\circ}\varphi_{j,1}'(C)$\;}
        Compute $C'$ and add it to $R$ as $\varphi_{level+1,0}'$\;

     }
     Compute $s' = \psi_{level+1,1,1}'(s)$ for $s\in S$\;
     Compute $t' = \psi_{level+1,1,1}'(t)$ for $t\in T$\;
     Denote $S' = (s'), T' = (t')$\;
     Compute $\mathcal{A}(S',T', level+1)$ and add it to $R$ as $\varphi_{level+1,0}$\;
}
\Return{R}
\caption{$\mathcal{A}$}
\label{alg:W9}
\end{algorithm}

\textbf{Combine Algebra Decomposition:} We get an algebra decomposition of $<U'\otimes V'\otimes W', V'\otimes W' \otimes U', W'\otimes U' \otimes V'>$ by combining $\varphi'_{j,0}$ and $\ell_{j,i}', \phi_{j,i,1}', \phi_{j,i,2}', \psi_{j,i,1}', \psi_{j,i,2}', \varphi'_{j,i}$ for $1\leq j\leq 3, 1\leq i\leq 22$.

Let $\psi = \psi'_3 \ast \psi'_2 \ast \psi'_1, \phi = \phi'_3 \ast \phi'_2 \ast \phi'_1, \varphi = \varphi'_1\circ \varphi'_2\circ \varphi'_3$, $\psi_{i,2} = \phi_{i,2} = \varphi_{i} = I$ for $2\leq i\leq 12160$,
$\ell_{1} = \ell_{3,1}\ast\ell_{2,1}\ast\ell_{1,1}$, $\psi_{1,i} = \psi_{3,1,i}'\ast \psi_{2,1,i}'\ast\psi_{1,1,i}', \phi_{1,i} = \phi_{3,1,i}'\ast \phi_{2,1,i}'\ast\phi_{1,1,i}'$ for $1\leq i\leq 2$,
and $\psi_{i,1}, \phi_{i,1}$ be other components of $\psi,\phi$ for $2\leq i\leq 12160$ respectively, $\varphi_{0},\ell_{i}$ be the corresponding linear maps for $2\leq i\leq 12160$. We give them as following in detail:

Let $\mathcal{E}_i$ be an interception map with returning the $i$-th element,
$\ell_{4j+i-7} = \mathcal{E}_i\ast \ell_{2,1}\ast\ell_{1,1}, \psi_{4j+i-7,1} = \mathcal{E}_i\psi'_{3,j}\ast \psi'_{2,1}\ast\psi'_{1,1}, \phi_{i,1} = \phi'_{3,i}\ast \phi'_{2,1}\ast\phi'_{1,1}$ for $1\leq i\leq 4, 2\leq j\leq 22$,
$\ell_{46j+i-7} = \mathcal{E}_i\ast \ell_{2,j}\ast\ell_{1,1},\psi_{46j+i-7,1} = \mathcal{E}_i\psi'_3\ast \psi'_{2,j}\ast\psi'_{1,1}, \phi_{23j+i-24,1} = \mathcal{E}_i\phi'_3\ast \phi'_{2,j}\ast\phi'_{1,1}$ for $1\leq i\leq 46,2\leq j\leq 22$,
$\ell_{529j+i-7} = \mathcal{E}_i\ast \ell_{1,j}, \psi_{529j+i-7,1} = \mathcal{E}_i\psi'_3\ast \psi'_{2}\ast\psi'_{1,j}, \phi_{529j+i+505,1} = \mathcal{E}_i\phi'_3\ast \phi'_{2,j}\ast\phi'_{1,1}$ for $1\leq i\leq 529, 2\leq j\leq 22$,
and $c_1= d_1$, $c_i = \varphi'_1\circ \varphi'_2(\bigoplus\limits_{j=1}^4d_{i*4+j-7})$ for $2\leq i\leq 22$, $b_1 = {\sum\limits_{i=1}^{22}}^{\varphi_{3,0}}c_i$, $b_i = \varphi'_{1,1}(\bigoplus\limits_{j=1}^2\varphi'_{3}(\bigoplus\limits_{k=-29+23j+46i}^{-7+23j+46i}d_k))$ for $2\leq i\leq 22$,
$a_1 = {\sum\limits_{i=1}^{22}}^{\varphi'_{2,0}}b_i$,
$a_i = \varphi'_{2}\circ\varphi'_{3}(\bigoplus\limits_{j=-6+529*i}^{522+529*i}d_j)$ for $2\leq i\leq 22$,
${\sum\limits_{i=1}^{12160}}^{\varphi_0}d_i = {\sum\limits_{i=1}^{22}}^{\varphi_{1,0}'}a_i$.

Note that $\varphi = {\sum\limits_{i=1}^{12160}}^{\varphi_0}\varphi_i\ell_{i}$ and
$\psi_{i,2}\psi_{i,1} = \ell_{i}\psi$, $\phi_{i,2}\phi_{i,1} = \ell_{i}\phi$ for $1\leq i\leq 12160$(This is the assumption when we design the algorithm). Notice that giving $\varphi'_{j,0}$ and $\ell_{j,i}', \phi_{j,i,1}', \phi_{j,i,2}', \psi_{j,i,1}',$ $\psi_{j,i,2}', \varphi'_{j,i}$ for $1\leq j\leq 3$ is enough.
Then, in Appendix \textrm{II}, we present the $\varphi'_{j,0}$ and $\ell_{j,i}', \phi_{j,i,1}', \phi_{j,i,2}', \psi_{j,i,1}', \psi_{j,i,2}', \varphi'_{j,i}$ matrices with $1\leq j\leq 3, 1\leq i\leq t-p_1+1$ for other $<n,m,k;t>$-algorithms.

\textbf{Algorithm} $\mathcal{A}$: We give Algorithm $\mathcal{A}$(Algorithm 7)
to compute $\mathcal{F}(\varphi_{1}(\psi_{1,2}\odot\phi_{1,2}),\odot,A',B')$ where $\mathcal{A}((A'), (B'), 0)$ equals $\mathcal{F}(\varphi_{1}(\psi_{1,2}\odot\phi_{1,2}),\odot,A',B')$.
Notice that $\mathcal{A}$ depends on $\psi'_{i}, \phi'_{i}, \varphi'_{i}, $ $\varphi'_{i,0},$ $\psi'_{i,1,j}, \phi'_{i,1,j}$ for $q\leq i\leq 3, 1\leq j\leq 2$ and $\psi'_{i,j,1}, \phi'_{i,j,1}$ for $1\leq i\leq 3, $ $1\leq j\leq t-p_1+1$.
Then, we don't provide $\mathcal{A}$ in Appendix \textrm{II} for other $<n,m,k;t>$-algorithms.

\textbf{Needed Memory of Algorithm} $\mathcal{A}$: Assume that $\psi'_{i,1,1}(x), \phi'_{i,1,1}(x), \varphi'_{i,1,1}$ gets $\lambda_{i,1},\lambda_{i,2},$ $\lambda_{i,3}$ length vector for $1\leq i\leq 3$ respectively.
Let $u_1 = n_0m_0, u_2 = m_0k_0, u_3 = n_0k_0$.
Notice that lines 4-22 can be computed in $\sum\limits_{i=1}^3(u_1u_2u_2z_i+\frac{u_1u_2u_3}{u_i}z_i+z_i) + u_3z_1+u_1z_2+u_2z_3$ memory size when $level = 0$, Line 24 needs additional $\lambda_{1,1}u_2u_3z_1$ memory size when $level = 0$.
Since $S$ will not be used after Line 24, line 25 needs additional $\max\{0,\lambda_{1,2}u_1u_3z_2 - u_1u_2u_3z_1\}$ memory size when $level = 0$.
Similarly with this, we can get the needed memory size.
We give it in detail as following:

Let $\lambda_{0,i}=1 , e_{1,i} = \frac{u_1u_2u_3}{u_i}z_i, e_{3,z} = z_i$ for $1\leq i\leq 3, 1+\sum\limits_{x=1}^{i-1}\prod\limits_{y=0}^{2}\lambda_{y,x}\leq z\leq \sum\limits_{x=1}^i\prod\limits_{y=0}^{2}\lambda_{y,x}$,
$\mu_{i,z} = \lambda_{i,j}$ for $1\leq i\leq 3, 1\leq j\leq 3,  1+\sum\limits_{x=1}^{j-1}\prod\limits_{y=0}^{i-1}\lambda_{y,x}\leq z\leq \sum\limits_{x=1}^{j}\prod\limits_{y=0}^{i-1}\lambda_{y,x}$,
$e_{0,j} = u_1u_2u_3z_j$ for $1\leq j\leq 3$,
$e_{2,z} = u_3z_1$ for $1\leq z\leq \lambda_{1,1}$,
$e_{2,z} = u_1z_2$ for $1+\lambda_{1,1}\leq z\leq \lambda_{1,2}$,
$e_{2,z} = u_2z_3$ for $1+\lambda_{1,2}\leq z\leq \lambda_{1,3}$,
$\delta_{1,0} = \sum\limits_{i=1}^3u_1u_2u_3z_i, \Delta_{1,0} = u_1u_2u_3z_i$,
$\delta_{i,j} = \delta_{i,j-1}+\max\{0,\mu_{i,j-1}e_{i,j}-\delta_{i,j-1}+\Delta_{i,j-1}\}$ for $1\leq j\leq \sum\limits_{x=1}^3\prod\limits_{y=0}^{i-1}\lambda_{y,x}, 1\leq i\leq 3$,
$\Delta_{i,j} = \Delta_{i,j-1}+\mu_{i,j}e_{i,j} - e_{i-1,j}$ for $1\leq i\leq 3, 1\leq j\leq \sum\limits_{x=1}^2\prod\limits_{y=0}^{i-1}\lambda_{y,x}$, $\Delta_{i,j} = \Delta_{i,j}+\mu_{i,j}e_{i,j}$ for $1\leq i\leq 3, 1+\sum\limits_{x=1}^2\prod\limits_{y=0}^{i-1}\lambda_{y,x}\leq j\leq \sum\limits_{x=1}^3\prod\limits_{y=0}^{i-1}\lambda_{y,x}$, $\delta_{i,0} = \delta_{i-1,j}, \Delta_{i,0} = \Delta_{i-1,j}$ for $2\leq i\leq 4$ where $j = \sum\limits_{x=1}^3\prod\limits_{y=0}^{i-1}\lambda_{y,x}$,
$\delta_4 = \delta_{4,0} + \max\{0,\sum\limits_{i=1}^3(\lambda_{2,i}\lambda_{3,i}z_i+\lambda_{3,i}z_i+z_i)-\delta_{4,0}+\Delta_{4,0}\}$,
$\delta_3 = \max\{\delta_{4,0}, \delta_4,\sum\limits_{i=1}^3z_i+ \sum\limits_{i=1}^3\lambda_{2,i}z_i + \sum\limits_{i=1}^3\lambda_{1,i}\lambda_{2,i}z_i-\delta_{3,0}+\Delta_{3,0}\}$, $\delta_2 = \max\{\delta_{3,0},\delta_3, (u_3z_1+u_1z_2+u_2z_3)+\lambda_{1,1}u_3z_1+\lambda_{1,2}u_1z_2+\lambda_{1,3}u_2z_3 + \sum\limits_{i=1}^3z_i -\delta_{2,0}+\Delta_{2,0}\}$,
$\delta_1 = \max\{\delta_{2,0},\delta_2,\sum\limits_{i=1}^3(u_1u_2u_2z_i+\frac{u_1u_2u_3}{u_i}z_i+z_i) + u_3z_1+u_1z_2+u_2z_3\}$.
Then Algorithm $\mathcal{A}$ can be computed within memory $\delta_1$.
Let $\delta_1 = \sum\limits_{i=1}^3\beta_iz_i$, $\delta'_1$ be the value with setting $z_1 = z_2 = z_3 = 1$ in $\delta_1$. Then, $\delta'_1\geq \sum\limits_{i=1}^3\beta_i$.

\textbf{IO complexity}:
By Corollary \ref{cor4}, Observation \ref{ok1}, and Lemma \ref{lemmak1}, the IO complexity of Algorithm \ref{fin1} is
$14n^{\log_323}M^{-0.5} - 6n^{\frac{\log_{27}(23^3-4)}{3}}M^{-0.5} + 28.07n^{\frac{\log_{27}{23^3-0.77}}{3}}M^{-0.42} - 16.69n^2 + 2.33n^2\log_{3}{\sqrt{2}\frac{n}{
\sqrt{M}}}+2M$.
And by Theorem \ref{IOA3}, Observation \ref{ok1}, and Lemma \ref{lemmak1},
the IO complexity of Algorithm \ref{fin2} is $10.08n^{\log_{3}23}M^{-0.42} + 23.02n^{\log_{27}(23^3-4)}M^{-0.42} - 16.69n^2 + 2.33n^2\log_{3}{\sqrt{2}\frac{n}{
\sqrt{M}}}+2M$.

\textbf{Arithmetic complexity}: By Theorem \ref{a2}, Observation \ref{ok1}, and Lemma \ref{lemmak1},
the arithmetic complexity of Algorithm \ref{fin1} is $2n^{\log_323} + 4.56n^{\log_{27}(23^3-4)} - 5.56n^2 + 0.77n^2\log_3n$.
By Theorem \ref{a4}, Observation \ref{ok1}, and Lemma \ref{lemmak1}, the arithmetic complexity of Algorithm  \ref{fin2} is
$(2+4.56M^{\frac{\ln{(23^3-4)}-\ln{23^3}}{6\ln3}} - 4.58M^{\frac{2\ln{3}-\ln{23}}{2\ln{3}}})n^{\log_323} + 5.56 n^2 +$ $ 0.77n^2\log_3n+$ $ (9.35M^{1-\frac{\ln(23^3-4)}{6\ln{3}}} - $ $ M^{\frac{\ln{(23^3-4)}-3\ln{23}}{6\ln{3}}})n^{\log_{27}(23^3-4)}$.

\section{Acknowledgments}
This work was supported by 
National key research and development program of China(Grant No.2019YFA0706401) and 
the National Natural Science Foundation of China (Grant No.62172116, No.62172014, No.62172015, No.61872166, No.62002002(Youth Foundation)). 
\bibliographystyle{plain}
\bibliography{sample-base}

\break
\section{Appendix}
\textbf{Appendix I:}

$\eta_1 = \eta_2 = I$.
\begin{figure}[h]
  \centering
  \includegraphics[width=300pt]{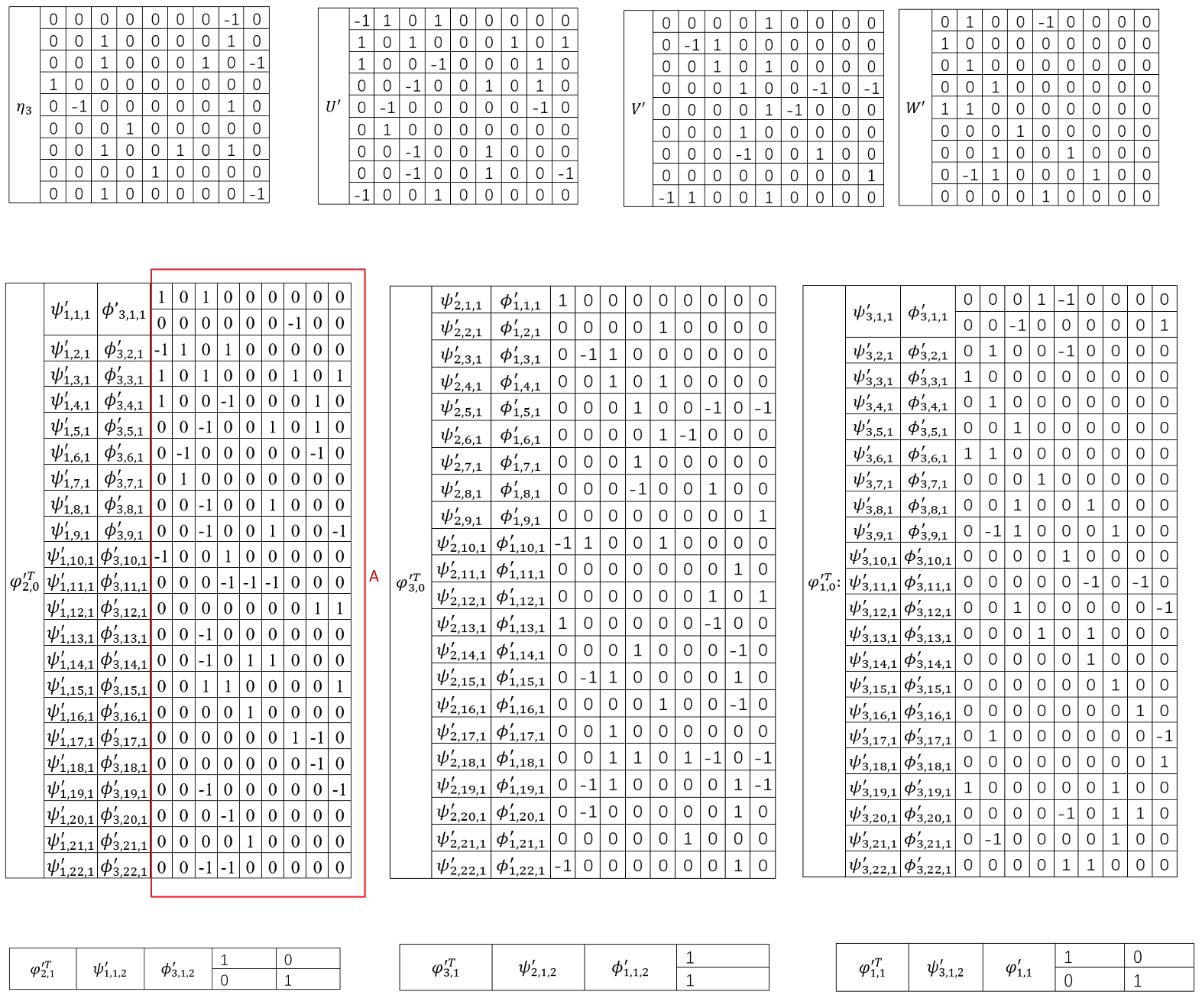}
  \caption{Algebra Decomposition of $<3,3,3;23>$. Note :$\varphi'_{2,0}(z_1,z_2,...,z_{22}) := A^{T}\bigoplus\limits_{i=1}^{22}z_i$. The definitions of other linear maps are similar. }
\end{figure}

All the data of Appendix I and Append II can be found at https://github.com/wp-hhh/Algebra-decomposition-Algorithm.

\end{document}